\newif\iflong
\newif\ifshort
\newcommand{\Agent}{\text{\normalfont Agent}}
\newcommand{\decprob}[3]{
  \smallskip
  
  {
    \centering
    \begin{minipage}{0.9\linewidth}%
      \textsc{#1}\\[0.2ex]
      \textbf{Input:} #2\\[0.2ex]
      \textbf{Question:} #3
    \end{minipage}%
  \par} 
\medskip

}
\tikzstyle{normalline} = [line width=.8pt]
\newcommand{\gettikzxy}[3]{%
  \tikz@scan@one@point\pgfutil@firstofone#1\relax
  \edef#2{\the\pgf@x}%
  \edef#3{\the\pgf@y}%
}
\newtheorem{corollary}{Corollary}
\newtheorem{lemma}{Lemma}
\newtheorem{observation}{Observation}
\newtheorem{proposition}{Proposition}
\newtheorem{construction}{Construction}
\newtheorem{theorem}{Theorem}
\theoremstyle{definition}
\newtheorem{example}{Example}
\newtheorem{drule}{Reduction rule}
\crefname{table}{Table}{Tables}
\crefname{figure}{Figure}{Figures}
\crefname{theorem}{Theorem}{Theorems}
\crefname{definition}{Definition}{Definitions}
\crefname{corollary}{Corollary}{Corollaries}
\crefname{observation}{Observation}{Observations}
\crefname{lemma}{Lemma}{Lemmas}
\crefname{example}{Example}{Examples}
\crefname{reduction}{Reduction}{Reductions}
\crefname{construction}{Construction}{Constructions}
\crefname{subsection}{Section}{Sections}
\crefname{section}{Section}{Sections}
\crefname{proposition}{Proposition}{Propositions}
\crefname{algorithm}{Algorithm}{Algorithms}
\crefname{drule}{Reduction rule}{Reduction rules}
\crefname{claim}{Claim}{Claims}
\newcommand{\calF}{\mathcal{F}}
\newcommand{\calC}{\mathcal{C}}
\newcommand{\calB}{\mathcal{B}}
\newcommand{\Pot}{\mathcal{P}}
\newcommand{\SR}{\textsc{Stable Roommates}\xspace}
\newcommand{\SRB}{\textsc{Min-Block-Pair Stable Roommates}\xspace}
\newcommand{\SRA}{\textsc{Min-Block-Agents Stable Roommates}\xspace}
\newcommand{\ESR}{\textsc{Egalitarian Stable Roommates}\xspace}
\newcommand{\ESM}{\textsc{Egalitarian Stable Marriage}\xspace}
\newcommand{\SM}{\textsc{Stable Marriage}\xspace}
\newcommand{\IS}{\textsc{Independent Set}\xspace}
\newcommand{\MIS}{\textsc{Multi-Colored Independent Set}\xspace}
\newcommand{\ETH}{Exponential Time Hypothesis}
\newcommand{\citallmatchingssamesize}{\cite[Chapter~4.5.2]{GusfieldIrving1989}}
\newcommand{\cheapE}{\ensuremath{E^{\textsf{zero}}}}
\newcommand{\expE}{\ensuremath{E^{\textsf{exp}}}}
\newcommand{\expF}{\ensuremath{\calF^{\textsf{exp}}}}
\newcommand{\cheapedge}{zero edge}
\newcommand{\expedge}{costly edge}
\newcommand{\expensive}{costly}
\newcommand{\fineedge}{harmless edge}
\newcommand{\fine}{harmless\xspace}
\newcommand{\critical}{critical\xspace}
\newcommand{\criticaledge}{critical edge}
\newcommand{\culprit}{culprit}
\newcommand{\pref}{\ensuremath{\succ}}
\newcommand{\bp}{\ensuremath{\beta}} 
\newcommand{\ba}{\ensuremath{\eta}} 
\newcommand{\egalcost}{\ensuremath{\gamma}}
\newcommand{\egalcostn}{egalitarian cost}
\definecolor{dargray}{rgb}{0.18, 0.18, 0.18}
\definecolor{darkgreen}{rgb}{0.01,0.6,0.1}
\definecolor{lightrose}{rgb}{0.996,0.75,0.793}
\definecolor{rose}{cmyk}{0.75, 0.75, 0,0}
\definecolor{winered}{rgb}{0.6,0.1,0.1}
\definecolor{darkyellow}{rgb}{.99, .87, 0.04}
\definecolor{lightyellow}{rgb}{1, 1, 0.6}
\definecolor{transparent}{rgb}{1,1,1}
\definecolor{lightlightgray}{rgb}{0.88, 0.88, 0.88}
\definecolor{lightgray}{rgb}{0.8, 0.8, 0.8}
\definecolor{lightblue}{rgb}{0.527,0.805,0.977}
\definecolor{lightgreen}{rgb}{.74,1,0}
\newcommand{\rank}{\mathsf{rank}}
\newcommand{\marked}{\mathsf{marked}}
\newcommand{\unmarked}{\mathsf{unmarked}}
\newcommand{\first}{\mathsf{first}}
\newcommand{\last}{\mathsf{last}}
\newcommand{\esrtiesfptrunningtime}{\ensuremath{2^{O(\egalcost^3)}\cdot n^3 \cdot (\log n)^3}}
\tikzstyle{blueline} = [thick, blue, dotted]
\tikzstyle{redline} = [thick, red, dashed]
\tikzstyle{blackline} = [thick, black]
\newcommand{\appsymb}{$\star$}
\begin{document}
\sloppy

\title{
How hard is it to satisfy (almost) all roommates?%
\thanks{This work is supported by the People Programme (Marie Curie Actions) of the European Union's Seventh Framework Programme (FP7/2007-2013) under REA grant agreement number {631163.11} and
Israel Science Foundation (grant no. 551145/14).}}

\author{Jiehua Chen \and Danny Hermelin \and 
  Manuel Sorge \and Harel Yedidsion\\
{\small Ben-Gurion University of the Negev, Beer Sheva, Israel}\\
{\small \texttt{jiehua.chen2@gmail.com, hermelin@bgu.ac.il,}}\\
{\small \texttt{sorge@post.bgu.ac.il, yedidsio@post.bgu.ac.il}}}

\date{}

\maketitle
\thispagestyle{empty}

\begin{abstract}  
\looseness=-1  The classic \SR problem (which is the non-bipartite generalization of the well-known \SM problem) asks whether there is a \emph{stable matching} for a given set of agents, \emph{i.e.} a partitioning of the agents into disjoint pairs such that \emph{no} two agents induce a \emph{blocking pair}.
  Herein, each agent has a \emph{preference list} denoting who it prefers to have as a partner,
  and two agents are blocking if they prefer to be with each other rather than with their assigned partners.
  
Since stable matchings may not be unique, we study an NP-hard optimization variant of \SR, called \ESR, which seeks to find a stable matching with a minimum \egalcostn~$\egalcost$, \emph{i.e.}\ the sum of the dissatisfaction of the agents is minimum. 
The \emph{dissatisfaction} of an agent is the number of agents that this agent prefers over its partner if it is matched; otherwise it is the length of its preference list. We also study almost stable matchings, called \SRB, which seeks to find a matching with a minimum number~$\bp$ of blocking pairs. 
Our main result is that \ESR parameterized by $\egalcost$ is fixed-parameter tractable,
while \SRB parameterized by $\bp$ is W[1]-hard, even if the length of each preference list is at most five.
\end{abstract}

\newpage
\setcounter{page}{1}
\pagestyle{plain}

\section{Introduction}
\label{sec:intro}

This paper presents algorithms and hardness results for two variants of the \SR problem, a well-studied generalization of the classic \SM problem. Before going into describing our results, we give a brief background that will help motivate our work.

\paragraph*{\SM and \SR}

\looseness=-1 An instance of the \SM problem consists of two disjoint sets of $n$ men and $n$ women (collectively called \emph{agents}), who are each equipped with his or her own personal \emph{strict} preference list that ranks \emph{every} member of the opposite sex. The goal is to find a bijection, or \emph{matching}, between the men and the women that does not contain any blocking pairs. A \emph{blocking pair} is a pair of man and woman who are not matched together but both prefer each other over their own matched partner. A matching with no blocking pairs is called a \emph{stable matching}, and \emph{perfect} if it is a bijection between \emph{all} men and women. 

\SM is a classic and fundamental problem in computer science and applied mathematics, and as such, entire books were devoted to it~\cite{GusfieldIrving1989,Knuth1976,RothSotomayor1992,Manlove2013}. The problem emerged from the economic field of matching theory, and it can be thought of as a generalization of the \textsc{Maximum Matching} problem when restricted to complete bipartite graphs. The most important result in this context is the celebrated Gale-Shapley algorithm~\cite{GaleShapley1962}: This algorithm computes in polynomial time a perfect stable matching in \emph{any} given instance, showing that regardless of their preference lists, there always exists a perfect stable matching between any equal number of men and women.

The \SM problem has several interesting variants. First, the preference lists of the agents may be \emph{incomplete}, meaning that not every agent is an acceptable partner to every agent of the opposite sex. In graph theoretic terms, this corresponds to the bipartite incomplete case. The preference lists could also have \emph{ties}, meaning that two or more agents may be considered equally good as partners. Finally, the agents may not be partitioned into two disjoint sets, but rather each agent may be allowed to be matched to any other agent. This corresponds 
to the non-bipartite case in graph theoretic terms,
and is referred to in the literature as the \SR problem.

While \SM and \SR seem very similar, there is quite a big difference between them in terms of their structure and complexity. For one, any instance of \SM always contains a stable matching (albeit perhaps not perfect), even if the preference lists are incomplete and with ties.
Moreover, computing some stable matching in any \SM instance with $2n$ agents can be done in $O(n^2)$ time~\cite{GaleShapley1962}. However, an instance of \SR may have no stable matchings at all, even in the case of complete preference lists without ties (see the third example in Figure~\ref{fig:example}). Furthermore, when ties are present, deciding whether an instance of \SR contains a stable matching is NP-complete~\cite{Ronn1990}, even in the case of complete preference lists.

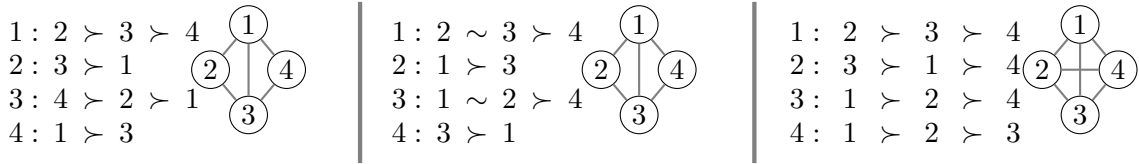
\begin{figure}
  \begin{tikzpicture}
    \def \xw {0.5}
    \def \yw {.6}
    \def \inner {.4pt}
    \def \mins {3ex}
    \foreach \i/\x/\y in {1/0/\yw, 2/-\xw/0, 3/0/-\yw, 4/\xw/0} {
      \node[draw, circle, minimum size=\mins, inner sep=\inner] at (\x,\y)  (\i){$\i$};
    }

     \foreach \i/\j/\w/\v in {1/2/1/2,1/3/2/3,1/4/3/1,2/3/1/2,3/4/1/2}
    {
      \draw[thick] (\i) edge[gray] 
      (\j);
    }
     \matrix[below = 0pt of 3] (profile) [matrix of math nodes, row sep=-2pt, column sep=-2pt] at (-1.9,.9) {
    1 : & 2  & \succ & 3 & \succ & 4\\
    2 : & 3 & \succ & 1 \\
    3 : & 4 & \succ & 2 & \succ & 1\\
    4 : & 1 & \succ & 3 \\
    };
  \begin{scope}[xshift=31ex]
    \foreach \i/\x/\y in {1/0/\yw, 2/-\xw/0, 3/0/-\yw, 4/\xw/0} {
      \node[draw, circle, minimum size=\mins, inner sep=\inner] at (\x,\y)  (\i){$\i$};
    }
     \foreach \i/\j/\w/\v in {1/2/1/1,1/3/1/1,1/4/2/2,2/3/2/1,3/4/2/1}
    {
      \draw[thick] (\i) edge[gray] 
      (\j);
    }
    \matrix[below = 0pt of 3] (profile) [matrix of math nodes, row sep=-2pt, column sep=-2pt] at (-2,.9){
    1 : &  2  & \sim & 3 & \succ & 4\\
    2 : & 1 & \succ & 3 \\
    3 : & 1 & \sim & 2 & \succ & 4\\
    4 : & 3 & \succ & 1 \\
  };
  \draw[gray, ultra thick] ($(profile.north west)+(-.15,0)$) edge ($(profile.south west)+(-.15,0)$);
  \end{scope}
  \begin{scope}[xshift=66ex]
    \foreach \i/\x/\y in {1/0/\yw, 2/-\xw/0, 3/0/-\yw, 4/\xw/0} {
      \node[draw, circle, minimum size=\mins, inner sep=\inner] at (\x,\y)  (\i){$\i$};
    }
    \foreach \i/\j/\w/\v in {1/2/1/2,1/3/2/1,1/4/3/1,2/3/1/2,2/4/3/2,3/4/3/3}
    {
      \draw[thick] (\i) edge[gray] 
      (\j);
    }
    \matrix[below = 0pt of 3] (profile) [matrix of math nodes, row sep=-2pt, column sep=1pt] at (-2.3,.9) {
    1 : &  2  & \succ & 3 & \succ & 4\\
    2 : & 3 & \succ & 1 & \succ & 4 \\
    3 : & 1 & \succ & 2 & \succ & 4\\
    4 : & 1 & \succ & 2 & \succ & 3 \\
    };
  \draw[gray, ultra thick] ($(profile.north west)+(-.2,0)$) edge ($(profile.south west)+(-.2,0)$);
  \end{scope}
\end{tikzpicture}
\caption{An example of three \SR instances, where $x \succ y$ means that $x$ is strictly preferred to $y$,
  and $x\sim y$ means that they are equally good and tied as a partner. The instance on the left is incomplete without ties and has exactly two stable matchings $\{\{1,2\}, \{3,4\}\}$ and $\{\{1,4\},\{2,3\}\}$, both of which are perfect. The instance in the middle is incomplete with ties and has two stable matchings $\{\{1,3\}\}$ and $\{\{1,2\},\{3,4\}\}$, the latter being perfect while the former not. The right instance is complete without ties and has no stable matchings at all.}
\label{fig:example}
\end{figure}

All variants of \SM and \SR mentioned here have several applications in a wide range of application domains. These include partnership issues in the real-world~\cite{GaleShapley1962}, resource allocation~\cite{AbrCheKummir2006,ChenSoenmez2002,HyZe1979},
centralized automated mechanisms that assign children to schools~\cite{AbPaRo2005a,AbPaRo2005b}, assigning school graduates to universities~\cite{BaBa2004,BiKi2015}, assigning medical students to hospitals~\cite{NResidentMatchingP,SResidentMatchingP}, and several others~\cite{AbBiMa2005,GaLeMamoReVi2007,Irving2016a,Irving2016b,KuLiMa1999,LeMaViGaReMo2006,Manlove2008,Manlove2013,MaOM2014,RoSoUn2005,RoSoUn2007}. 

\paragraph*{Optimization variants}\label{subsec:variants}

As noted above, some \SR instances do not admit any stable matching at all, and in fact, empirical study suggests that a constant fraction of all sufficiently large instances will have no solution~\cite{PitIrv1994}. Moreover, even if a given \SR instance admits a solution, this solution may not be unique, and there might be other stable matchings with which the agents are more satisfied overall. Given these two facts, it makes sense to consider two types of optimization variants for \SR: In one type, one would want to compute a stable matching that optimizes a certain social criteria in order to maximize the overall satisfaction of the agents. In the other, one would want to compute matchings which are as close as possible to being stable, where closeness can be measured by various metrics. In this paper, we focus on one prominent example of each of these two types---minimizing the egalitarian cost of a stable matching, and minimizing the number of blocking pairs in a matching which is close to being stable.

\subparagraph*{Egalitarian optimal stable matchings.} Over the years, several social optimality criteria have been considered, yet arguably one of the most popular of these is the egalitarian cost metric~\cite{McVWil1971,Knuth1976,IrLeGu1987,MaIrIwMiMo2002,MarxSchlotter2010}. 
The \emph{egalitarian cost} of a given matching is the sum of the \emph{rank}s of the partners of all agents,
where the \emph{rank} of the partner~$y$ of an agent~$x$ is the number of agents that are strictly preferred over $y$ by~$x$. 
The corresponding \ESM and \ESR problems ask whether there is a stable matching with egalitarian cost at most $\egalcost$, for some given bound~$\egalcost \in \mathds{N}$ (\cref{sec:defi} contains the formal~definition).

When the input preferences do not have ties (but could be incomplete), \ESM is solvable in $O(n^4)$~time~\cite{IrLeGu1987}. For preferences with ties, \ESM becomes NP-hard~\cite{MaIrIwMiMo2002}. Thus, already in the bipartite case, it becomes apparent that allowing ties in preference lists makes the task of computing an optimal egalitarian matching much more challenging. \citet{MarxSchlotter2010} showed that \ESM is fixed-parameter tractable when parameterized by the parameter ``sum of the lengths of all ties''.

For \ESR, \citet{Feder1992b} showed that the problem is NP-hard even if the preferences are complete and have no ties, and gave a 2-approximation algorithm for this case. \citet{HaIrIwMaMiMoSc2003} showed inapproximability results for \ESR, and \citet{TeoSet2000} proposed a specific LP formulation for \ESR and other variants.  \citet{CsIrMa2016} studied \ESR for preferences with bounded length~$\ell$ and without ties. They showed that the problem is polynomial-time solvable if $\ell=2$, and is NP-hard for $\ell \geq 3$.

\subparagraph*{Matchings with minimum number of blocking pairs.} For the case where no stable matchings exist, the agents may still be satisfied with a matching that is close to being stable. One very natural way to measure how close a matching is to being stable is to count the number of blocking pairs~\cite{NieRot2004,EriHag2008}. Accordingly, the \SRB problem asks to find a matching with a minimum number of blocking pairs.

\citet{AbBiMa2005} showed that \SRB is NP-hard, and cannot be approximated within a factor of $n^{0.5-\varepsilon}$ unless P${}={}$NP, even if the given preferences are complete. They also showed that the problem can be solved in $n^{O(\bp)}$~time, where $n$ and $\bp$ denote the number of agents and the number of blocking pairs, respectively.
This implies that the problem is in the XP class (for parameter~$\bp$) of parameterized complexity. 
\citet{BiMaMcD2012} showed that the problem is NP-hard and APX-hard even if each agent has a preference list of length at most~$3$, and presented a $(2\ell-3)$-approximation algorithm for bounded list length~$\ell$. \citet{BiMaMi2010} and \citet{HaIwMi2009} showed that the related variant of \SM, where the goal is to find a matching with minimum blocking pairs among all maximum-cardinality matchings, cannot be approximated within $n^{1-\varepsilon}$ unless P${}={}$NP.

\paragraph*{Our contributions}\label{subsec:results}

We analyze both \ESR and \SRB from the perspective of parameterized complexity, under the natural parameterization of each problem (\emph{i.e.}\ the egalitarian cost and number of blocking pairs, respectively). We show that while the former is 
fixed-parameter tractable, the latter is W[1]-hard even when each preference list has length at most five and has no ties. This shows a sharp contrast between the two problems: Computing an optimal egalitarian stable matching is a much easier task than computing a matching with minimum blocking~pairs.

When no ties are present, an instance of the \ESR problem has a lot of structure, and so we can apply a simple branching strategy for finding a stable matching with egalitarian cost of at most $\egalcost$ 
in $2^{O(\egalcost)}n^2$ time. 
Moreover, we derive a kernelization algorithm, obtaining a polynomial problem kernel (\cref{thm:linear-preference-kernel,thm:esr-no-ties-fpt}).
Note that the original reduction of \citet{Feder1992b} already shows that
\ESR cannot be solved in $2^{o(\egalcost)}n^{O(1)}$ time unless the \ETH~\cite{CyFoKoLoMaPiPiSa2015}~fails.

When ties are present, the problem becomes much more challenging because several
agents may be tied as a first ranked partner and it is not clear how to match them to obtain an optimal egalitarian stable matching. 
Moreover, we have to handle unmatched agents. When preferences are complete or without ties, all stable matchings match the same (sub)set of agents and this
subset can be found in polynomial time~\citallmatchingssamesize. Thus, unmatched agents do not cause any real difficulties. However, in the
case of ties and with incomplete preferences, stable matchings may involve different sets of unmatched agents. Aiming at a socially optimal
egalitarian stable matching, we consider the cost of an unmatched agent to be the length of its preference list~\cite{MarxSchlotter2010}. 
(For the sake of completeness, we also consider two other variants where the cost of an unmatched agent is either zero or a constant value, 
and show that both these variants are unlikely to be fixed-parameter tractable.)
Our first main result is given in the following theorem:
\newcommand{\thmESR}{%
\ESR can be solved in \esrtiesfptrunningtime~time,
even for incomplete preferences with ties, 
where $n$ denotes the number of agents and $\egalcost$ denotes the egalitarian~cost.}
\begin{theorem}
\label{thm:ESR}%
\thmESR
\end{theorem}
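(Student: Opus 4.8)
The plan is to solve \ESR by a bounded search tree whose leaves are each completed by a single polynomial-time matching computation. The starting point is a simple accounting observation: if a stable matching $M$ has \egalcostn{} at most $\egalcost$, then $\sum_{x \text{ matched}} \rank_x(M(x)) \le \egalcost$, so at most $\egalcost$ agents are matched above their top tie-class, and the total number of (agent, strictly-preferred-partner) incidences charged against the budget is at most $\egalcost$. Consequently, every edge of $M$ is either a \cheapedge{} --- a pair in which \emph{both} endpoints are matched at rank~$0$, i.e.\ to a most-preferred (possibly tied) partner --- or one of at most $\egalcost$ \expedge{}s that actually pay for the budget. I would first make this dichotomy precise and then show that the agents which are \emph{not} matched by a \cheapedge{}, together with the agents that some other agent strictly prefers to its own partner (the agents that stability must ``protect''), form a set $R$ of \emph{relevant} agents of size $O(\egalcost)$.

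Next I would reduce the part of the instance lying outside $R$. The crucial point is that every agent outside $R$ is matched at rank~$0$ and is desired by nobody above their partner, so its pair is forced to behave like a mutual top-choice edge; in particular, pairs of agents that are each other's \emph{unique} top choice and are desired by no one else are genuinely forced and can be contracted away by a reduction rule. This isolates the real difficulty to the interaction between $R$ and the surrounding \cheapedge{}s and ties. I would classify the remaining edges into \fineedge{}s (those that can never participate in a blocking pair once the relevant choices are fixed) and \criticaledge{}s, and argue that only $O(\egalcost)$ agents can be \culprit{}s of a potential block, each through a budget-feasible prefix of length $O(\egalcost)$ of its preference list.

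With the relevant structure bounded, the plan is to \emph{guess} the behaviour of $M$ on it: for each relevant agent we branch over its partner's tie-class among its $O(\egalcost)$ budget-feasible candidates, and over which of the protected agents are matched by which \expedge{}. Carrying this out over the $O(\egalcost)$ relevant agents and their $O(\egalcost)$-length relevant prefixes yields $\egalcost^{O(\egalcost^2)} = 2^{O(\egalcost^3)}$ branches. In each branch the cost spent so far is fixed and at most $\egalcost$; what remains is to extend the partial matching to the agents outside $R$ using only \cheapedge{}s, so as to (i)~create no blocking pair with the fixed part and (ii)~minimize the total \egalcostn{} contributed by the unmatched agents, whose cost is the length of their preference list. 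This completion is exactly a stability-constrained maximum-weight matching in the \cheapedge{} graph: one orients the constraints imposed by the guessed \expedge{}s, deletes the \cheapedge{}s they forbid, and computes a maximum-weight matching that saves as much unmatched cost as possible. Such a matching, with weights of magnitude $O(n)$, is computable in $n^3(\log n)^3$ time, which I would invoke once per branch; returning the best feasible completion over all branches gives the claimed running time.

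The main obstacle is precisely the handling of ties at the top together with unmatched agents, which is what distinguishes this from the no-ties case solvable in \esrnotiesfptrunningtime{} time. When an agent has several tied first choices, matching it to one of them is \cheap{} but not locally forced, and a ``wrong'' tie-break can force an avoidable \expedge{} far away or leave an expensive agent unmatched; hence the tie-breaks cannot be resolved greedily and must be coordinated globally. The delicate steps are therefore (a)~proving that only $O(\egalcost)$ tie-breaks can ever matter for the budget --- so that the remaining ones are safely delegated to the global matching --- and (b)~showing that, once the guessed \expedge{}s are fixed, the stability constraints on the \cheapedge{} graph are exactly those captured by forbidding a bounded set of edges, so that feasibility together with optimality of the unmatched cost reduces cleanly to a single maximum-weight matching rather than to another NP-hard stable-matching-with-ties subproblem.
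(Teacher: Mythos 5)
Your proposal reproduces the paper's skeleton---the zero-cost/costly edge dichotomy, the bound of at most $\egalcost$ costly edges in a solution, and the plan to finish each branch with one polynomial-time matching computation---but the two steps you defer as ``delicate'' are exactly where the proof lives, and the methods you sketch for them fail. The first gap is that your set $R$ of relevant agents (costly-edge endpoints, unmatched agents, ``protected'' agents) is \emph{defined in terms of the unknown solution}~$M$; bounding $|R|$ by $O(\egalcost)$ does not make $R$ computable. A priori any of the $n$ agents can be a costly-edge endpoint or a protected agent of the optimum, and with ties a rank-$0$ tie class can contain $\Theta(n)$ agents, so neither $R$ nor the actual partner inside a guessed tie-class can be enumerated by an $f(\egalcost)$-way search tree. (This is precisely what breaks the no-ties branching of \cref{thm:esr-no-ties-fpt}: there every non-forced match costs at least one, which bounds the branching depth; with ties it does not.) The paper's substitute for ``guess the behaviour of $M$ on $R$'' is derandomized random separation: a cover-free family over the set of costly edges, one member of which provably separates the at most $\egalcost$ costly edges of $M$ from the at most $\egalcost^3$ finely blocked edges (\cref{lem:blocked-bound}), and a second cover-free family over the agent set separating agents on critical solution edges from culprits (\cref{lem:culprit-bound}). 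This mechanism---which is why the paper's outer loops have cardinality $\egalcost^{O(\egalcost)}\cdot\log n$ rather than a pure function of $\egalcost$---has no counterpart in your plan; your claimed $2^{O(\egalcost^3)}$ branch count silently assumes $R$ is already known.

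The second gap is the completion step. Once the costly edges are fixed, stability is \emph{not} captured by ``forbid a bounded set of cheap edges and take a maximum-weight matching'': an agent left unmatched blocks with \emph{every} acceptable agent that is also unmatched, so the set of unmatched agents must be independent in the full acceptability graph, a vertex-set constraint that maximum-weight matching does not enforce. Concretely, take two triangles $\{a_1,a_2,a_3\}$ and $\{b_1,b_2,b_3\}$ of mutually tied top choices, let $a_i$ and $b_j$ find each other acceptable (at positive rank) for all $(i,j)\neq(3,3)$, and pad the preference lists of $a_3$ and $b_3$ with pendant mutually-tied pairs so that $a_3,b_3$ have the \emph{longest} lists. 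The only stable matchings leave exactly $\{a_3,b_3\}$ unmatched, but the maximum-weight matching in the cheap-edge graph covers $a_3$ and $b_3$ and leaves a lighter pair such as $\{a_2,b_2\}$ unmatched, which is acceptable and hence blocking; checking feasibility and rejecting the branch then discards a yes-instance. In general this completion subproblem is a minimum-cost \emph{maximal} matching problem, which is NP-hard---and this is exactly why the paper first proves \cref{lem:perfect}, adding at most $\egalcost$ dummy agents so that solutions may be assumed \emph{perfect}; after that, \cref{alg:egalmatch} only ever needs a minimum-cost \emph{perfect} matching on an edge set from which all mutually blocking edges have been deleted, so stability of the returned matching is automatic. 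Your proposal omits the perfectness reduction, and without it (or a replacement) your step~(ii) is not a matching problem at all.
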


The general idea behind our algorithm is to apply random separation~\cite{CCC06} to ``separate'' irrelevant pairs from the pairs that belong to the solution matching, and from some other pairs that would not block our solution. This is done in two phases, each involving some technicalities, but in total the whole separation can be computed in $\egalcost^{O(\egalcost)}\cdot n^{O(1)}$ time. After the separation step, the problem reduces to \textsc{Minimum-Weight Perfect Matching}, and we can apply known techniques. 
Recall that for the case where the preferences have no ties, a simple depth-bounded search tree algorithm suffices~(\cref{thm:esr-no-ties-fpt}).

\medskip

In \cref{sec:SRB}, 
we show that
\SRB is W[1]-hard with respect to the parameter~$\bp$ (the number of the blocking pairs)  
even if each input preference list has length at most five and does not have ties. 
This implies that assuming bounded length of the preferences does not help in designing an $f(\bp)\cdot n^{O(1)}$-time algorithm for \SRB, unless FPT${}={}$W[1].
Our W[1]-hardness result also implies as a corollary a lower-bound on the running time of any algorithm. 
By adapting our reduction, we also answer in the negative an open question regarding the number of blocking agents proposed by \citet[Chapter~4.6.5]{Manlove2013}~(\cref{cor:srba-w[1]-h}).
\newcommand{\thmSRB}{%
Let $n$ denote the number of agents and $\bp$ denote the number of blocking pairs.
  Even when each input preference list has length at most five and has no ties, \SRB{} is W[1]-hard with respect to $\bp$ 
  and admits \emph{no} $f(\bp) \cdot n^{o(\bp)}$-time algorithms unless the \ETH\ is false. 
}
\begin{theorem}\label{thm:SRB}
\thmSRB
\end{theorem}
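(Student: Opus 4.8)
The plan is to prove W[1]-hardness of \SRB{} via a parameterized reduction from \MIS{} (Multi-Colored Independent Set), which is well known to be W[1]-hard with respect to the number~$k$ of colors and to admit no $f(k)\cdot n^{o(k)}$-time algorithm under the \ETH. Given an instance of \MIS{} with vertex set partitioned into $k$ color classes $V_1,\dots,V_k$ and edge set $E$, I would construct an \SRB{} instance in which the number of blocking pairs~$\bp$ is bounded by a function of~$k$ alone (roughly linear in $k$), while simultaneously forcing every preference list to have length at most five and no ties. The central design goal is a one-to-one correspondence between ``cheap'' matchings (those with few blocking pairs) and selections of one vertex per color class that form an independent set: selecting a vertex should be free, but selecting two vertices joined by an edge should be penalized by at least one unavoidable blocking pair.

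The key gadgets I would build are the following. First, for each color class~$V_i$ I would construct a \emph{selection gadget}, a small cyclic or path-like preference structure over $O(|V_i|)$ agents whose only locally stable matchings correspond to choosing exactly one vertex of~$V_i$; the trick (reminiscent of the classic no-stable-matching triangle shown in the rightmost panel of \cref{fig:example}) is to make each gadget internally satisfiable in many ways but to let the ``chosen vertex'' be signalled by which agents are left free to match outside the gadget. Second, for each edge~$e=\{u,v\}\in E$ with $u\in V_i$, $v\in V_j$, I would add an \emph{edge-checking gadget} whose agents prefer the free ``exit'' agents corresponding to $u$ and $v$; if both $u$ and $v$ are simultaneously selected, the edge gadget is forced into a configuration creating a blocking pair, whereas if at most one endpoint is selected the gadget can be matched with no blocking pair. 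The delicate point is keeping all preference lists of length at most five: since a vertex may be incident to many edges, I cannot list all neighbors directly, so I would route each vertex's ``selected'' signal through a constant-degree chain of intermediary agents, spreading the incidences across many length-bounded lists rather than one long one.

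The correctness argument then splits into the two standard directions. For the forward direction, given a multicolored independent set $\{x_1,\dots,x_k\}$, I would exhibit a matching that satisfies every selection gadget consistently with the chosen vertices and every edge gadget (using independence to guarantee no edge has both endpoints selected), leaving a controlled constant number~$\bp=\bp(k)$ of blocking pairs coming only from the internal ``slack'' of the $k$ selection gadgets. For the reverse direction, I would argue that any matching with at most $\bp(k)$ blocking pairs must, by a budget/counting argument, leave every selection gadget in a valid single-vertex configuration and every edge gadget unblocked, and that an unblocked edge gadget certifies that its two endpoints are not both selected; hence the induced vertex choice is a multicolored independent set. I expect the \textbf{main obstacle} to be the simultaneous satisfaction of all three constraints at once---(i) a blocking-pair budget depending only on $k$, (ii) preference lists of length at most five, and (iii) clean gadget behavior with no ties---because the length-five cap forces the incidence information to be distributed through auxiliary chains, and I must verify that these chains do not themselves introduce uncontrolled blocking pairs that would blow up the budget. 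Finally, since the construction is polynomial and $\bp=O(k)$, the \MIS{} lower bound transfers to give both W[1]-hardness and the claimed $f(\bp)\cdot n^{o(\bp)}$ \ETH{} lower bound.
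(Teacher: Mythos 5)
Your proposal follows essentially the same route as the paper's proof: a parameterized reduction from \MIS{} in which each color class gets an odd-cycle selection gadget that unavoidably induces a blocking pair and signals the chosen vertex by an agent matched \emph{outside} the gadget, each vertex gets a constant-degree chain of auxiliary agents so that preference lists stay at length five, and correctness follows from a budget-counting argument that also transfers the ETH lower bound. The only inessential differences are that the paper checks edges by directly cross-linking the chains of the two incident vertices (its verification gadgets) instead of introducing separate per-edge gadgets, and that it ends up needing \emph{two} selection gadgets per color (hence $\bp = 2k$) because each vertex chain has two endpoints requiring external anchors.
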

\iflong
Our results are summarized in \cref{tab:results}.
\newcommand{\mynew}[1]{{\color{winered}\textbf{#1}\citechsy}}

\newcommand{\citeronn}{$^\triangle$\xspace}
\newcommand{\citeirv}{$^\diamondsuit$\xspace}
\newcommand{\citegi}{$^\spadesuit$\xspace}
\newcommand{\citefeder}{$^\heartsuit$\xspace}
\newcommand{\citechsy}{$^\star$\xspace}
\newcommand{\other}[1]{{\color{gray}#1}}
\newcommand{\citeabm}{$^{\bullet}$\xspace}
\newcommand{\citeirvtwo}{$^{\clubsuit}$\xspace}
\newcommand{\citemiimm}{$^{\square}$\xspace}
\newcommand{\citecim}{$^{+}$\xspace}
\newcommand{\citetan}{$^{\Join}$\xspace}

\begin{table}
\centering
\resizebox{\textwidth}{!}{
  \begin{tabular}{l@{\;}ccc}
    \toprule
    Objective & Without ties &~~~& With ties\\
    \; Parameter \\
    \midrule 

    Any stable matching & $O(n^2)$~\cite{Irving1985}  && NP-c~\cite{Ronn1990} \\[1ex]
    \ESR  &  \multicolumn{1}{c}{NP-c~\cite{Feder1992b}} &&  \multicolumn{1}{c}{NP-c~\cite{Ronn1990}}\\
     \; Egalitarian cost~$\egalcost$ &\\             
    \quad Unmatched agents' costs $=$ pref.\ list length&  \multicolumn{1}{c}{\mynew{$O(2^{\egalcost}\cdot n^2)$, size-$O(\egalcost^2)$ kern.\ [T.~\ref{thm:linear-preference-kernel}+\ref{thm:esr-no-ties-fpt}]}}& &\multicolumn{1}{c}{\mynew{$\egalcost^{O(\egalcost)}\cdot n^{O(1)}$ [T. \ref{thm:ESR}]}}\\
     \quad Unmatched agents' costs $= 0$ &   \multicolumn{1}{c}{\mynew{$O(2^{\egalcost}\cdot n^2)$, size-$O(\egalcost^2)$ kern.\ [T.~\ref{thm:linear-preference-kernel}+\ref{thm:esr-no-ties-fpt}]}}& & \multicolumn{1}{c}{\mynew{NP-h~($\egalcost=0$) [T.~\ref{thm:esr-ties-incomplete-np-hard-const-maxdeg-zero-cost}]}}\\    
     \quad Unmatched agents' costs $=$ a constant &   \multicolumn{1}{c}{\mynew{$O(2^{\egalcost}\cdot n^2)$, size-$O(\egalcost^2)$ kern.\ [T.~\ref{thm:linear-preference-kernel}+\ref{thm:esr-no-ties-fpt}]}}&& \multicolumn{1}{c}{\mynew{W[1]-h [T.~\ref{thm:egal-cost-fixed-w[1]-h}], XP~[P.~\ref{prop:ESR-unmatched-constant-XP}]}}\\[1ex]    
    \SRB
              &  \multicolumn{1}{c}{NP-c~\cite{AbBiMa2005}, NP-c ($\ell=3$)~\cite{BiMaMcD2012}}&&  \multicolumn{1}{c}{NP-c ($\bp=0$)~\cite{Ronn1990}}\\
    \; \#Blocking pairs $\bp$&  \multicolumn{1}{c}{$n^{2\cdot \bp+2}$~\cite{AbBiMa2005}, \mynew{W[1]-h ($\ell=5$) [T.~\ref{thm:SRB}]}} && \multicolumn{1}{c}{NP-c ($\bp=0$)~\cite{Ronn1990}}\\[1ex]
    \textsc{Min-Block-Agents Stable Roommates}
    & \multicolumn{1}{c}{\mynew{NP-h [C.~\ref{cor:srba-w[1]-h}]}} && \multicolumn{1}{c}{NP-c ($\ba=0$)~\cite{Ronn1990}}\\
    \;  \#Blocking agents $\ba$ & \mynew{$O(2^{\ba^2}\cdot n^{\ba+2})$, W[1]-h ($\ell=5$) [C.~\ref{cor:srba-w[1]-h}]} && \multicolumn{1}{c}{NP-c ($\ba=0$)~\cite{Ronn1990}}\\
    \bottomrule
\end{tabular}
}
\caption{Classical and parameterized complexity results of \ESR, \SRB, and \textsc{Min-Block-Agents Stable Roommates}. Herein, $n$ denotes the number of agents. Results marked in {\color{winered}\textbf{bold}} and with~{\color{winered}\citechsy} are obtained in this paper. ``W[1] $(\ell=5)$'' means that the W[1]-hardness (for the respective parameter) holds even if each input preference list has length at most five.}
\label{tab:results}
\end{table}
\fi

\noindent Besides the relevant work mentioned above 
there is a growing body of research regarding the parameterized complexity of preference-based stable matching problems~\cite{MarxSchlotter2010,MarxSchlotter2011,MnichSchlotter2017,MeeksRastegari2017,GuRoSaZe2017,CheNieSko2018}.

\ifshort{}\noindent Due to space constraints we deferred the proofs for results marked by \appsymb\ to an appendix.\fi


\section{Definitions and notations}
\label{sec:defi}


\iflong
We introduce necessary concepts and notation for the paper.
\fi
Let $V=\{1,2,\ldots, n\}$ be a set of even number~$n$~agents. Each agent~$i\in V$ has a subset of agents~$V_i\subseteq V$ which it finds \emph{acceptable} as a partner and has a \emph{preference list~$\succeq_i$} on~$V_i$ (\emph{i.e.}\ a transitive and complete binary relation on $V_i$). Here, $x \succeq_i y$ means that $i$ weakly prefers $x$ over $y$ (\emph{i.e.}\ $x$ is better or as good as $y$). We use $\succ_i$ to denote the asymmetric part (\emph{i.e.}\ $x\succeq_i y$ and $\neg (y\succeq_i x)$)
and $\sim_i$ to denote the symmetric part of $\succeq_i$ (\emph{i.e.}\ $x\succeq_i y$ and $y \succeq_i x$). 
For two agents~$x$ and $y$,
we call $x$ \emph{most acceptable} to~$y$ if $x$ is a maximal element in the preference list of $y$. Note that an agent can have more than one most acceptable agent. \ifshort We extend $\succeq$ to $X \succeq Y$ for pairs of disjoint subsets~$X, Y \subseteq V$ in the natural way.\else For two disjoint subsets of agents $X \subseteq V$ and $Y \subseteq V$, $X \cap Y \neq \emptyset$, we write $X\succeq Y$ if for each pair of agents~$x \in X$ and $y \in Y$ we have $x\succeq y$.\fi

\looseness=-1 A preference profile~$\Pot$ for $V$ is a collection~$(\succeq_i)_{i\in V}$ of preference lists for each agent~$i\in V$. A profile~$\Pot$ may have the following properties: It is \emph{complete} if for each agent~$i\in V$ it holds that $V_i \cup \{i\}= V$; otherwise it is \emph{incomplete}. \iflong The profile~$\Pot$ has \emph{ties} if there is an agent~$i \in V$ for which there are two agents~$x,y \in V_i$ such that $x \sim_i y$ and we say that $x$ and $y$ are \emph{tied} by~$i$.\else If there are three agents~$i \in V$, $x,y \in V_i$ such that $x \sim_i y$, then we say that $x$ and $y$ are \emph{tied} by~$i$ and that the profile~$\Pot$ has \emph{ties}. \fi To an instance $(V,\Pot)$ we assign an \emph{acceptability graph}, which has $V$ as its vertex set and two agents are connected by an edge if each finds the other acceptable. Without loss of generality, $G$ does not contain isolated vertices\iflong, meaning that each agent has at least one agent which it finds acceptable\fi. The \emph{rank} of an agent~$i$ in the preference list of some agent~$j$ is the number of agents~$x$ that $j$ strictly prefers over~$i$: $\rank_j(i)\coloneqq |\{x \mid x \succ_j i\}|\text{.}$


\iflong Given a preference profile~$\Pot$ for a set~$V$ of agents, a \emph{matching}~$M\subseteq E(G)$ is a subset of disjoint pairs~$\{x,y\}$ of agents with $x\neq y$ (or edges in $E(G)$), where $E(G)$ is the set of edges in the corresponding acceptability graph~$G$. 
For a pair~$\{x,y\}$ of agents, if $\{x,y\}\in M$, then we denote the corresponding partner~$y$ by~$M(x)$; otherwise we call this pair \emph{unmatched}. We write $M(x)=\bot$ if agent~$x$ has \emph{no partner}; \emph{i.e.}\ if agent~$x$ is not involved in any pair in~$M$.
If \emph{no} agent~$x$ has $M(x)=\bot$ then $M$ is \emph{perfect}. 
\else
For a preference profile with acceptability graph~$G$ and edge set~$E(G)$, a \emph{matching}~$M\subseteq E(G)$ is a subset of disjoint pairs~$\{x,y\}$ of agents with $x\neq y$. 
If $\{x,y\}\in M$, then we denote the partner~$y$ of $x$ by $M(x)$; otherwise we call the pair~$\{x, y\}$ \emph{unmatched}. We write $M(x)=\bot$ if agent~$x$ has \emph{no partner}; \emph{i.e.}\ if agent~$x$ is not involved in any pair in~$M$.
If \emph{no} agent~$x$ has $M(x)=\bot$ then $M$ is \emph{perfect}. 
\fi

Given a matching $M$ of $\Pot$, an unmatched pair~$\{x,y\}\in E(G)\setminus M$ \emph{is blocking} $M$ if both~$x$ and $y$ prefer each other to being unmatched or to their assigned partners, \emph{i.e.}\ it holds that $\big(M(x)=\bot \vee y\succ_x M(x) \big) \wedge \big( M(y)=\bot \vee x \succ_y M(y) \big)$.
We call a matching~$M$ \emph{stable} if no unmatched pair is blocking $M$. 
\iflong
The \SR problem is defined as follows:
\decprob{\SR}
{A preference profile~$\Pot$ for a set~$V=\{1,2,\ldots, n\}$ of $n$ agents with $n$ being even.}
{Does $\Pot$ admit a stable matching?}
\noindent By the definition of stability, a stable matching for an instance with complete preferences must assign a partner to each agent. 
This leads to the following observation.
\else

The \SR problem has as input a preference profile~$\Pot$ for a set~$V$ of (even number)~$n$ agents and asks whether $\Pot$ admits a stable matching.
\fi
\iflong By our definition of stability, a stable matching for complete preferences must assign a partner to each agent. 
\begin{observation}\label[observation]{obs:SRT->perfect}
Each stable matching in a \SR instance with complete preferences is perfect.
\end{observation}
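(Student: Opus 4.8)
The plan is to argue by contradiction via a simple parity-and-completeness argument. Suppose that $M$ is a stable matching for a \SR{} instance with complete preferences but that $M$ is \emph{not} perfect. Then some agent is unmatched, i.e.\ there exists $x \in V$ with $M(x) = \bot$. The first observation I would record is a parity fact: since $M$ consists of disjoint pairs, the number of matched agents is $2|M|$, which is even, and since $|V| = n$ is even by assumption, the number of unmatched agents $n - 2|M|$ is also even. Hence a single unmatched agent cannot occur in isolation, and there must be at least \emph{two} distinct unmatched agents $x, y \in V$ with $M(x) = M(y) = \bot$.

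Next I would invoke completeness of the profile. Because the profile is complete, $V_i \cup \{i\} = V$ holds for every agent, so in particular $x$ finds $y$ acceptable and $y$ finds $x$ acceptable; equivalently $\{x,y\}$ is an edge of the acceptability graph~$G$, and $\{x,y\} \in E(G) \setminus M$ is an unmatched pair. Now I apply the definition of a blocking pair directly: $\{x,y\}$ blocks $M$ precisely when $\big(M(x)=\bot \vee y\succ_x M(x)\big) \wedge \big(M(y)=\bot \vee x\succ_y M(y)\big)$. Since $M(x)=\bot$ and $M(y)=\bot$, the first disjunct on each side holds, so both conjuncts are satisfied and $\{x,y\}$ is a blocking pair. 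This contradicts the stability of $M$, and therefore no unmatched agent can exist, i.e.\ $M$ is perfect.

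I do not anticipate a genuine obstacle here; the statement is essentially immediate from the definitions once the parity of the unmatched set is noted. The only point that requires a little care is to avoid the tempting but incomplete argument ``an unmatched agent and its top choice form a blocking pair,'' which can fail when that top choice is itself matched to someone it prefers. The clean route is instead to pair up \emph{two} unmatched agents, whose mutual acceptability is guaranteed by completeness and whose joint unmatchedness makes the blocking condition hold trivially, so that stability is violated outright.
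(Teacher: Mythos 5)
Your proof is correct and follows essentially the same route the paper intends: the paper states the observation as immediate from the definition of stability (with complete preferences, leftover agents would block), and your write-up simply supplies the details. In particular, your explicit use of the parity of $n$ to obtain \emph{two} unmatched agents, and your warning against the flawed ``unmatched agent plus its top choice'' shortcut, are exactly the right points of care—the observation would in fact fail for odd~$n$, so the evenness assumption on $|V|$ is essential.
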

\else
\begin{observation}\label[observation]{obs:SRT->perfect}
When preferences are complete, each stable matching is perfect.
\end{observation}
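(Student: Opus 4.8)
The plan is to argue by contradiction, exploiting the parity of~$n$ together with the fact that under complete preferences every pair of distinct agents is an edge of the acceptability graph~$G$. Suppose towards a contradiction that $M$ is a stable matching that is \emph{not} perfect, so at least one agent is unmatched. Since $M$ partitions the matched agents into disjoint pairs, the number of matched agents is even; as $n$ itself is even, the number of \emph{unmatched} agents is even as well, hence at least two. I would fix two distinct unmatched agents~$x$ and~$y$, that is, $M(x)=\bot$ and $M(y)=\bot$.

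The key observation is that completeness forces $\{x,y\}$ to be an edge of~$G$: since $\Pot$ is complete we have $V_x\cup\{x\}=V$ and $V_y\cup\{y\}=V$, so $y\in V_x$ and $x\in V_y$, meaning each finds the other acceptable and $\{x,y\}\in E(G)\setminus M$. I would then simply plug $x$ and $y$ into the definition of a blocking pair. Because $M(x)=\bot$, the left conjunct $\big(M(x)=\bot\vee y\succ_x M(x)\big)$ holds, and because $M(y)=\bot$, the right conjunct $\big(M(y)=\bot\vee x\succ_y M(y)\big)$ holds; hence $\{x,y\}$ blocks~$M$, contradicting the stability of~$M$.

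There is essentially no technical obstacle here: the whole argument rests on two elementary points---the parity count guaranteeing that unmatched agents never occur singly, and the ``$M(\cdot)=\bot$'' disjunct in the blocking-pair definition, which makes any two mutually acceptable unmatched agents automatically blocking. The only place that requires a moment's care is the parity step, since the statement would fail if $n$ were odd (a single leftover agent could not be paired with anyone); the standing hypothesis that $n$ is even is exactly what rules this out.
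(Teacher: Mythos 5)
Your proof is correct and follows essentially the same route as the paper, which justifies this observation with a one-line appeal to the definition of stability: under complete preferences any two unmatched agents are mutually acceptable and hence form a blocking pair. Your write-up merely makes explicit the parity step (even~$n$ guarantees unmatched agents come in pairs) that the paper leaves implicit, and you rightly note this evenness is essential, since the observation fails for odd~$n$.
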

\fi

The two problems we consider in the paper are \ESR and \SRB. The latter asks to determine whether a given preference profile $\Pot$ for a set of agents $V$ has a stable matching with at most $\bp$ blocking pairs. The former problem asks to find a stable matching with minimum \egalcostn{}; the \egalcostn{} of a given matching~$M$ is as follows:
{$
  \egalcost(M) \coloneqq \sum_{i\in V}\rank_i(M(i)), $ 
}%
\noindent where we augment the definition~$\rank$ with $\rank_i(\bot)\coloneqq |V_i|$. For example, the second profile in \cref{fig:example} has two stable matchings~$M_1=\{\{1,3\}\}$ and $M_2= \{\{1,2\}, \{3,4\}\}$ with $\egalcost(M_1)=4$ and $\egalcost(M_2)=2$.

\looseness=-1 The \egalcostn{}, as originally introduced for the \SM problem, does not include the cost of an unmatched agent because the preference lists are complete. For complete preferences, a stable matching must assign a partner to each agent (\cref{obs:SRT->perfect}),
meaning that our notion of \egalcostn{} equals the one used in the literature. For preferences without ties, all stable matchings match the same subset of agents~\citallmatchingssamesize. Thus, the two concepts differ only by a fixed value which can be pre-determined in polynomial time~\citallmatchingssamesize. For incomplete preferences with ties, there seems to be no consensus on whether to ``penalize'' stable matchings by the cost of unmatched agents~\cite{CsIrMa2016}. Our concept of \egalcostn{} complies with \citeauthor{MarxSchlotter2010}~\cite{MarxSchlotter2010}, but we tackle other concepts as well~(\cref{subsec:egal-variants}).

\section{Minimizing the egalitarian cost}\label[section]{sec:egal}
\looseness=-1 In this section we give our algorithmic and hardness results for
\ESR. \cref{sec:egal-noties} treats the case when no ties
are present, where we can use a straightforward branching strategy. In
\cref{sec:egal-ties} we solve the case where ties are present. Herein,
we need a more sophisticated approach based on random
separation. Finally, in \cref{subsec:egal-variants}, we study variants
of the \egalcostn, differing in the cost assigned to unmatched agents.

\newcommand{\goodpair}{good pair}
\newcommand{\exppair}{costly pair}

\subsection{Warm-up: Preferences without ties}\label[section]{sec:egal-noties}
By the stability concept, if the preferences have no ties and two agents~$x$ and $y$ that are each other's most acceptable 
\ifshort agents, 
\else agents (\emph{i.e.}\
$\rank_{x}(y)=\rank_{y}(x)=0$), \fi 
then any stable matching must contain 
\iflong the
pair~$\{x,y\}$,
\else
$\{x,y\}$,
\fi
which has cost zero. 
Hence, we can safely add such pairs to a solution matching. After we have matched all pairs of agents with zero cost, 
all remaining, unmatched agents induce cost at least one when they are matched. This leads to a simple depth-bounded branching algorithm.
In terms of kernelization, 
we can delete any two agents 
\iflong that are most acceptable to each other (\emph{i.e.}\ they induce zero cost) and delete agents from some preference list that are ranked higher than $\egalcost$.
\else %
that induce zero cost and delete agents from some preference list that are ranked higher than $\egalcost$.
\fi
This gives us a polynomial kernel.

\looseness=-1
First, we recall a part of the polynomial-time algorithm by
\citet{Irving1985} which finds an \emph{arbitrary} stable matching 
\iflong in a \SR instance
without ties if one exists. 
\else
for preferences without ties.
\fi 
The whole algorithm works in two phases. 
We present here a modified version of the first phase 
to determine ``relevant'' agents by sorting out \emph{fixed}
pairs---pairs of agents that occur in every stable
matching~\cite[Chapter 4.4.2]{GusfieldIrving1989}---and \emph{marked
  pairs}---pairs of agents that cannot occur in any stable
matching. 
\iflong
We note that the first phase of \citet{Irving1985}'s algorithm deletes marked pairs, but we will keep them, because they are important for maintaining the egalitarian cost.
\fi
The modified phase-1 algorithm is given in \cref{alg:phase-1}. Herein, by \emph{marking a pair~$\{u,w\}$} we mean marking the agents $u$ and $w$ in the preference lists of $w$ and~$u$, respectively.
\SetAlFnt{\sffamily}
\renewcommand\ArgSty{\normalfont\sffamily}
\SetAlCapFnt{\normalfont\sffamily\large}

\begin{algorithm}
  \DontPrintSemicolon
 \ifshort \caption{A modified version of the phase-1 algorithm of \citet{Irving1985}.}
 \else \caption{A modified version of the phase-1 algorithm of \citet{Irving1985} (the complete algorithm finds a stable matching in a \SR instance without ties, if one exists).}
 \fi
  \label[algorithm]{alg:phase-1}
  
  \footnotesize
  \Repeat{no new pair was marked in the last iteration}
  {
    \ForEach{agent~$u \in U$ whose preference list contains at least one unmarked agent\label{line:outer_foreach_loop}}
    {
      $w\leftarrow$ the first agent in the preference list of $u$ such that $\{u, w\}$ is not yet marked\label{line:first_assignment_to_w}
      
      \ifshort
      \lForEach{$u'$ with $u\pref_{w} u'$}{
        \textbf{mark} $\{u',w\}$\label{alg:marking}
      }
      \else
       \lForEach{$u'$ with $u\pref_{w} u'$}{
        \textbf{mark} $\{u',w\}$\label{alg:marking}
      }
      \fi
    }  
  }
\end{algorithm}

\iflong
For an illustration, consider the following profile with ten agents.

\begin{example}\label[example]{ex:egal-w/o-ties}
  The following profile, ignoring the underlines of the agents in the preference lists, has ten agents with preference lists that do not have ties but may be incomplete.
  \begin{alignat*}{4}
    &\text{agent }1\colon && 6 \pref \underline{2} \pref 7 \pref \underline{4} \pref \underline{10} \pref \underline{3} \pref \underline{5} \pref \underline{8} \pref \underline{9},\quad && \text{agent }6\colon && \underline{2} \pref 3 \pref 1 \pref \underline{8},\\
    &\text{agent }2\colon && 7 \pref 8 \pref \underline{6} \pref \underline{1}, &&  \text{agent }7\colon && \underline{3} \pref 1 \pref 8 \pref 2,\\   
    &\text{agent }3\colon && 8 \pref 6 \pref \underline{1} \pref \underline{7}, &&  \text{agent }8\colon && 2 \pref \underline{6} \pref 7 \pref 3 \pref \underline{1} ,\\  
    &\text{agent }4\colon && \underline{1}, &&  \text{agent }9\colon && \underline{1},\\
    &\text{agent }5 \colon&& 10 \pref \underline{1}, && \text{agent }10\colon&& 5 \pref \underline{1}.
  \end{alignat*}
  Our modified version of the phase-1 algorithm marks a subset of the agents, indicated by the underlines.
  These markings are used to keep track of pairs that do not belong to any stable matching.
  Now, observe that both agents $4$ and $9$ have preference lists that consist only of marked agents.
  By the results of~\citet{GusfieldIrving1989}, we can conclude that no stable matching will assign any partner to agent~$4$ or~$9$.

  Let $p=\{5,10\}$.
  One can verify that the above profile has exactly two stable matchings~$M_1=\{\{1,6\}, \{2,7\}, \{3,8\}, p\}$ and $M_2=\{\{1,7\}, \{2,8\}, \{3,6\}, p\}$.
  Observe that the pair~$p$ exists in every stable matching as they are each other's most acceptable agents that are available for them.
  Hence, by definition $p$ is a fixed pair.
  The egalitarian costs of $M_1$ and $M_2$ are $\egalcost(M_1)=10$ and $\egalcost(M_2)=8$, where the unmatched agents~$4$ and $9$ have each contribute a cost of one which is the length of their preference lists.
\end{example}
\fi
\looseness=-1
Let $\Pot_0$ be the preference profile produced by \iflong the phase-1 algorithm shown in \fi \cref{alg:phase-1}.
\ifshort
We introduce some more notions.
For each agent~$x$, let $\first(\Pot_0, x)$ and $\last(\Pot_0, x)$ denote the first and the last agent in the preference list of $x$ that are not marked, respectively.
We call a pair~$\{x,y\}$ a \emph{fixed pair} if $\first(\Pot_0, x) = y$ and $\first(\Pot_0, y) = x$. 
Let $\marked(\Pot_0)$ denote the set of all agents whose preference lists consist of only marked agents,
and let $\unmarked(\Pot_0)$ denote the set of all agents whose preference lists have at least one unmarked agent.
By \cite[Chapters 4.4.2 and 4.5.2]{GusfieldIrving1989}, we can neglect all agents that are in the fixed pairs and ignore all ``irrelevant'' agents from $\marked(\Pot_0)$. 
\else
For each agent~$x$, let $\first(\Pot_0, x)$ and $\last(\Pot_0, x)$ denote the first and the last agent in the preference list of $x$ that are not marked, respectively.
Then, the following shows that we can ignore irrelevant agents whose preference lists consists of only marked agents.

\begin{proposition}{\cite[Chapters 4.4.2 and 4.5.2]{GusfieldIrving1989}}\label[proposition]{prop:phase-1}
  For each two agents~$x$ and $y$ in the phase-1 profile~$\Pot_0$, the following holds.
  \ifshort
  \begin{inparaenum}[(1)]
  \else\begin{compactenum}[(1)]
  \fi
    \item\label{prop:first-last} If $\first(\Pot_0, x) = y$, then $\last(\Pot_0, y)=x$.
    \item\label{prop:fixed-pairs} If $\first(\Pot_0, x) = y$ and $\first(\Pot_0, y)=x$, 
    then $\{x,y\}$ exists in every stable matching. 
    \item\label{prop:marked} If the preference list of $x$ consists of only marked agents, then no stable matching assigns a partner to $x$.
    \item\label{prop:unmarked} If the preference list of $x$ has some unmarked agent, then every stable matching must assign some partner to agent~$x$.
    \item\label{prop:no-marked-pair} No stable matching contains a pair that is marked.
    \item\label{prop:marked-pair-non-blocking} No marked pair is blocking any matching that consist of only unmarked pairs.
    \ifshort
\end{inparaenum}
\else
\end{compactenum}
\fi
\end{proposition}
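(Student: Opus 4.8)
The plan is to read each marking as a \emph{rejection}: when \cref{alg:phase-1} marks $\{a',w\}$ while processing an agent~$a$ whose first not-yet-marked agent at that moment is~$w$, I interpret it as ``$w$ rejects $a'$ because the strictly better agent~$a$ (note $a\succ_w a'$) proposed to~$w$''. With this reading \cref{alg:phase-1} is exactly the deletion phase of the classical \SR procedure, and the six claims are the standard phase-1 facts adapted to the setting where we \emph{keep} (rather than discard) the marked pairs. I would prove the structural parts from scratch and invoke \cite{GusfieldIrving1989} only for the deeper fact that all stable matchings saturate the same set of agents.

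I would start with~(1). Since $\first(\Pot_0,x)=y$, agent~$x$ still has an unmarked agent, so in the final (fixed-point) iteration the outer loop processes~$x$ with $w=y$; the inner loop then attempts to mark every~$x'$ with $x\succ_y x'$, and since no new pair is marked in that iteration, all such pairs $\{x',y\}$ are already marked. As $\{x,y\}$ itself is unmarked (because $y$ is unmarked in $x$'s list and marking is symmetric), this says exactly that $x$ is the least-preferred unmarked agent of~$y$, i.e.\ $\last(\Pot_0,y)=x$. The heart of the argument is~(5), which I would prove by a \emph{first bad deletion} induction on the order of markings: suppose some marked pair lies in a stable matching, and take the \emph{earliest} marking event producing such a pair; let it mark $\{a',w\}$ (triggered by a proposer~$a$ with $a\succ_w a'$), and let $M$ be a stable matching with $\{a',w\}\in M$. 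By minimality every pair marked earlier lies in no stable matching, so not in~$M$; as $w$ is $a$'s first unmarked agent at that moment, every $z\succ_a w$ has $\{a,z\}$ already marked, hence $M(a)\notin\{z:z\succ_a w\}$, and $M(a)\neq w$ because $M(w)=a'\neq a$. Thus $a$ is unmatched or matched strictly below~$w$, so $a$ strictly prefers~$w$, while $w$ strictly prefers~$a$ to $M(w)=a'$; therefore $\{a,w\}$ blocks~$M$, a contradiction.

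Given~(1) and~(5), parts~(2) and~(3) are short. For~(3), every pair incident to an all-marked agent~$x$ is marked, so by~(5) no stable matching matches~$x$. For~(2), combining $\first(\Pot_0,x)=y$ and $\first(\Pot_0,y)=x$ with~(1) shows that $y$ is the \emph{only} unmarked agent of~$x$ and $x$ the only unmarked agent of~$y$; by~(5) each of them is thus either unmatched or matched to the other, and if $\{x,y\}\notin M$ then both are unmatched and $\{x,y\}$ blocks~$M$. For~(4) I would invoke the Gusfield--Irving structural theorem that all stable matchings of an \SR instance match the same set of agents; together with the phase-1/phase-2 construction this common set is exactly $\unmarked(\Pot_0)$, which yields~(4).

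Finally, for~(6): if $\{a',w\}$ is marked with $w$ the rejecting endpoint, the triggering rejection marked \emph{every} agent that $w$ ranks below the proposer~$a$, and since marks are permanent, at termination every unmarked agent of~$w$ is strictly preferred by~$w$ to $a\succ_w a'$. Consequently, in any matching consisting only of unmarked pairs in which~$w$ is matched, $M(w)\succ_w a'$, so the $w$-side of the blocking condition fails and $\{a',w\}$ cannot block. I expect the main obstacle to be exactly this last reduction: one must guarantee that the rejecting endpoint~$w$ really is matched, which needs both that~$w$ retains an unmarked agent (so $w\in\unmarked(\Pot_0)$) and that the matchings under consideration saturate $\unmarked(\Pot_0)$ — this is where~(4) is used. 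Without saturation the statement is too strong: a marked pair between two \emph{unmatched} unmarked agents can genuinely block, so~(6) is meaningful precisely for matchings that match all relevant agents, which is the regime in which we apply it.
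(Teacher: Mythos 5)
The paper never proves this proposition---it is imported wholesale from Gusfield and Irving~\cite[Chapters~4.4.2 and 4.5.2]{GusfieldIrving1989}, with no in-paper argument---so your reconstruction has to stand on its own. Your proofs of (1), (5), and the derived parts (2) and (3) are correct and complete: (1) follows from the fixed-point property of the last iteration of \cref{alg:phase-1}, (5) from the ``earliest bad marking'' induction that manufactures the blocking pair~$\{a,w\}$, and (2), (3) are easy consequences. For (4), however, you outsource to the book both the same-matched-set theorem \emph{and} the correctness of the full phase-1/phase-2 algorithm; the latter is essentially the fact being proved, so this part of your write-up is a citation rather than a proof. Note that a short self-contained argument is available from what you already established: by (1) and the absence of ties, $x \mapsto \first(\Pot_0,x)$ is an injective, hence bijective, self-map of $\unmarked(\Pot_0)$; if a stable matching~$M$ left $x \in \unmarked(\Pot_0)$ unmatched, then the agent~$w$ with $\first(\Pot_0,w)=x$ is either unmatched (so $\{w,x\}$ blocks~$M$) or, by (5), matched to an unmarked agent that $w$ ranks strictly below~$x$ (so $\{w,x\}$ blocks again), a contradiction.

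The genuine gap is in (6). You are right that the statement is too strong as written---the empty matching consists of only unmarked pairs and is blocked by every marked pair---and right to restrict attention to matchings that match all of $\unmarked(\Pot_0)$. But your proof of even this restricted claim is incomplete: it needs that the rejecting endpoint~$w$ of every marked pair lies in $\unmarked(\Pot_0)$, and you \emph{assume} this (``which needs \ldots{} that $w$ retains an unmarked agent'') rather than prove it. Saturation does not supply it: agents of $\marked(\Pot_0)$ stay unmatched in any matching of unmarked pairs, and if $w$ could be such an agent, then $\{a',w\}$ would block as soon as $a'$ happens to prefer~$w$ to its own (unmarked) partner---nothing in the marking of $\{a',w\}$ constrains $a'$'s preferences. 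The missing invariant is true but requires its own induction over the marking events: when $a$ is processed with first unmarked agent~$w$, every pair~$\{v,a\}$ with $v\succ_a w$ is already marked, and marks are permanent, so $a$ can never afterwards be the first unmarked agent of any such~$v$; hence $\{a,w\}$ can only ever be marked from $w$'s side, by a strictly better proposer whose own pair with~$w$ is unmarked at that moment. Inductively, every agent that has ever served as a rejecting endpoint keeps an unmarked pair, i.e.\ lies in $\unmarked(\Pot_0)$, and with that your argument for (6) closes. A final quibble: deriving saturation from (4) is circular in the intended application, since (4) speaks about stable matchings while (6) is exactly the tool used to certify stability of a candidate matching; saturation must enter as a hypothesis checked directly, not inferred from stability.
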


By \cref{prop:phase-1}(\ref{prop:first-last}), the \emph{fixed pairs} are those pairs $\{x, y\}$ such that $\first(\Pot_0, x) = y$ and $\first(\Pot_0, y) = x$. 
Observe that the preference lists of $x$ and $y$ cannot obtain any unmarked agent other than each other.
As already discussed in the beginning of the section, any stable matching for preferences without ties must contain all fixed pairs.
We introduce two more notions to partition the agent set~$V$.
Let $\marked(\Pot_0)$ denote the set of all agents whose preference lists consist of only marked agents,
and let $\unmarked(\Pot_0)$ denote the set of all agents whose preference lists have at least one unmarked agent.
Obviously, 
\iflong $\marked(\Pot_0)$ and $\unmarked(\Pot_0)$ partition the set~$V$, \emph{i.e.}\ $V=\marked(\Pot_0)\uplus \unmarked(\Pot_0)$.
\else
$\marked(\Pot_0)$ and $\unmarked(\Pot_0)$ partition the set~$V$.
\fi
We can now predetermine some special cases.
\fi
\iflong
\ifshort\begin{lemma}[\appsymb]\else
\begin{lemma}\fi
\label[lemma]{lem:ESR-no-instance}
  If $|\marked(\Pot_0)|>\egalcost|$ or $\unmarked(\Pot_0)$ has an agent~$x$ with $\rank_x(\first(\Pot_0,x))>\egalcost$, 
   then the original profile for $\Pot_0$ admits no stable matchings with egalitarian cost at most~$\egalcost$.
\end{lemma}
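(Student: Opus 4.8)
The plan is to dispatch the two disjunctive hypotheses separately, showing in each case that the egalitarian cost of \emph{every} stable matching is forced above~$\egalcost$. Throughout I would rely on the fact (noted in the text) that \cref{alg:phase-1} only marks pairs but never deletes them, so the stable matchings of $\Pot_0$ coincide with those of the original profile and the rank functions $\rank_x$ are unaffected; hence it suffices to reason about stable matchings of $\Pot_0$ and the cost they incur.

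First I would treat the case $|\marked(\Pot_0)| > \egalcost$. By \cref{prop:phase-1}(\ref{prop:marked}), every agent $x \in \marked(\Pot_0)$ is left unmatched by every stable matching~$M$, so it contributes $\rank_x(\bot) = |V_x|$ to the egalitarian cost. Since the acceptability graph has no isolated vertices, $|V_x| \ge 1$, so each such agent contributes at least one. Summing over $\marked(\Pot_0)$ and using that all remaining rank terms are nonnegative already yields $\egalcost(M) \ge |\marked(\Pot_0)| > \egalcost$, so no stable matching can reach cost~$\egalcost$.

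Next I would treat the case where some $x \in \unmarked(\Pot_0)$ has $\rank_x(\first(\Pot_0, x)) > \egalcost$. By \cref{prop:phase-1}(\ref{prop:unmarked}), $x$ is matched in every stable matching~$M$. The crucial step is to show $\rank_x(M(x)) \ge \rank_x(\first(\Pot_0, x))$. By definition $\first(\Pot_0, x)$ is the most preferred \emph{unmarked} agent of $x$, so every agent $y$ with $y \succ_x \first(\Pot_0, x)$ forms a marked pair $\{x,y\}$; by \cref{prop:phase-1}(\ref{prop:no-marked-pair}) no marked pair lies in a stable matching, so $M(x) \preceq_x \first(\Pot_0, x)$ and hence $\rank_x(M(x)) \ge \rank_x(\first(\Pot_0, x))$. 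Since all ranks are nonnegative, $x$ alone forces $\egalcost(M) \ge \rank_x(M(x)) \ge \rank_x(\first(\Pot_0, x)) > \egalcost$.

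I expect the only delicate point to be this second case, where one must verify that $\first(\Pot_0, x)$ is genuinely the best partner $x$ can obtain in any stable matching; this rests entirely on the meaning of marking together with \cref{prop:phase-1}(\ref{prop:no-marked-pair}). The first case is a routine counting argument once \cref{prop:phase-1}(\ref{prop:marked}) and the no-isolated-vertices convention are invoked.
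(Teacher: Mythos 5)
Your proof is correct and takes essentially the same route as the paper: the same two-case split, with each agent of $\marked(\Pot_0)$ left unmatched and contributing at least one unit of cost by \cref{prop:phase-1}(\ref{prop:marked}), and the forced partner of $x$ contributing cost more than $\egalcost$ by \cref{prop:phase-1}(\ref{prop:unmarked}). If anything, your second case is slightly more explicit than the paper's, since you justify via \cref{prop:phase-1}(\ref{prop:no-marked-pair}) why $M(x)$ can be no better than $\first(\Pot_0,x)$ --- a step the paper leaves implicit.
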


\iflong
\begin{proof}
  Assume that $|\marked(\Pot_0)|>\egalcost$.
  By \cref{prop:phase-1}(\ref{prop:marked}), we know that no stable matching will assign a partner to any agent in $\marked(\Pot_0)$.
  Since each of unmatched agent will contribute a cost of at least one, we deduce that each stable matching has egalitarian cost at least $|\marked(\Pot_0)|$.
  Thus, by our assumption, each stable matching will have egalitarian cost more than $\egalcost$. 

  Assume that $\unmarked(\Pot_0)$ has some agent~$x$ with $\rank_x(\first(\Pot_0, x))> \egalcost$.
  By \cref{prop:phase-1}(\ref{prop:unmarked}) , each agent that has some unmarked agent in its preference list must obtain some partner in every stable matching. 
  Thus, by our assumption that the first unmarked agent in the preference list of $x$ has a rank larger than $\egalcost$, 
  the partner that $x$ obtains from each stable matching will contribute to an egalitarian cost of more than $\egalcost$.
  Consequently, each stable matching must also have an egalitarian cost more than $\egalcost$.
\end{proof}
\fi

\noindent Using Lemma \ref{lem:ESR-no-instance}, we can shrink our instance to obtain a polynomial size problem kernel. 
\else
We can now shrink our instance to obtain a polynomial size problem kernel.
\fi

\begin{theorem}\label{thm:linear-preference-kernel}
  \ESR\ without ties admits a size-$O(\egalcost^2)$ problem kernel with at most $2\egalcost + 1$ agents and at most $\egalcost+1$ agents in each of the preference lists.
\end{theorem}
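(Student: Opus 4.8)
The plan is to produce a polynomial-time kernelization algorithm that, given an instance $(\Pot, \egalcost)$ of \ESR without ties, either decides the instance outright or outputs an equivalent instance with the claimed size bounds. First I would run the modified phase-1 algorithm (\cref{alg:phase-1}) to obtain the profile~$\Pot_0$, together with the induced partition $V = \marked(\Pot_0) \uplus \unmarked(\Pot_0)$ and the set of fixed pairs. By \cref{lem:ESR-no-instance}, if $|\marked(\Pot_0)| > \egalcost$ or some agent~$x \in \unmarked(\Pot_0)$ has $\rank_x(\first(\Pot_0,x)) > \egalcost$, I can immediately reject the instance (output a trivial \textsc{no}-instance). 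So from now on I may assume $|\marked(\Pot_0)| \le \egalcost$ and every unmarked agent's first unmarked partner has rank at most~$\egalcost$.

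Next I would remove the fixed pairs. By \cref{prop:phase-1}(\ref{prop:fixed-pairs}) every stable matching contains all fixed pairs, and since the preference lists of the two endpoints of a fixed pair contain no other unmarked agents, such a pair contributes cost~$0$ to every stable matching and cannot be part of any blocking pair involving a surviving agent. Hence I delete both endpoints of every fixed pair and decrement $\egalcost$ by~$0$; this is safe and preserves egalitarian cost exactly. The key counting step is then to bound the number of surviving agents. The surviving unmarked agents are exactly those in $\unmarked(\Pot_0)$ that are not in a fixed pair. Each such agent~$x$ is matched in every stable matching by \cref{prop:phase-1}(\ref{prop:unmarked}), and (since it is not fixed) it does not get its top unmarked choice, so it contributes cost at least one; as egalitarian costs add up and we require total cost at most~$\egalcost$, there can be at most~$\egalcost$ such agents. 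Together with at most~$\egalcost$ agents in $\marked(\Pot_0)$, this gives at most~$2\egalcost$ agents, and I would argue the ``$+1$'' slack from a parity/rounding argument (the total agent count is even, and one may need a single dummy) to reach the stated bound of $2\egalcost + 1$.

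For the preference-list length bound, I would apply the second reduction rule: in each surviving preference list, delete every agent ranked strictly higher than~$\egalcost$. This is justified because matching an agent~$x$ to a partner of rank exceeding~$\egalcost$ already forces egalitarian cost above~$\egalcost$, so no rank-$>\egalcost$ partner can appear in any solution of cost at most~$\egalcost$; I must check that deleting these entries does not create spurious blocking pairs, which follows because the deleted entries are dominated and any pair they would have formed has cost too large to matter. After this truncation each list has length at most $\egalcost + 1$ (ranks $0$ through~$\egalcost$). Finally I would assemble the bound: at most $2\egalcost+1$ agents each with a list of length at most $\egalcost + 1$ yields total size $O(\egalcost^2)$, and I would verify equivalence in both directions (a cost-$\le\egalcost$ stable matching of the kernel lifts to one of the original by re-adding the fixed pairs, and conversely every solution of the original restricts to one of the kernel). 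The main obstacle I anticipate is the correctness of the list-truncation rule, namely carefully showing that removing high-rank entries neither destroys an optimal stable matching nor introduces a new blocking pair; here I would lean on \cref{prop:phase-1}(\ref{prop:marked-pair-non-blocking}) and the observation that unmarked partners ranked above~$\egalcost$ are irrelevant to any bounded-cost solution.
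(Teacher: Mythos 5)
Your proposal follows the paper's overall strategy (run \cref{alg:phase-1}, handle fixed pairs and $\marked(\Pot_0)$, truncate preference lists around rank~$\egalcost$), but it has two genuine gaps, both in the cost accounting, and both are exactly the points where the paper's proof does real work. First, your claim that a fixed pair ``contributes cost~$0$'' and that one may therefore ``decrement $\egalcost$ by~$0$'' is false. A fixed pair is defined through the first \emph{unmarked} entries, $\first(\Pot_0,x)=y$ and $\first(\Pot_0,y)=x$, and $x$ may rank many \emph{marked} agents above~$y$, so $\rank_x(y)$ can be positive. Concretely, take agents $1\colon 3\pref 2$;\ $2\colon 3 \pref 1 \pref 4$;\ $3\colon 4 \pref 1 \pref 2$;\ $4\colon 3$. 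Here \cref{alg:phase-1} marks $\{1,3\},\{2,3\},\{2,4\}$, so $\{1,2\}$ becomes a fixed pair of cost $\rank_1(2)+\rank_2(1)=2$, and the unique stable matching $\{\{1,2\},\{3,4\}\}$ has cost~$2$. With $\egalcost=1$ this is a no-instance, yet your kernel deletes all four agents without touching the budget and outputs an empty, hence trivially yes, instance. The same accounting problem hits the agents of $\marked(\Pot_0)$, which you keep: their cost contribution is their preference-list \emph{length}, and both your truncation at rank~$\egalcost$ and the disappearance of deleted fixed-pair agents from their lists change that length, again destroying equivalence. The paper avoids both problems by updating the budget to $\hat{\egalcost} = \egalcost-\sum_{x\in F}\rank_x(\first(\Pot_0,x))-\sum_{x\in \marked(\Pot_0)}|V_x|$ and only then deleting $F\cup\marked(\Pot_0)$.

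Second, you are missing the rank-preservation idea. Deleting the agents of $F\cup\marked(\Pot_0)$ also removes them from the preference lists of surviving agents, which shifts every later entry up by one or more positions; pairs in the kernel then have strictly smaller cost than the same pairs in the original instance, so the kernel can become a yes-instance while the original is not (breaking the converse direction of equivalence that you say you would ``verify''). This is precisely what the dummy agents in the paper's construction are for: every removed entry among the first $\hat{\egalcost}+1$ positions of a surviving agent's list is \emph{replaced} by a dummy (each dummy being locked into a zero-cost fixed pair with a partner dummy), so ranks --- and hence costs --- are preserved, and genuine deletion happens only behind position~$\hat{\egalcost}$, where it cannot shift any relevant rank. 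The single ``parity dummy'' in your write-up does not serve this purpose, and without some placeholder mechanism the stated agent and list-length bounds cannot be reconciled with cost-exactness. (A smaller quibble: your per-agent claim that each surviving unmarked agent ``contributes cost at least one'' is not quite right --- one endpoint of a matched pair can receive its overall top choice, as in the cyclic profile $1\colon 2\pref 4$;\ $2\colon 3\pref 1$;\ $3\colon 4\pref 2$;\ $4\colon 1\pref 3$, where every stable matching gives two agents their first choice; only the per-pair version of the claim holds, which costs a factor of two in the agent bound but not the $O(\egalcost^2)$ kernel size.)
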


\iflong\begin{proof}
\else\begin{proof}[Proof sketch.]
\fi
  Let $I=(\Pot, V, \egalcost)$ be an instance of \ESR{} and let $\Pot_0$ be the profile that \cref{alg:phase-1} produces for~$\Pot$.
  We use $F$ to denote the set of agents of all fixed pairs\iflong (\emph{i.e.}\ agents in pairs $\{x, y\}$ such that $\first(\Pot_0, x) = y$ and $\first(\Pot_0, y) = x$)\fi, and we use $O$ to denote the set of ordered pairs~$(x,y)$ of agents 
such that $x$ ranks $y$ higher than $\egalcost$.
Briefly put, our kernelization algorithm will delete all agents in $F\cup \marked(\Pot_0)$,
  and introduce $O(\egalcost)$ dummy agents to replace the deleted agents and some more that are identified by $O$.  
  Initially, $F$ and $O$ are set to empty sets.

  \begin{compactenum}[1.]
    \item\label{kern:preprocess-no} If
    $|\marked(\Pot_0)| > \egalcost$ or if there is an agent~$x$ in $\unmarked(\Pot_0)$ with $\rank_x(\first(\Pot_0,x)) > \egalcost$, 
    \ifshort
    then return a trivial no-instance.
    \else
    then replace the input instance with a trivial no-instance; otherwise proceed with the remaining steps.
    \fi
    \iflong
    \item\label{kern:favorite} For each two agents~$x,y\in \unmarked(\Pot_0)$ with $\first(\Pot_0,x)=y$ and $\first(\Pot_0,y)=x$, add to $F$ the agents~$x$ and $y$.

    \item \label{kern:egal-cost-update} Update the egalitarian cost bound for the agents that are either in a fixed pair or unmatched by any stable matching.
    Let $\hat{\egalcost} = \egalcost-\sum_{x\in F}\rank_x(\first(\Pot_0,x))-\sum_{x\in \marked(\Pot_0)}|V_x|$.
    \else
    \item\label{kern:favorite,egal-cost-update} For each two agents~$x,y\in \unmarked(\Pot_0)$ with $\first(\Pot_0,x)=y$ and $\first(\Pot_0,y)=x$, add to $F$ the agents~$x$ and $y$.
    Let $\hat{\egalcost} = \egalcost-\sum_{x\in F}\rank_x(\first(\Pot_0,x))-\sum_{x\in \marked(\Pot_0)}|V_x|$.
    \fi

    \item\label{kern:preprocess-no-2}
     If $\hat{\egalcost} < |\unmarked(\Pot_0)\setminus F|$, then 
     \ifshort
     return a trivial no-instance.
     \else
     replace the input instance with a trivial no-instance; otherwise do the remaining steps.
     \fi
    \item\label{kern:dummy} Add to the original agent set a set~$D$ of $2k$ \emph{dummy agents}~$d_1,d_2,\dots, d_{2k}$, where $k=2\lceil \hat{\egalcost}/2\rceil$,
  such that for each $i\in \{1,2,\dots, k\}$,
  the preference list of $d_i$ consists of only~$d_{k+i}$, and the preference list of $d_{k+i}$ consists of only~$d_i$.
  \iflong  In this way, each stable matching must contain all pairs~$\{d_i, d_{k+i}\}$, which have zero egalitarian costs.
  \fi
  
    \item\label{kern:ordered-pairs} \iflong For each two agents~$x,y\in \unmarked(\Pot_0)$ with $\rank_x(y) > \hat{\egalcost}$, add to $O$ the ordered pair~$(x,y)$.\else
    For each two~$x,y\in \unmarked(\Pot_0)$ with $\rank_x(y) > \hat{\egalcost}$, add to $O$ the ordered pair~$(x,y)$.
    \fi

    \item\label{kern:update-preference-lists} For each agent~$a\in \unmarked(\Pot_0)\setminus F$ do the following.

    \begin{compactenum}[(1)]
      \item 
      \iflong For each value~$i\in \{0,1,2,\dots, \hat{\egalcost}\}$, let $x$ be the agent with $\rank_a(x)=i$ and do the following.
      If $x\in F\cup \marked(\Pot_0)$ or if $(x,a)\in O$, then replace in $a$'s preference list agent~$x$ with a dummy agent~$d$, using a different dummy agent for each~$i$, and append $a$ to the preference list of~$d$.
      \else For each~$i\in \{0,1,2,\dots, \hat{\egalcost}\}$, let $x$ be the agent with $\rank_a(x)=i$.
      If $x\in F\cup \marked(\Pot_0)$ or if $(x,a)\in O$, then replace in $a$'s preference list agent~$x$ with a dummy agent~$d$, using a different dummy for each~$i$, and append $a$ to the preference list of~$d$.
      \fi
       \item Delete all agents~$y$ in the preference list of $a$ with $\rank_a(y) > \hat{\egalcost}$.
  \end{compactenum}
    \item\label{kern:delete-irrelevant} Delete $F\cup \marked(\Pot_0)$ from $\Pot_0$.
  \end{compactenum}
 \iflong We show that the above algorithm produces a problem kernel with at most $2\hat{\egalcost}+1$ agents and with preference list length at most $\hat{\egalcost}+1$ each.

  First, the correctness of Step~\ref{kern:preprocess-no} is ensured by \cref{lem:ESR-no-instance}.
  Second, by \cref{prop:phase-1}(\ref{prop:fixed-pairs}), we know that each fixed pair must be matched together in each stable matching. That is, if $x$ and $y$ are each other's most preferred unmarked agents (ignoring the marked agents, because they cannot form blocking pairs), then each stable matching must match $x$ and $y$ together.
  Hence, it does not alter the equivalence between the input instance and the kernel if we decrease the egalitarian cost by the amount of ranks of those agents whose partners are fixed.
  Moreover, by \cref{prop:phase-1}(\ref{prop:no-marked-pair}), no agent from $\marked(\Pot_0)$ will have a partner assigned from any stable matching.
  However, we have to take their egalitarian cost into account. 
  This leads to the correctness of Steps~\ref{kern:egal-cost-update} and \ref{kern:preprocess-no-2}.
  From now on, we assume that $\hat{\egalcost}\ge 1$.
  
  The introduction of at most $\hat{\egalcost}+1$ dummy agents in Step~\ref{kern:dummy} does not contribute any egalitarian cost; hence, this step is correct.
 
  In Step~\ref{kern:update-preference-lists}, we update the preference lists of all original agents that will stay in the problem kernel.
  These are those agents that do not belong to $F\cup \marked(\Pot_0)$.
  To see why this step is correct, for each agent~$a\in \unmarked(\Pot_0)\setminus F$,
  we have already reasoned that $a$ will not be assigned a partner from $F\cup \marked(\Pot_0)$.
  Furthermore, to obtain a stable matching with egalitarian cost at most $\hat{\egalcost}$,
  we also cannot assign to an agent~$a\in (\unmarked(\Pot_0)\setminus F)$
  a partner~$x$ such that $(x,a)\in O$ since the rank of $a$ in the preference list of $x$ is higher than~$\hat{\egalcost}$. Note that appending agents to the preference list of dummy agents does not change the fact that each dummy agent can only be matched with another dummy agent in each stable matching. Thus, the first part of Step~\ref{kern:update-preference-lists} is indeed correct.
  Finally, it is obviously correct to remove in the preference list of $a$ all agents~$x$ that have a higher rank: $\rank_a(x)>\hat{\egalcost}$; note that $a\in (\unmarked(\Pot_0)\setminus F)$.
  In this way, the length of the preference list of $a$ is at most $\hat{\egalcost}+1$, and that the preference list of $a$ consists of agents~$x$ with $x\in D\cup (\unmarked(\Pot_0)\setminus F)$.
  
  After all these changes, we delete all agents from $F\cup \marked(\Pot_0)$. Note that their (non-)matches are determined to be the same in each stable matching, and that the corresponding cost has been accounted for by updating the egalitarian cost bound. This shows that the kernelization algorithm is correct. It remains to bound the size of our problem kernel.
  The kernel has exactly $|\unmarked(\Pot_0)\setminus F|$ original agents and at most $\hat{\egalcost}+1$ dummy agents. 
  By Step~\ref{kern:preprocess-no-2}, we know that $|\unmarked(\Pot_0)\setminus F|\le \hat{\egalcost}$.
  Thus, the kernel has at most $2\hat{\egalcost}+1$ agents.
  By Steps~\ref{kern:dummy} and \ref{kern:update-preference-lists}, each of the agents has a preference list of length at most  $\hat{\egalcost}+1$.

  As for the running time, computing $\Pot_0$ takes $O(n^2)$ time and each of the above steps takes $O(n\cdot \egalcost)$ time. 
  Thus, in total, the kernelization algorithm takes $O(n^2)$ time.
\else
The proof that the above algorithm produces a problem kernel with the desired size in the desired running time is deferred to the appendix.
\fi
\end{proof}

\newcommand{\uunderline}[1]{#1}

\iflong
To illustrate our kernelization algorithm, consider \cref{ex:egal-w/o-ties} again and assume that $\egalcost=8$.
By our kernelization algorithm, $F=\{5,10\}$, $\marked(\Pot_0)=\{4,9\}$.
Updating our egalitarian bound, we obtain that $\hat{\egalcost}=6$.
In Step~\ref{kern:dummy}, we need to introduce $2\lceil \hat{\egalcost}/2 \rceil = 6$ dummy agents, $d_1,d_2,\dots, d_6$.
In Step~\ref{kern:ordered-pairs}, we obtain that~$O=\{(1,8), (1,9)\}$.
After Step~\ref{kern:update-preference-lists}, 
the updated preference lists of all agents in $(\unmarked(\Pot_0)\setminus F)$ could be as follows:
\allowdisplaybreaks
  \begin{alignat*}{4}
    &\text{agent }1\colon && 6 \pref \uunderline{2} \pref 7 \pref d_1 \pref d_2 \pref \uunderline{3} \pref d_3,\quad && \text{agent }6\colon && \uunderline{2} \pref 3 \pref 1 \pref \uunderline{8},\\
    &\text{agent }2\colon && 7 \pref 8 \pref \uunderline{6} \pref \uunderline{1}, &&  \text{agent }7\colon && \uunderline{3} \pref 1 \pref 8 \pref 2,\\   
    &\text{agent }3\colon && 8 \pref 6 \pref \uunderline{1} \pref \uunderline{7}, &&  \text{agent }8\colon && 2 \pref \uunderline{6} \pref 7 \pref 3 \pref d_1\text{,}\\
     & \text{agent }d_1\colon && d_{4} \pref 1 \pref 8, && \text{agent } d_{4}\colon && d_1\text{,}\\
     & \text{agent }d_2\colon && d_{5} \pref 1, && \text{agent } d_{5}\colon && d_2\text{,}\\
     & \text{agent }d_3\colon && d_{6} \pref 1, && \text{agent } d_{6}\colon && d_3\text{.}
  \end{alignat*}
Finally, we delete the agents from $F\cup \marked(\Pot_0)$.
\fi
\looseness=-1
\iflong 

Now, we turn to our simple branching algorithm.
\else
Using a simple branching algorithm, we obtain the following.
\fi
\ifshort\begin{theorem}[\appsymb]\else
\begin{theorem}\fi\label{thm:esr-no-ties-fpt}
  Let $n$ denote the number of agents and $\egalcost$ denote the \egalcostn.
  \ESR\ without ties can be solved in $O(2^{\egalcost} \cdot n^2)$~time. 
\end{theorem}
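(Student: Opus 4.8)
The plan is to combine the polynomial kernel from \cref{thm:linear-preference-kernel} with a depth-bounded search tree that branches on how to resolve the ``cheapest'' still-unmatched agent. First I would run \cref{alg:phase-1} on the input to obtain the phase-1 profile~$\Pot_0$, compute the fixed pairs, and verify the no-instance conditions of \cref{lem:ESR-no-instance}; since any fixed pair $\{x,y\}$ occurs in every stable matching (\cref{prop:phase-1}(\ref{prop:fixed-pairs})) and, as noted before the algorithm, each such pair has zero cost when it consists of mutually most-acceptable agents, I would contract all fixed pairs, deduct nothing from the budget for them, and account for the mandatory cost of agents in $\marked(\Pot_0)$ (each contributing $|V_x|\ge 1$ by \cref{prop:phase-1}(\ref{prop:marked})). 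After this preprocessing every remaining agent in $\unmarked(\Pot_0)$ must be matched in every stable matching (\cref{prop:phase-1}(\ref{prop:unmarked})), so each remaining agent contributes at least one to the egalitarian cost once its zero-cost fixed pair is removed.

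The core of the algorithm is the branching step. I would maintain a partial matching~$M$, a residual budget~$\egalcost'$, and repeatedly pick an unmatched agent~$a$ together with the set of its top-ranked still-available partners in~$\Pot_0$. Because there are no ties and all genuinely zero-cost (fixed) pairs have already been removed, matching~$a$ to any currently available partner~$w$ costs $\rank_a(w)\ge 1$ and simultaneously costs $w$ at least~$0$; crucially, for the match to be part of a stable matching we must also forbid $a$ from being matched to anyone it prefers to~$w$, which by \cref{prop:phase-1}(\ref{prop:first-last}) and the marking structure forces those better-ranked agents of~$a$ to be matched to partners they like at least as well, each again incurring strictly positive cost. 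The branching is therefore over the choice of partner~$w$ for~$a$ among the $O(\egalcost')$ relevant candidates (those of rank at most~$\egalcost'$, of which the kernel guarantees at most $\egalcost+1$), and in every branch the residual budget strictly decreases by at least one. This yields a search tree in which each node has at most $\egalcost'+1$ children and the depth is bounded by~$\egalcost$, giving a bound of the form $2^{O(\egalcost)}$ on the number of leaves; a more careful accounting, charging each branch to the unit cost it spends, tightens the base to $2$, giving the claimed $O(2^{\egalcost})$ factor.

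At each leaf I would check whether the committed pairs extend to a stable matching of the remaining (now smaller) instance within the residual budget, using the fact that after fixing a consistent set of choices the leftover profile is again solvable by Irving's algorithm in polynomial time, and I would add back the contracted fixed pairs and the mandatory $\marked(\Pot_0)$ costs. The dominant per-node cost is verifying stability and availability, which is $O(n)$ per branch on the kernel but I would state the running time on the original instance as $O(2^{\egalcost}\cdot n^2)$, the extra $n^2$ factor absorbing the one-time computation of~$\Pot_0$ via \cref{alg:phase-1} and the per-node bookkeeping on the uncontracted instance.

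The main obstacle I anticipate is arguing \emph{stability}, not merely the budget bound: when I commit to matching~$a$ with~$w$ I must ensure that the agents~$a$ skips over (those it prefers to~$w$) do not later form blocking pairs with~$a$, and symmetrically that~$w$'s better options are handled. Without ties this is manageable because the marking in \cref{alg:phase-1} together with \cref{prop:phase-1}(\ref{prop:marked-pair-non-blocking}) tells us that marked pairs never block a matching built from unmarked pairs, so the dangerous blocking pairs are exactly among the $O(\egalcost)$ high-priority unmarked candidates that the branching already enumerates; the correctness proof will need to make precise that exploring these candidates exhaustively captures every way a low-cost stable matching can resolve~$a$, which is where the bulk of the careful case analysis lies.
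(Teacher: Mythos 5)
Your overall architecture (phase-1 preprocessing via \cref{alg:phase-1}, special treatment of fixed pairs and of $\marked(\Pot_0)$, then a cost-bounded search tree) matches the paper's proof, but two of your steps are genuinely broken. First, your budget accounting for fixed pairs is wrong. A fixed pair is a pair $\{x,y\}$ with $\first(\Pot_0,x)=y$ and $\first(\Pot_0,y)=x$, \emph{i.e.}\ $x$ and $y$ are each other's first \emph{unmarked} agents; this does \emph{not} imply $\rank_x(y)=\rank_y(x)=0$, because ranks are computed in the original preference lists, where the agents preferred to the fixed partner may simply have been marked. Concretely, consider the complete profile $1\colon 2\succ 4\succ 3$; $2\colon 3\succ 1\succ 4$; $3\colon 1\succ 2\succ 4$; $4\colon 1\succ 2\succ 3$. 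Here \cref{alg:phase-1} marks $\{1,2\},\{1,3\},\{2,4\},\{3,4\}$, so $\{1,4\}$ and $\{2,3\}$ are fixed pairs of cost $1$ each, and the unique stable matching $\{\{1,4\},\{2,3\}\}$ has egalitarian cost~$2$. With budget $\egalcost=1$ your algorithm passes the checks of \cref{lem:ESR-no-instance} (no marked agents, and $\rank_x(\first(\Pot_0,x))\le 1$ for all $x$), contracts both fixed pairs, deducts nothing, and accepts---although this is a no-instance. This is exactly why the paper's algorithm subtracts $\sum_{x\in F}\rank_x(\first(\Pot_0,x))$ from the budget in addition to $\sum_{x\in \marked(\Pot_0)}|V_x|$.

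Second, your running-time analysis does not establish $O(2^{\egalcost})$. A search tree with branching factor $\egalcost+1$ and depth $\egalcost$, where each branch is only guaranteed to spend one unit of budget, can have $(\egalcost+1)^{\egalcost}=2^{O(\egalcost\log\egalcost)}$ leaves; no charging scheme based solely on ``each branch spends at least one unit'' can do better, since that constraint permits exactly such a tree. The base-$2$ bound in the paper comes from a sharper branching vector: if $v$ is the $i$-th candidate among the still-available partners of the agent~$u$ being branched on, then $\rank_u(v)\ge i-1$, and moreover---this is the crucial use of \cref{prop:phase-1}(\ref{prop:first-last})---every available candidate $v$ except the last one satisfies $\rank_v(u)\ge 1$, because $\rank_v(u)=0$ would give $\first(\Pot_0,v)=u$ and hence $\last(\Pot_0,u)=v$, making $v$ the \emph{last} candidate of~$u$. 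Thus the $i$-th branch costs at least~$i$, yielding the branching vector $(1,2,\ldots,\hat{\egalcost},\hat{\egalcost})$ and hence $2^{\hat{\egalcost}}$ leaves. You invoke \cref{prop:phase-1}(\ref{prop:first-last}) only in your stability discussion, so your ``tightening the base to $2$'' remains an unsupported assertion, and it is precisely the point of the theorem (otherwise one only gets $2^{O(\egalcost\log\egalcost)}\cdot n^{O(1)}$). Your anticipated obstacle about stability, by contrast, is a non-issue: the paper simply checks stability of the completed matching at each leaf in $O(n^2)$ time, which is all the correctness argument needs.
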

\iflong
\begin{proof}
  Let $I=(\Pot, V, \egalcost)$ be an instance of \ESR{} and let $\Pot_0$ be the profile that \cref{alg:phase-1} produces for input $\Pot$.
  We aim to construct a stable matching~$M$ of \egalcostn\ at
  most~$\egalcost$ for $V$. 
  
  First, we use $F$ to collect the agents in the fixed pairs, that exist in all stable matchings, \emph{i.e.}\ 
  $F = \{x,y \in \unmarked(\Pot_0)\mid \first(\Pot_0,x)=y \wedge \first(\Pot_0,y)=x\}$.
  We also add to $M$ the corresponding pairs, \emph{i.e.}\
  $M=\{\{x,y\} \subseteq \unmarked(\Pot_0) \mid  \first(\Pot_0,x)=y \wedge \first(\Pot_0,y)=x\}$.
  Second, just as in our kernelization algorithm stated in the proof of \cref{thm:linear-preference-kernel},
  we update our egalitarian bound by setting $\hat{\egalcost} = \egalcost-\sum_{x\in F}\rank_x(\first(\Pot_0,x))-\sum_{x\in \marked(\Pot_0)}|V_x|$.
  
  Our branching algorithm will extend the matching $M$ to find a stable one with \egalcostn\ at most $\hat{\egalcost}$ and works as follows. 
  Pick an arbitrary unmatched agent~$u\in (\unmarked(\Pot_0)\setminus F)$ and let~$V^*_u=\{v\in V_u\setminus F \mid \{u,v\} \text{ is not marked and } \rank_{u}(v) + \rank_{v}(u)\le \hat{\egalcost}\}$ be the set of agents~$v$ which are still acceptable to~$u$ such that $\rank_v(u) + \rank_u(v) \leq \hat{\egalcost}$. Note that
  $|U| \le \hat{\egalcost}+1$ and that, clearly, $u$ cannot be matched to any
  of its acceptable agents outside of~$U$ as otherwise the egalitarian cost will exceed $\hat{\egalcost}$. 
  Branch into all possibilities to add $\{u, v\}$ to~$M$ for~$v \in U$ and decrease
  the remaining budget~$\egalcost$ accordingly; that is, make one
  recursive call for each possibility. If afterwards
  $\hat{\egalcost} > 0$, then recurse with another yet unmatched agent~$u$. If
  $\hat{\egalcost} = 0$ or there is no unmatched agent anymore, then check
  whether the current matching~$M$ is stable in $O(n^2)$ time. Accept
  if~$M$ is stable and otherwise reject.

  Clearly, in $O(n^2)$ time, we can compute $\Pot_0$, match all fixed pairs, and update the egalitarian cost bound.
  The recursive procedure makes at most $\hat{\egalcost}+1$ recursive
  calls,  and in each of them, the budget is reduced by $1,2,\dots, \hat{\egalcost}, \hat{\egalcost}$, respectively.
  To see why the budgets in the first $\hat{\egalcost}$ calls are updated in this way, we observe that for each agent $u\in \unmarked(\Pot_0)$ and each acceptable agent~$v\in V^*_u$ with $\rank_v(u)\le |V^*_u|-1$, 
  it holds that $\rank_{v}(u)\ge 1$ as otherwise the agent~$v'\in V^*_u$ with rank $|V^*_u|$ would be marked, as ensured by \cref{prop:phase-1}(\ref{prop:first-last}).
  Thus, our branching algorithm has a branching vector~$(1,2,\dots, \hat{\egalcost}, \hat{\egalcost})$, which amounts to a running time of $O(\textsf{call-time}\cdot 2^{\hat{\egalcost}})$, where $\textsf{call-time}$ denotes the running time of each call (see for instance \cite[Chapter 8.1]{Nie06} for some discussion on how to obtain the corresponding running time).
  Since each call can be carried out in $O(n^2)$ time, 
  our algorithm runs in $O(n^2\cdot 2^{\hat{\egalcost}})=O(n^2\cdot 2^{\egalcost})$.
\end{proof}
\fi

\subsection{Preferences with ties}\label[section]{sec:egal-ties}
\looseness=-1 When the preferences may contain ties, 
we can no longer assume that 
if two agents are each other's most acceptable agents, denoted as a \emph{\goodpair}, 
then a minimum \egalcostn{} stable matching would match them together; note that \goodpair{s} do not induce any egalitarian cost.
This is because their match could force other pairs to be matched together that have large cost.
Nevertheless,
a \goodpair\ will never block any other pair\iflong, 
\emph{i.e.}
no agent in a \goodpair{} will form with an agent in some other pair a blocking pair.
\else. \fi
It is straightforward to see that each stable matching must contain a \emph{maximal} set of disjoint \goodpair{s}.
However, it may also contain some other pairs which have non-zero cost.
We call such pairs \emph{\exppair{s}}.
Aiming to find a stable matching~$M$ with egalitarian cost at most~$\egalcost$,
it turns out that we can also identify in $\egalcost^{O(\egalcost)}\cdot n^{O(1)}$ time a subset~$S$ of pairs of agents, which contains all \exppair{s} of $M$ and contains no two pairs that may \emph{induce a blocking pair}.
It hence suffices to find a minimum-cost maximal matching in the graph induced by~$S$ and the \goodpair{s}. The crucial idea is to use the random separation technique~\cite{CCC06} to highlight the difference between the matched \exppair{s} in $M$ and the unmatched \exppair{s}.
This enables us to ignore the \exppair{s}
which pairwisely block each other or are blocked by some pair in~$M$ 
so as to obtain the desired subset~$S$.  

\iflong\subparagraph{Perfectness.} \fi \iflong Before we get to the algorithm, we show that we can focus on the case where our desired stable matching is perfect, \emph{i.e.}, each agent is matched, 
even when the input preferences are incomplete. \else Before describing the algorithm, we show that we can focus on the case of perfect matchings, even for incomplete preferences. \fi
(Note that the case with complete preferences is covered by \cref{obs:SRT->perfect}.)
We show this by introducing dummy agents to extend each non-perfect stable matching to a perfect one, without altering the \egalcostn.

\ifshort
\begin{lemma}[\appsymb]
  \else
  \begin{lemma}
    \fi
    \label[lemma]{lem:perfect}
  \ESR for $n$ agents and \egalcostn{}~$\egalcost$ is $O(\egalcost \cdot n^2)$-time  reducible to \ESR for at most $n + \egalcost$ agents and \egalcostn~$\egalcost$ with an additional requirement that the stable matching should be perfect.
\end{lemma}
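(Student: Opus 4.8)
The plan is to turn a (possibly non-perfect) target stable matching into a perfect one by attaching \emph{dummy agents} that absorb the unmatched agents at no additional cost. The key observation is that any stable matching~$M$ with $\egalcost(M)\le \egalcost$ can leave at most $\egalcost$ agents unmatched, because every unmatched agent~$x$ contributes $\rank_x(\bot)=|V_x|\ge 1$ to the cost; moreover, since $n$ is even and matched agents come in pairs, the number of unmatched agents is always even. Hence it suffices to introduce a pool~$D$ of $2\lfloor \egalcost/2\rfloor\le \egalcost$ dummy agents, which keeps the total number of agents even and at most $n+\egalcost$. I construct the new profile as follows: to the preference list of every original agent~$x$ I append all dummies, tied, right below its last acceptable agent, so that $\rank_x(d)=|V_x|$ for every $d\in D$; and I let each dummy be \emph{indifferent} among all original agents and all other dummies (every potential partner tied at rank~$0$). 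The two design choices are deliberate: matching an unmatched~$x$ to a dummy costs exactly $|V_x|$, the very quantity it contributed while unmatched, and the indifference of the dummies (which is where I use that the profile may contain ties) guarantees that a dummy can never be the strictly-preferring side of a blocking pair.

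For the forward direction, I start from a stable matching~$M$ of the original instance with $\egalcost(M)\le\egalcost$, match each of its (evenly many, at most~$\egalcost$) unmatched agents to a distinct dummy, and pair up the leftover dummies (an even number) arbitrarily; call the result~$M'$. Then $M'$ is perfect and $\egalcost(M')=\egalcost(M)$, since each rescued agent now pays $|V_x|$ instead of the unmatched penalty~$|V_x|$ and the dummies pay~$0$. To see that $M'$ is stable I only have to rule out blocking pairs: a pair containing a dummy cannot block because the dummy never strictly prefers anyone; a pair of two original agents that blocks~$M'$ would, by the bottom placement of the dummies, also block~$M$ (checking the three cases according to whether each endpoint was matched or unmatched in~$M$), contradicting stability of~$M$.

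For the backward direction, I take a perfect stable matching~$M'$ of the new instance with $\egalcost(M')\le\egalcost$ and let $M$ be its restriction to pairs of original agents, so that every original agent matched to a dummy in~$M'$ becomes unmatched in~$M$. The cost is preserved, because such an agent~$x$ paid $\rank_x(d)=|V_x|$ in~$M'$ and now pays the unmatched penalty~$|V_x|$ in~$M$, while the dummies contributed~$0$; hence $\egalcost(M)=\egalcost(M')\le\egalcost$. Stability of~$M$ follows by lifting: any original blocking pair of~$M$ is still a blocking pair of~$M'$ (an agent unmatched in~$M$ is matched to a bottom-ranked dummy in~$M'$, so it still strictly prefers any acceptable original partner), again contradicting stability of~$M'$.

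The only genuine obstacle is verifying stability, i.e.\ guaranteeing that the gadget introduces no new blocking pairs; this is exactly what the two design choices handle---placing the dummies at the bottom of the original lists prevents a matched original agent from ever preferring a dummy, and making the dummies indifferent prevents them from ever initiating a block. Parity (and the $\egalcost=0$ corner case, where no dummies are added and a cost-$0$ stable matching is already perfect) is a routine detail. Finally, building the new profile takes $O(\egalcost\cdot n^2)$ time: each original list grows by at most $\egalcost$ entries and each of the $O(\egalcost)$ dummies receives a list over the $O(n+\egalcost)$ agents.
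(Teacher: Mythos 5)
Your proof is correct and follows essentially the same route as the paper: both constructions add an even pool of at most $\egalcost$ dummy agents, placed tied at the bottom of the original preference lists and themselves indifferent among all their acceptable partners, then match the (evenly many, at most $\egalcost$) unmatched agents to dummies and verify that blocking pairs transfer between the two instances in both directions. The only difference is cosmetic: the paper appends the dummies only to the lists of agents with at most $\egalcost$ acceptable agents (the only agents that can be left unmatched by a stable matching of cost at most $\egalcost$), whereas you append them to every list---both variants are correct and meet the stated size and time bounds.
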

\iflong
\begin{proof}
  Let $(V, \Pot, \egalcost)$ be an instance of \ESR. 
  Construct another instance $(V', \Pot', \egalcost')$ of \ESR\ as
  follows. Define $k = \egalcost$ if $\egalcost$ is even and $k = \egalcost - 1$
  otherwise. Introduce a set $A \coloneqq \{a_1, \ldots, a_{k}\}$ of $k$~agents,
  and let $V' \coloneqq V \cup A$. 
  Let $V^*$ consist of all agents in $V$ that each have at most~$\egalcost$ acceptable agents. To obtain $\Pot'$,
  define the preference list of each agent in $V \setminus V^*$ to be the same as in $\Pot$. 
  All agents in $A$ have the same set of acceptable agents, namely $A\cup V^*$,
  which are tied as most acceptable.
  Consistently, for each agent~$b \in V^*$, 
  the preference list of $b$ in $\Pot'$ is $L_{b}\succ A$, where $L_b$ is the preference list of $b$ in~$\Pot$.
  This completes the construction of $(V', \Pot', \egalcost)$. It can clearly be carried out in $O(\egalcost \cdot n^2)$~time.

  We claim that $(V, \Pot, \egalcost)$ admits a stable matching~$M$ with
  \egalcostn\ at most $\egalcost$ if and only if $(V', \Pot', \egalcost)$ admits a
  \emph{perfect} stable matching~$M'$ with \egalcostn\ at most $\egalcost$. 

  For the ``only if'' part,  let $V_\bot \subseteq V$ be the agents
  left unmatched by~$M$. Observe that $|V_\bot| \leq \egalcost$ as each unmatched agent has at least one acceptable agent and thus contributes at least one unit to the \egalcostn.
  Moreover, since $|V|$ is even, $|V_\bot|$ is
  even. Construct a matching~$M'$ for $(V', \Pot', \egalcost)$ with
  $M \subseteq M'$ by matching each agent in $V_\bot$ to a unique
  agent in $A$. Match the remaining, so far unmatched, agents in $A$
  among themselves. Note that this is possible because both $V_\bot$
  and $A$ are even.

  Observe that $M'$ is perfect. It is also stable:
  No agent in $V\setminus V_{\bot}$ is involved in a blocking pair according to~$M$
  and each agent in $V_{\bot}$ is matched to some agent in~$A$,
  and each agent in $A$ is matched to one of his most acceptable agents. 
  It remains to determine the \egalcostn\ of~$M'$.

  Note that each agent in $A$ contributes zero units to the \egalcostn{} of~$M'$ because they are matched with their most acceptable agents. 
  Hence, the only difference between $\egalcost(M)$ and
  $\egalcost(M')$ may arise from the cost of the agents in $V_\bot$. 
  Let $b \in V_\bot$ and let $\ell$ be the number of agents
  acceptable to $b$ according to $\Pot$. 
  By our \egalcostn{} definition and by the preference lists of $b$ and $M(b)$ in $\Pot$, 
  the cost of $b$ for $M$ is the same as the cost of $\{b,M'(b)\}$ for $M'$. 
  Hence, indeed $M'$ is a perfect
  stable matching and has \egalcostn{} at most $\egalcost$. 

  For the ``if'' part, let $M'$ be a perfect stable matching of \egalcostn\ at most $\egalcost$
  for~$\Pot'$. 
  Obtain a matching~$M$ for $\Pot$ by taking $M = \{p \in M' \mid p \subseteq V\}$. 
  Observe that no two agents~$a, b$ that are both unmatched with respect to $M$ are acceptable to each other as otherwise, they would prefer to be with each other rather than with their respective partners given by $M'$,
  forming a blocking pair for~$M'$ (note that the partners assigned to $a$ and $b$ by $M'$ are in $A$ and hence have a largest rank in the preference lists of $a$ and $b$ according to $\Pot'$).

  We claim that $M$ is stable for $(V, \Pot, \egalcost)$. 
  Suppose, towards a contradiction, that $\{a, b\} \subseteq V$ is blocking $M$. 
  This implies that $a$ and $b$ are acceptable to each other, 
  and by the above reasoning, at least one of the agents~$a$ and $b$ is matched in $M$. 
  Furthermore, either $a$ or $b$ needs to be unmatched by $M$ as,
  otherwise, $\{a, b\}$ is a blocking pair for~$M'$---a contradiction. 
  Say in $M$, agent~$a$ is unmatched but agent~$b$ is matched, 
  implying that $M'(a)\in A$ and $b$ prefers $a$ to its partner~$M(b)=M'(b)$.
  However, 
  by the definition of the preference list of agent~$a$ in $\Pot'$,
  it prefers agent~$b$ over any agent in~$A$, 
  implying that $\{a, b\}$ is also blocking~$M'$---a contradiction.

  Finally, by a reasoning similar to the one given for the ``only if'' part, we can obtain that the \egalcostn{s} of $M$ and $M'$ remain the same, which is at most $\egalcost$. 
\end{proof}
\fi

\noindent\looseness=-1\cref{lem:perfect} allows, in a subprocedure of our main algorithm, 
\iflong to compute a minimum-cost \emph{perfect} matching in polynomial time
instead of a minimum-cost \emph{maximal} matching (which is NP-hard).
\else
to compute a min-cost \emph{perfect} matching in polynomial time
instead of a min-cost \emph{maximal} matching (which is NP-hard).
\fi
\subparagraph*{The algorithm.} \looseness=-1 As mentioned, \iflong our algorithm is based on \else we use \fi random separation~\cite{CCC06}.
We apply it already in derandomized form using \citeauthor{Bsh15}'s construction
of cover-free families~\cite{Bsh15}, a notion related to universal sets~\cite{NSS95}. Let $\hat{n}, p, q \in \mathds{N}$ such that $p + q \leq \hat{n}$. A family~$\calF$ of subsets of some $\hat{n}$-element universe~$U$ is called  \emph{$(\hat{n},p,q)$-cover-free family} if for each subset~$S \subseteq U$ of cardinality $p + q$ and each subset $S' \subseteq S$ of cardinality $p$, there is a member~$A \in \calF$ with $S \cap A = S'$.\footnote{%
\iflong 
The standard definition of cover-free families~\cite{Bsh15} is stated differently from but equivalent to ours. Namely, an $(\hat{n}, p, q)$-cover-free family is a tuple $(X,\calB)$, where $\calB$ is a family of $\hat{n}$~subsets of~$X$ 
such that for each list $(B_1, \ldots, B_p) \in {\calB}^{p}$ and each list $(A_1, \ldots, A_q) \in \calB^{q}$ with $B_i \neq A_j$, $i \in \{1, \ldots, p\}, j \in \{1, \ldots, q\}$, we have $\bigcap_{i = 1}^p B_i \not\subseteq \bigcup_{j = 1}^q A_j$. 
\citet{BG17} showed that the two definitions are equivalent.
\else
The standard definition of cover-free families~\cite{Bsh15} is stated differently from but equivalent~\cite{BG17} to ours. 
\fi
} The result by \citet[Theorem 4]{Bsh15} implies that if $p \in o(q)$, then 
there is an $(\hat{n},p,q)$-cover-free family of cardinality $q^{O(p)} \cdot \log \hat{n}$
which can be computed in time linear of this cardinality. 
\iflong A similar result with a larger running time is given by \citet[Lemma~1]{chitnis_designing_2016} based on so-called splitters~\cite{NSS95}.\fi

\newcommand{\block}{block}
\newcommand{\phase}{phase}
\newcommand{\Phase}{Phase}
\ifshort
\looseness=-1 In the remainder of this section, we prove \cref{thm:ESR}.
\else
In the remainder of this section, we prove our main result: 
\begingroup
  \def\thetheorem{\ref{thm:ESR}}
  \begin{theorem}\thmESR 
\end{theorem}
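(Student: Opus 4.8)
The plan is to reduce \ESR\ with ties to polynomially many instances of \textsc{Minimum-Weight Perfect Matching}, one for each member of a small family of colorings obtained by derandomized random separation, and to solve each instance in polynomial time. First I would invoke \cref{lem:perfect} to assume the sought stable matching~$M$ is perfect, so every agent is matched. The starting structural observation is that, since each \exppair\ contributes at least one unit to the \egalcostn, a stable matching of cost at most~$\egalcost$ contains at most~$\egalcost$ \exppair{s}, hence at most~$2\egalcost$ agents lie in \exppair{s}; call this set~$T$. The crucial second observation is that any agent~$a$ whose partner has rank~$0$ (in particular every agent in a \goodpair) never strictly prefers another agent to its partner, since $\rank_a(b)\ge 0=\rank_a(M(a))$ for all~$b$, and so $a$ can never be an endpoint of a blocking pair. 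Consequently every blocking pair has both endpoints in~$T$, and $M$ decomposes as a perfect matching of~$T$ by \exppair{s} together with a maximal set of disjoint \goodpair{s} covering~$V\setminus T$.

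Given this decomposition, if the set of \exppair{s} of~$M$ were known, stability would be easy to certify: it suffices that no two of the chosen \exppair{s} induce a blocking pair and that no \goodpair\ is left with \emph{both} endpoints matched at positive rank (otherwise that \goodpair\ itself blocks). The difficulty is that neither~$T$ nor the~$\le\egalcost$ \exppair{s} inside it are known, and a brute-force choice of~$T$ would cost~$n^{O(\egalcost)}$. To avoid this I would build, via the $(\hat n,p,q)$-cover-free families of \citet{Bsh15}, a family~$\calF$ of colorings of the candidate pairs such that for the true solution at least one coloring~$A\in\calF$ highlights it: the $O(\egalcost)$ \exppair{s} of~$M$ are kept, while every pair that could block them is discarded. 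Here the kept part has size~$p$ polynomial in~$\egalcost$ and the discarded part is bounded using the fact that each of the~$\le 2\egalcost$ agents of~$T$ strictly prefers at most~$\egalcost$ agents to its partner, so the relevant interfering set has size~$q$ polynomial in~$\egalcost$ as well. I would carry out this separation in two phases, the first isolating the \exppair{s} and their endpoints and the second removing exactly the pairs that would pairwise block one another or block the \exppair{s} of~$M$, yielding a set~$S$ that contains all \exppair{s} of~$M$ and no two members of which induce a blocking pair.

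With such an~$S$ in hand I would form the weighted acceptability graph whose edges are the surviving \goodpair{s} together with~$S$, assign edge~$\{x,y\}$ the weight~$\rank_x(y)+\rank_y(x)$, and compute a minimum-weight perfect matching in~$O(n^3)$ time, using \cref{lem:perfect} and additional dummy agents to make perfectness achievable and to enforce \goodpair-saturation. By construction any perfect matching drawn from this graph is stable---\goodpair{s} never block and no two kept pairs block---so its weight equals its \egalcostn; ranging over all~$A\in\calF$ and returning the lightest matching of weight at most~$\egalcost$ then solves the problem. The product of the number of colorings, which is~$q^{O(p)}\cdot\log\hat n$ for the chosen parameters, with the per-coloring work dominated by one minimum-weight perfect matching computation, yields the claimed $2^{O(\egalcost^3)}\cdot n^3 \cdot (\log n)^3$ running time.

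I expect the main obstacle to be the correctness of the two-phase separation in the presence of ties. I must guarantee \emph{simultaneously} that (i) no \exppair\ of the optimal~$M$ is ever discarded, (ii) every pair that could block~$M$ is removed, and (iii) the \goodpair-saturation constraint is respected, all while keeping the kept and discarded sets bounded by functions of~$\egalcost$ so that~$\calF$ stays small. Ties make (iii) delicate: an agent may have several tied most-acceptable partners, so which \goodpair{s} can be realized on~$V\setminus T$ depends on how~$T$ is removed, and a minimum-weight perfect matching does not automatically saturate every \goodpair. Reconciling this with the matching reduction---via carefully placed dummy agents that force at least one endpoint of each \goodpair\ to be matched at rank~$0$---is where the technicalities concentrate, and verifying that the cover-free family indeed produces, for the true solution, a coloring realizing the desired~$S$ is the heart of the proof.
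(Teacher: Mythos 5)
Your high-level plan---reduce to perfect matchings via \cref{lem:perfect}, observe that a solution~$M$ has at most~$\egalcost$ costly pairs and that only the at most~$2\egalcost$ agents inside them can be endpoints of blocking pairs, then use derandomized random separation to isolate those pairs and finish with \textsc{Minimum-Weight Perfect Matching}---is exactly the paper's plan, but the one quantitative claim on which your separation rests is false. You assert that the set to be discarded, namely all pairs that could block~$M$, has size $q=\mathrm{poly}(\egalcost)$, justified by the fact that each agent in a costly pair strictly prefers at most~$\egalcost$ agents to its partner. That fact bounds the number of \emph{agents} that can serve as the foreign endpoint of a blocking pair (the paper's \culprit{s}, at most $\egalcost^2$ by \cref{lem:culprit-bound}), but not the number of \emph{pairs}: if $\{u,u'\}$ is the blocking pair with $u$ in a costly pair $\{u,v\}$ of~$M$, the blocked edge is $\{u',v'\}$ for an \emph{arbitrary} agent~$v'$ that $u'$ ranks strictly below~$u$. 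With ties this can be $\Omega(n)$ edges of cost at most~$\egalcost$: let $u'$'s preference list be $u$ followed by one huge tie of the remaining agents, each of which ranks $u'$ first; then every edge $\{u',v'\}$ into the tie has cost~$1$ and is blocked by the single edge of~$M$ containing~$u$. Hence no cover-free family over pairs with $p,q=\mathrm{poly}(\egalcost)$ is guaranteed to contain a member that keeps $M$'s costly pairs and discards everything blocking them, and the correctness of the whole scheme collapses. The deterministic reading of your second phase fails for the same reason: you cannot simply delete both members of every mutually blocking pair among the kept edges, since one of them may belong to~$M$ itself.

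This unbounded-blocking phenomenon is precisely what the paper's proof is engineered around, and it needs an idea your proposal lacks. The paper classifies a blocked edge by whether it is \fine\ or \critical{} at the blocking endpoint, i.e., whether the number of agents \emph{weakly} preferred to its partner there is at most~$\egalcost$ or not. Edges blocked at \fine\ endpoints number at most~$\egalcost^3$ (\cref{lem:blocked-bound}), and for these the edge-level separation you describe works, with the both-edges-deleted rule applied only to such pairs (which is what makes it safe). The \criticaledge{s} blocked by~$M$, which can be unboundedly many as above, are handled by a \emph{second, vertex-level} separation over agents rather than pairs: one separates the at most~$2\egalcost$ agents incident to \criticaledge{s} of~$M$ from the at most $\egalcost^2+2\egalcost$ remaining \culprit{s} and costly-pair endpoints, and then deletes wholesale all \criticaledge{s} incident to non-selected agents (\cref{rule:delete-mismatched-edges,rule:delete-blocking-critical-edges}). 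As a smaller point, your worry about saturating the zero-cost pairs needs no extra dummy-agent machinery: once all mutually blocking pairs of edges have been removed, \emph{every} perfect matching of the surviving graph is stable (\cref{lem:backwards}), because a zero-cost pair left with both endpoints at positive rank would witness two surviving edges that block each other.
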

\addtocounter{theorem}{-1}
\endgroup
\fi
Let $\Pot$ be a preference profile for a set~$V$ of agents, possibly
incomplete and with ties.
For brevity we denote by a \emph{solution} (of $\Pot$) a stable matching~$M$ with \egalcostn\ at most $\egalcost$. By \cref{lem:perfect}, we assume that each solution is perfect.
Our goal is to construct a graph with vertex set~$V$ which contains
all matched ``edges'', representing the pairs, of some solution and some other edges for which 
 no two edges in this graph are blocking each other. 
Herein, we say that two edges~$e,e'\in \binom{V}{2}$ are \emph{blocking each other} if, 
assuming
both edges (which are two disjoint pairs of agents) are in the matching, they would \emph{induce a blocking pair},
\emph{i.e.}\ $u'\succ_u v \text{ and } u \succ_{u'} v'$, where $e\coloneqq \{u,v\}$ and $e'\coloneqq \{u',v'\}$. 

Pricing the edges with their corresponding cost, by~\cref{lem:perfect}, it is then enough to find a minimum-cost perfect
matching. The graph is constructed in three \phase{s} (see \cref{alg:egalmatch}).
In the first \phase,
we start with the acceptability graph of our profile~$\Pot$ and remove all edges whose ``costs'' each exceed~$\egalcost$.
In the second and the third \phase{s},
we remove all edges that \block\ each other
while keeping a stable matching with minimum \egalcostn{} intact. 

\begin{algorithm}[t]
  \DontPrintSemicolon
     \footnotesize
  \SetKw{Accept}{accept}
  \SetKw{Reject}{reject}
  
  \KwIn{A set~$V$ of agents, a preference profile~$\Pot$ over~$V$, and a budget~$\egalcost \in \mathds{N}$.}
  
  \KwOut{A stable matching of \egalcostn\ at most~$\egalcost$ if it exists.}
  
  \BlankLine

  \tcc*[l]{\Phase\ 1}
  $(V, E) \leftarrow $ The acceptability graph of $\Pot$\; 
  $\cheapE \leftarrow \{\{x,y\} \in E \mid \rank_x(y)+\rank_y(x) = 0\}$ \tcp*{The set of \cheapedge s in $E$}
  $\expE \leftarrow \{\{x,y\} \in E \mid 1\le \rank_x(y)+\rank_y(x) \le \egalcost\}$ \tcp*{The set of \expedge s in $E$}
  $E_1 \leftarrow \cheapE \cup \expE$\;
  \tcc*[l]{\Phase\ 2}
  $\expF \leftarrow $ $(|\expE|, \egalcost,  \egalcost^3)$-cover-free family over the universe~$\expE$\;\label{ln:ph2-1}
  \ForEach{$E' \in \expF$\label{ln:ph2-2}}{
    Apply \cref{rule:delete-red-edges\iflong,rule:delete-blocking-green-edges\fi} to $E_1$ to obtain $E_2$\;\label{ln:ph2-3}
    \tcc*[l]{\Phase\ 3}
    $\calC \leftarrow $ $(|V|, \egalcost^2 + 2\cdot \egalcost, 2\cdot \egalcost)$-cover-free family over the universe~$V$\;\label{ln:ph3-1}
    \ForEach{$V' \in \calC$\label{ln:ph3-2}}{
      Apply \cref{rule:delete-mismatched-edges,rule:delete-blocking-critical-edges} to~$E_2$ to obtain $E_3$\;\label{ln:ph3-3}
      $M \leftarrow $ Minimum-cost perfect matching in the graph $(V, E_3)$ or $\bot$ if none exists\;\label{ln:ph3-4}
      \lIf{$M \neq \bot$ and $M$ has cost at most~$\egalcost$\label{ln:ph3-5}}{%
        \Return{$M$}
      }
    }
  }
  \iflong\Reject\fi
  
  \caption{Constructing a perfect stable matching of \egalcostn\ at most~$\egalcost$.}
  \label{alg:egalmatch}
\end{algorithm}

We introduce some more necessary concepts.
Let $G$ be the acceptability graph corresponding to~$\Pot$ with vertex set~$V$,
which also denotes the agent set,
and with edge set~$E$.
The \emph{cost} of an edge~$\{x,y\}$ is the sum of the ranks of each endpoint in the preference list of the other: $\rank_{x}(y)+\rank_y(x)$. 
We call an edge~$e\coloneqq \{x,y\}$ a \emph{\cheapedge} if it has \emph{cost} zero, \emph{i.e.}\ $\rank_{x}(y)+\rank_y(x)=0$,
otherwise it is a \emph{\expedge} if the cost does not exceed~$\egalcost$.
We ignore all edges with cost exceeding~$\egalcost$.
Note that no such edge belongs to or is blocking any stable matching with egalitarian cost at most~$\egalcost$.
To distinguish between \cheapedge{s} and \expedge{s}, we construct two subsets~$\cheapE$ and $\expE$
such that $\cheapE$ consists of all \cheapedge{s}, 
\emph{i.e.}\ $\cheapE \coloneqq \{\{x,y\}\in E \mid \rank_{x}(y)+\rank_y(x)=0\}$,
and $\expE$ consists of all \expedge{s}, 
\emph{i.e.}\ $\expE\coloneqq \{\{x,y\} \in E\mid 0<\rank_x(y)+\rank_y(x) \le \egalcost\}$.

\ifshort
\smallskip
\noindent \textbf{\Phase\ 1.}
\else
\subparagraph{\Phase\ 1.}
\fi
We construct a graph~$G_1=(V,E_1)$ from $G$ with vertex set~$V$ and with edge set~$E_1\coloneqq \cheapE \cup \expE$.
\iflong
We can compute $G_1$ in $O(\egalcost \cdot n^2)$~time with $n$ being the number of agents.
As already noted, no edge with cost exceeding $\egalcost$
belongs to or is blocking any solution.
\fi 
The following is easy to see.

\begin{lemma}\label[lemma]{lem:stage1-preserve}
  If $\Pot$ has a stable matching~$M$ with egalitarian cost at most~$\egalcost$,
  then $M \subseteq E_1$.
\end{lemma}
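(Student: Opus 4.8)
The plan is to exploit the fact that the egalitarian cost decomposes additively over the pairs of the matching, with each summand non-negative; hence no single pair of a cheap stable matching can itself be expensive. First I would recall that every pair $\{x,y\} \in M$ is an edge of the acceptability graph $G$, since by definition $M \subseteq E(G)$. Thus it suffices to show that the cost $\rank_x(y) + \rank_y(x)$ of each such pair is at most $\egalcost$: if this holds, then $\{x,y\}$ lies in $\cheapE$ when its cost is zero and in $\expE$ when its cost is positive (but still at most $\egalcost$), so in either case $\{x,y\} \in E_1 = \cheapE \cup \expE$, which is exactly what we need.

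The core step is to rewrite $\egalcost(M) = \sum_{i \in V} \rank_i(M(i))$ by grouping the summands according to matched pairs. For a pair $\{x,y\} \in M$ we have $M(x) = y$ and $M(y) = x$, so the two terms $\rank_x(y)$ and $\rank_y(x)$ together contribute precisely the cost of the edge $\{x,y\}$. Writing $V_\bot$ for the set of agents left unmatched by $M$ (which is empty if $M$ is perfect, as we may assume by \cref{lem:perfect}), this yields
\begin{align*}
\egalcost(M) = \sum_{\{x,y\} \in M}\bigl(\rank_x(y) + \rank_y(x)\bigr) + \sum_{i \in V_\bot} |V_i|.
\end{align*}
Every summand on the right-hand side is non-negative, since ranks and preference-list lengths are non-negative.

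Consequently, for any fixed pair $\{x,y\} \in M$ its cost $\rank_x(y) + \rank_y(x)$ is bounded from above by the whole sum, which equals $\egalcost(M) \le \egalcost$ by hypothesis. Hence each pair of $M$ has cost at most $\egalcost$ and therefore belongs to $E_1$, giving $M \subseteq E_1$. I do not expect a genuine obstacle here: the claim is an immediate consequence of the additive structure of the egalitarian cost together with non-negativity of ranks. The only points worth stating carefully are that a pair's joint contribution to $\egalcost(M)$ is exactly its edge cost, and that any unmatched agents contribute further non-negative terms, so that the bound on individual edge costs holds even without invoking perfectness.
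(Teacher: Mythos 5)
Your proof is correct and matches the paper's reasoning: the paper states this lemma without a formal proof (``The following is easy to see''), justifying it by the observation that no edge of cost exceeding~$\egalcost$ can belong to a solution, which is precisely the additive-decomposition argument you spell out. Your write-up is simply the careful formalization of that observation, including the correct handling of unmatched agents' non-negative contributions.
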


\noindent Observe also that a \cheapedge\ cannot block any other edge
because the agents in a \cheapedge{} already obtain their most acceptable agents. Thus, we have the following.

\begin{lemma}\label[lemma]{lem:only-expensive}
  If two edges in $E_1$ block each other, then they are both \expedge{s}.
\end{lemma}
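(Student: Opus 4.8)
The statement to prove is \cref{lem:only-expensive}: if two edges in $E_1$ block each other, then both are \expedge{s}.

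The plan is to prove the contrapositive, or equivalently to argue directly by showing that a \cheapedge{} can never be one of two edges that block each other. Recall that $E_1 = \cheapE \cup \expE$, so every edge in $E_1$ is either a \cheapedge{} or a \expedge{}. Suppose two edges $e = \{u,v\}$ and $e' = \{u',v'\}$ in $E_1$ block each other. By the definition given just before the lemma, this means (up to relabeling) that $u' \succ_u v$ and $u \succ_{u'} v'$; that is, $u$ strictly prefers the endpoint $u'$ of the other edge to its own partner $v$, and symmetrically $u'$ strictly prefers $u$ to its own partner $v'$.

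The key observation is exactly the one the paper flags: a \cheapedge{} has cost zero, i.e.\ $\rank_x(y) + \rank_y(x) = 0$, which forces $\rank_x(y) = \rank_y(x) = 0$. In other words, each endpoint of a \cheapedge{} is a \emph{most acceptable} agent for the other, so neither endpoint can strictly prefer anyone else to its partner. First I would suppose for contradiction that one of the two mutually blocking edges, say $e = \{u,v\}$, is a \cheapedge{}. Then $\rank_u(v) = 0$, meaning $v$ is a maximal element of $u$'s preference list and there is no agent $w$ with $w \succ_u v$. But the blocking condition requires $u' \succ_u v$ for the endpoint $u'$ of $e'$, contradicting $\rank_u(v) = 0$. (If instead the relevant roles are played by $v$ rather than $u$, the symmetric argument using $\rank_v(u) = 0$ applies; since $e$ and $e'$ block each other, whichever endpoint of the \cheapedge{} is supposed to strictly prefer an agent in the other edge cannot do so.) Hence no \cheapedge{} can participate in a mutually-blocking pair of edges, so both $e$ and $e'$ must lie in $\expE$, i.e.\ both are \expedge{s}.

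I do not expect any real obstacle here; this is a short observation rather than a substantial argument, which matches the paper's remark that it is ``easy to see.'' The only point requiring a little care is matching up which endpoint of the \cheapedge{} is forced to have rank zero with the endpoint that the blocking definition claims has a strictly preferred alternative, but the symmetry built into the mutual-blocking condition ($u' \succ_u v$ \emph{and} $u \succ_{u'} v'$) ensures that whichever endpoint of the \cheapedge{} is invoked, its partner is already most acceptable to it, contradicting the strict preference demanded by blocking.
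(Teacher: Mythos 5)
Your proof is correct and follows exactly the paper's reasoning: the paper dispenses with this lemma via the single observation that a \cheapedge{} cannot block any other edge because both of its endpoints already have their most acceptable agents (rank~$0$ partners), which is precisely your contradiction argument. Since $E_1 = \cheapE \cup \expE$, ruling out \cheapedge{s} from any mutually blocking pair immediately gives the claim, just as you conclude.
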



\newcommand{\worstrank}{\ensuremath{\operatorname{worst\_rank}}}
\ifshort
\noindent \textbf{\Phase\ 2.} 
\else
\subparagraph{\Phase\ 2.}
\fi
In this phase, comprising Lines~\ref{ln:ph2-1}--\ref{ln:ph2-3} in \cref{alg:egalmatch}, we remove from $G_1$ some of the \expedge{s}
that \block\ each other (by \cref{lem:only-expensive},
no \cheapedge{s} are \block{ing} any other edge). 
For technical reasons, we distinguish two types of \expedge{s}: 
We say that a \expedge~$e$ with $e\coloneqq \{u,v\}$ is \emph{\critical for its endpoint~$u$} if 
the largest possible rank of $v$ over all linearizations of the preference list of $u$ exceeds $\egalcost$,
\emph{i.e.}
$|\{x \in V_u\setminus \{v\} \mid  x\succeq_u v\}| > \egalcost$. 
Otherwise,~$e$ is \emph{\fine for~$u$}. If an edge is \critical\ for at least one endpoint, then we call it \emph{\critical} and otherwise \emph{\fine}. Observe that a \critical{} edge could still belong to a solution. If two edges $e$ and $e'$ block each other due to the blocking pair~$\{u, u'\}$ with $u \in e, u' \in e'$ such that $e'$ is \fine\ for $u'$, then we say that $e$ is \emph{\fine{ly} blocking}~$e'$ (at the endpoint~$u'$). Note that \block{ing} is symmetric while \fine{ly} \block{ing} is not.

Intuitively, we want to distinguish the solution edges from all edges blocked by the solution.
There is a ``small'' number of \fine\ edges blocked by the solution, so we can easily distinguish between them. For the \critical\ edges, we do not have such a bound; we deal with the \critical\ edges blocked by the solution in \Phase~3 in some other way.



\ifshort\begin{lemma}[\appsymb]\else
\begin{lemma}\fi\label[lemma]{lem:blocked-bound}
  Let $M$ be a stable matching with \egalcostn\ at most~$\egalcost$.
  In $G_1$, at most $\egalcost^3$ edges are \fine{ly} \block{ed} by some edge in~$M$.
\end{lemma}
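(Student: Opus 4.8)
The plan is to prove the bound by a direct charging argument: I would assign each \fine{ly} blocked edge to an endpoint of a matched edge whose partner that endpoint strictly outranks, and then use the \egalcostn{} budget to limit how many such ``charging slots'' can exist.

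First I would observe that the \egalcostn{} controls which agents can sit on the matched side of a blocking pair. If $e=\{u,v\}\in M$ \fine{ly} blocks $e'=\{u',v'\}$ via the blocking pair $\{u,u'\}$ with $u\in e$ and $u'\in e'$, then by the definition of blocking we have $u'\pref_u v$ and $v=M(u)$; hence $u$ strictly prefers $u'$ to its partner, so $\rank_u(M(u))\ge 1$. Since $M$ has \egalcostn{} at most $\egalcost$, we have $\sum_{u}\rank_u(M(u))\le\egalcost$, where the sum ranges over all matched agents (unmatched agents only increase the total). For a fixed matched agent~$u$, the agents~$u'$ with $u'\pref_u M(u)$ are exactly the $\rank_u(M(u))$ agents that $u$ ranks above its partner. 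Summing over all matched agents, the number of pairs $(u,u')$ that can serve as the blocking pair of a \fine{} blocking with $u\in e\in M$ is therefore at most $\sum_u\rank_u(M(u))\le\egalcost$.

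Second I would bound, for each such pair $(u,u')$, the number of edges $e'=\{u',v'\}$ that it can \fine{ly} block at~$u'$. Any such~$e'$ satisfies $u\pref_{u'}v'$ and is \fine{} for~$u'$, i.e.\ $|\{x\in V_{u'}\setminus\{v'\}\mid x\succeq_{u'}v'\}|\le\egalcost$. I claim that at most $\egalcost$ choices of $v'$ are possible. Suppose $v'_1\succeq_{u'}\dots\succeq_{u'}v'_m$ are all admissible choices, listed in order of $u'$'s preference. For the least preferred one, $v'_m$, each of $v'_1,\dots,v'_{m-1}$ together with $u$ lies in $\{x\in V_{u'}\setminus\{v'_m\}\mid x\succeq_{u'}v'_m\}$; note that $u$ is distinct from every $v'_i$ because $u\pref_{u'}v'_i$. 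Thus this set has at least $m$ elements, and the fact that $e'$ is \fine{} for~$u'$ forces $m\le\egalcost$.

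Combining the two bounds, the number of incidences consisting of a \fine{ly} blocking edge of $M$ and a \fine{ly} blocked edge, and hence the number of distinct edges \fine{ly} blocked by some edge of~$M$, is at most $\egalcost\cdot\egalcost=\egalcost^2\le\egalcost^3$. The step I expect to be the crux is the second one: since preferences may contain ties, one preference level of~$u'$ could a priori contain many agents, so it is precisely the \critical/\fine{} dichotomy that keeps $v'$ from ranging over a large tie class — the definition of a \fine{} edge caps the number of agents weakly above $v'$ by~$\egalcost$, and folding the blocking witness~$u$ into that count is what yields the clean $\egalcost$ factor and keeps the bound valid even for small~$\egalcost$.
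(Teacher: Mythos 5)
Your proof is correct, and it in fact establishes the stronger bound of $\egalcost^2$ \fine{ly} \block{ed} edges; the route is genuinely different from the paper's. The paper argues edge-by-edge: $M$ contains at most $\egalcost$ \expedge{s}; for a fixed \expedge~$e=\{u,v\}\in M$, every edge \block{ed} by~$e$ yields a blocking endpoint~$u'$ with $\rank_u(u')\le\rank_u(v)-1$, and since $e$ has cost at most~$\egalcost$ there are at most $\egalcost-2$ such endpoints; finally, fineness caps the number of \block{ed} edges per endpoint at~$\egalcost$, giving $\egalcost\cdot(\egalcost-2)\cdot\egalcost<\egalcost^3$. You instead charge the witness pairs~$(u,u')$ directly to the \egalcostn{}: a matched agent~$u$ admits exactly $\rank_u(M(u))$ candidates~$u'$ with $u'\pref_u M(u)$, so summing over matched agents there are at most $\egalcost$ witness pairs in total. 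This collapses the paper's first two factors (number of \expedge{s} times endpoints per edge) into a single factor of~$\egalcost$, which is what buys the quadratic bound; a side benefit is that you never need \cref{lem:only-expensive}, since \cheapedge{s} of~$M$ contribute nothing to the charge. The two proofs share the fineness step, but yours is the more rigorous rendition of it: the paper merely asserts that at most $\egalcost$ edges incident with~$u'$ can be \fine{ly} \block{ed}, whereas your chain argument---placing $v'_1,\dots,v'_{m-1}$ together with the witness~$u$ inside $\{x\in V_{u'}\setminus\{v'_m\}\mid x\succeq_{u'} v'_m\}$---spells out exactly why ties cannot inflate this count. The sharper bound is not purely cosmetic either: it would allow a $(|\expE|,\egalcost,\egalcost^2)$-cover-free family in Phase~2 of \cref{alg:egalmatch}, shrinking the family that is enumerated, although the overall form of the running time in \cref{thm:ESR} is unaffected.
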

\iflong
\begin{proof}
  By \cref{lem:only-expensive}, if an edge~$e'$ from $G_1$
  is \block{ed} by an edge~$e$ in~$M$, then both edges~$e'$ and $e$ are \expedge{s}.
  Pick a \expedge~$e$ in~$M$, denote its endpoints by $u$ and $v$, and
  let $F_e$ denote the set of all edges that are \fine{ly} \block{ed} by $e$.
  Recall that each edge~$e' \coloneqq \{u',v'\}\in F_e$ that is \block{ed} by~$e$ induces a blocking pair consisting of some endpoint of~$e$, say $u$,
  and some endpoint of~$e'$, say $u'$.
  By the definition of induced blocking pairs, it follows that $\rank_u(u') \le \rank_u(v) - 1$.
  Thus, since $e$ has cost at most $\egalcost$, meaning that $\rank_u(v)+\rank_{v}(u) \le \egalcost$,
  there are at most $\egalcost-2$ different endpoints of the \expedge{s} in $F_e$,
  which each form with an endpoint of $e$ a blocking pair.
  Since each edge~$e'$ in $F_e$ is \fine{ly} blocked by $e$ it follows that
  $|\{x \in V_{u'}\setminus \{v'\} \mid  x\succeq_{u'}v'\}| \le \egalcost$.
  Thus, in $F_e$, at most $\egalcost$ edges are incident with endpoint~$u'$ and could be \fine{ly} \block{ed} by $e$ due to $\{u',z\}$ with $z\in e$. 
  In total, we obtain that $|F_e|\le \egalcost\cdot (\egalcost-2)$.
  Since $M$ has \egalcostn\ at most~$\egalcost$,
  it has at most $\egalcost$ \expedge{s} that could \block\ some other edge (recall that \cheapedge{s} do not block any edges).
  Hence, there
  are in total $\egalcost \cdot (\egalcost - 2) \cdot \egalcost < \egalcost^3$ edges which are \fine{ly} \block{ed} by some edge in $M$.
\end{proof}
\fi
\noindent
\looseness=-1 
Let $M'\coloneqq M\cap \expE$ be the set of all \expedge{s} in some solution~$M$ and let $B_M$ be the set of all edges \fine{ly} \block{ed} by some edge in~$M$.
By the definition of \expedge{s} and by \cref{lem:blocked-bound},
it follows that $|M'| \leq \egalcost$ and $|B_M| \le \egalcost^3$.
In order to identify and delete all edges in $B_M$ 
we apply random separation. 
Compute a $(|\expE|, \egalcost, \egalcost^3)$-cover-free family~$\calF^\textsf{exp}$ over the universe~$\expE$.
For each member of $\expF$, perform all the computations below (in this \phase\ and in \Phase~3).
By the properties of cover-free families, $\calF^\textsf{exp}$ contains a \emph{good} member~$E'$ that ``separates'' $M'$ from $B_M$,
\emph{i.e.}\ $M' \subseteq E'$ and $B_M \subseteq \expE \setminus E'$.
Formally, we call a member~$E' \in \calF^\textsf{exp}$ \emph{good} if there is a solution~$M$  
such that each \expedge{} in $M$ belongs to $E'$,
and each edge that is \fine{ly} \block{ed} by~$M$ belongs to $\expE\setminus E'$.
We also call $E'$ \emph{good for} $M$.
By the property of cover-free families, if there is a solution~$M$, then 
$\expF$ contains a member~$E'$ which is good for $M$. 
In the following we present two data reduction rules that delete edges and show their correctness. By \emph{correctness} we mean that, if some member~$E' \in \expF$ is good, then the corresponding solution is still present after the edge deletion.

\looseness=-1 Recall that the goal was to compute a graph that contains 
all edges from a solution and some other edges 
such that no two edges in the graph block each other. 
Observe that we can ignore the edges in $\expE\setminus E'$, because, if $E'$ is good, then it contains all \expedge{s} in the corresponding solution\iflong; note that $|E'|$ could be unbounded\fi. 
This implies the correctness of \ifshort{}the first part of \fi{}the following reduction rule. \ifshort{}The correctness for the second part follows from the definition of being good.\fi
\begin{drule}\label[drule]{rule:delete-red-edges}
  Remove all edges in $\expE \setminus E'$ from~$E_1$.
\iflong\end{drule}
Apply also the following reduction rule.
\begin{drule}\fi\label[drule]{rule:delete-blocking-green-edges}
  If there are two edges $e, e' \in E'$ that are \fine{ly} \block{ing} each other, then remove both $e$ and $e'$ from~$E_1$.
\end{drule}
\iflong
\begin{proof}[Proof of the correctness of \cref{rule:delete-blocking-green-edges}]
  Let $M$ be a solution for which $E'$ is good.
  Suppose, towards a contradiction, that $e\in M$.
  Since $E'$ is good, no edge is \fine{ly} blocked by $M$ belongs to $E'$.
  Thus, $e'\notin E'$---a contradiction.
  Analogously, we deduce that $e'\notin M$.
  Thus, we can safely delete both $e$~and~$e'$.
\end{proof}
\fi


\noindent Let $G_2 = (V, E_2)$ be the graph obtained from~$G_1$ by exhaustively applying \cref{rule:delete-red-edges\iflong,rule:delete-blocking-green-edges\fi}. By the goodness of $E'$ and by the correctness of \cref{rule:delete-red-edges\iflong,rule:delete-blocking-green-edges\fi}, we have the following.


\begin{lemma}\label[lemma]{lem:stage2-preserve}
  If there is a stable matching~$M$ with \egalcostn\ at most~$\egalcost$,
  then $\expF$ contains a member~$E'$ such that the edge set~$E_2$ of $G_2$ defined for $E'$ 
  contains all edges of~$M$.
\end{lemma}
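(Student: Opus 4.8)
The plan is to first extract a \emph{good} member~$E' \in \expF$ from the separation property of the cover-free family, and then to check that, for this $E'$, neither \cref{rule:delete-red-edges} nor \cref{rule:delete-blocking-green-edges} ever deletes an edge of~$M$, so that the resulting edge set~$E_2$ retains all of~$M$. To set this up I would write $M' \coloneqq M \cap \expE$ and let $B_M$ denote the set of edges that are \fine{ly} \block{ed} by some edge of~$M$. Three facts feed into the cover-free property. First, $|M'| \le \egalcost$, since every \expedge\ in~$M$ contributes at least one unit to the \egalcostn\ of~$M$, which is at most~$\egalcost$. Second, $|B_M| \le \egalcost^3$, which is exactly \cref{lem:blocked-bound}. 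Third, $M' \cap B_M = \emptyset$: if some $e \in M$ were \fine{ly} \block{ed} by an edge $f \in M$, then $e$ and $f$ would induce a blocking pair for~$M$, contradicting its stability. Moreover, by \cref{lem:only-expensive} every edge of $B_M$ is a \expedge, so $M' \cup B_M \subseteq \expE$, i.e.\ both sets live in the universe of~$\expF$.

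Since $\expF$ is a $(|\expE|, \egalcost, \egalcost^3)$-cover-free family, I would invoke its separation property to obtain a member~$E'$ with $M' \subseteq E'$ and $B_M \cap E' = \emptyset$. Concretely, one pads $M'$ to a set~$S'$ of size~$\egalcost$ and pads $S' \cup B_M$ to a set~$S$ of size~$\egalcost + \egalcost^3$ using edges of $\expE \setminus (M' \cup B_M)$; the defining property then yields $E'$ with $S \cap E' = S'$, whence $M' \subseteq S' \subseteq E'$ while $B_M \subseteq S \setminus S'$ stays disjoint from~$E'$. This $E'$ is good for~$M$.

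It remains to show preservation under the two rules. By \cref{lem:stage1-preserve} we have $M \subseteq E_1$. \cref{rule:delete-red-edges} deletes only edges of $\expE \setminus E'$; but a \expedge\ of~$M$ lies in $M' \subseteq E'$, and a \cheapedge\ of~$M$ lies in $\cheapE$, hence outside~$\expE$, so no $M$-edge is removed. \cref{rule:delete-blocking-green-edges} deletes only a pair $e, e' \in E'$ that are \fine{ly} \block{ing} each other; if, say, $e \in M$, then the edge $e'$ that it \fine{ly} \block{s} would lie in $B_M$ and thus in $\expE \setminus E'$, contradicting $e' \in E'$. Hence no edge of~$M$ sits in such a \block{ing} pair, and $M \subseteq E_2$.

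The main obstacle is the bookkeeping around the cover-free separation, namely making the exact-cardinality definition of~$\expF$ applicable through the padding above (and noting the trivial case $|\expE| < \egalcost + \egalcost^3$), together with handling the \emph{asymmetry} of \fine{ly} \block{ing} in the \cref{rule:delete-blocking-green-edges} step: I must pin down which direction of the mutual blocking is the one whose \fine\ endpoint forces $e'$ into $B_M$.
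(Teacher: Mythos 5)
Your proof is correct and follows essentially the same route as the paper: bound $|M'|\le\egalcost$ and $|B_M|\le\egalcost^3$, use the cover-free family to obtain a good member~$E'$ separating $M'$ from~$B_M$, and then verify that \cref{rule:delete-red-edges,rule:delete-blocking-green-edges} never delete an edge of~$M$ for this~$E'$. In fact you are somewhat more careful than the paper, which leaves the padding to exact cardinalities, the disjointness $M'\cap B_M=\emptyset$, and the containment $B_M\subseteq\expE$ (via \cref{lem:only-expensive}) implicit.
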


By \cref{lem:only-expensive} and since all pairs of edges that are \fine{ly} \block{ing} each other are deleted by \cref{rule:delete-blocking-green-edges},
we have the following.
\begin{lemma}\label[lemma]{lem:only-critical}
  If two edges in~$G_2$ \block\ each other due to a blocking pair~$\{u, u'\}$, then one of the edges is \critical{} for~$u$ or~$u'$.
\end{lemma}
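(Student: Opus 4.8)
The plan is to argue by contradiction, leaning on the two reduction rules that define~$G_2$ together with \cref{lem:only-expensive}, which together with the preceding paragraph essentially spell out the contrapositive of the applicability condition of \cref{rule:delete-blocking-green-edges}. Suppose two edges~$e$ and~$e'$ of~$G_2$ block each other due to the blocking pair~$\{u,u'\}$, where $u \in e$ and $u' \in e'$, and suppose towards a contradiction that neither $e$ is \critical{} for~$u$ nor $e'$ is \critical{} for~$u'$; equivalently, $e$ is \fine{} for~$u$ and $e'$ is \fine{} for~$u'$.

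First I would observe that, since $e$ and $e'$ block each other, \cref{lem:only-expensive} forces both of them to be \expedge{s}. Because $G_2$ is obtained from~$G_1$ by exhaustively applying \cref{rule:delete-red-edges}, every \expedge{} lying in $\expE \setminus E'$ has been removed; as $e$ and $e'$ survive in~$G_2$, both must therefore belong to~$E'$.

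Next I would unpack the definition of \fine{ly} \block{ing}. Since $e$ and $e'$ block each other due to~$\{u,u'\}$ and, by assumption, $e'$ is \fine{} for~$u'$, the edge~$e$ is \fine{ly} \block{ing}~$e'$ at the endpoint~$u'$; symmetrically, since $e$ is \fine{} for~$u$, the edge~$e'$ is \fine{ly} \block{ing}~$e$ at the endpoint~$u$. Hence $e$ and $e'$ are \fine{ly} \block{ing} each other, and both belong to~$E'$. But then \cref{rule:delete-blocking-green-edges} would have removed both $e$ and~$e'$ from~$E_1$, contradicting the fact that they survive in~$G_2$. Therefore the assumption fails, and one of the edges must be \critical{} for~$u$ or~$u'$, as claimed.

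I expect no substantial obstacle here: the only care needed is to match the two endpoints of the blocking pair~$\{u,u'\}$ correctly with the two edges (so that ``\fine{} for~$u$'' refers to~$e$ and ``\fine{} for~$u'$'' refers to~$e'$), and to invoke \cref{lem:only-expensive} at the outset so that both edges are guaranteed to be \expedge{s} and hence genuinely subject to the edge-removal rules defining~$G_2$.
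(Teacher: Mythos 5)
Your proof is correct and follows exactly the paper's route: the paper establishes this lemma in a single sentence by combining \cref{lem:only-expensive} with the exhaustive application of \cref{rule:delete-blocking-green-edges}, which is precisely the contradiction you spell out. Your additional observation that surviving edges must lie in~$E'$ (via \cref{rule:delete-red-edges}, so that \cref{rule:delete-blocking-green-edges} is genuinely applicable) is a detail the paper leaves implicit, but it is the same argument.
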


\ifshort
\noindent \textbf{\Phase\ 3.}
\else
\subparagraph{\Phase\ 3.}
\fi
In \cref{ln:ph3-3} of \cref{alg:egalmatch} we remove from~$G_2$ the remaining (\critical) edges that do not belong to $M$ but are blocked by some other edges.
This includes the edges that are blocked by $M$.
While the number of edges blocked by~$M$ could still be unbounded,
we show that there are only $O(\egalcost^2)$ agents due to which an edge could be blocked by $M$.
The idea here is to identify such agents, helping to find and delete edges blocked by $M$ or blocking some other edges.
We introduce one more notion.
Consider an arbitrary matching~$N$ (\emph{i.e.}\ a set of disjoint pairs of agents) of $G_2$.
Let $e \in N$ and $e' \in E_2 \setminus N$ be two edges.
If they induce a blocking pair~$\{u,u'\}$ with $u\in e$ and $u'\in e'$, 
then we say that $u'$ 
is a \emph{\culprit} of~$N$. 
Similar to the proof of \cref{lem:blocked-bound},
we obtain the following upper bound on the number of \culprit{s} with respect to a solution.
\ifshort\begin{lemma}[\appsymb]\else
  \begin{lemma}\fi\label[lemma]{lem:culprit-bound}
    Let $M$ be a stable matching.
    Then, each culprit of $M$ is incident with some edge in $M$.
    If $M$ has egalitarian cost at most~$\egalcost$, then it admits at most~$\egalcost^2$ \culprit{s}.
\end{lemma}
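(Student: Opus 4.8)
The statement has two parts, and my plan is to prove them in order, reusing the structure of the earlier \cref{lem:blocked-bound} proof.

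For the first claim, suppose $u'$ is a culprit of $M$, witnessed by edges $e \in M$ and $e' \in E_2 \setminus M$ that induce a blocking pair $\{u, u'\}$ with $u \in e$ and $u' \in e'$. I first observe that by the definition of an induced blocking pair, $u$ must strictly prefer $u'$ to its $M$-partner, and symmetrically $u'$ must strictly prefer $u$ to its $M$-partner. Now I argue that $u'$ cannot be unmatched by $M$: since $M$ is perfect (by \cref{lem:perfect} we may assume this throughout), every agent, including $u'$, is incident with some edge in $M$. Hence the first claim is immediate once perfectness is invoked. The content worth spelling out is merely that the culprit $u'$ is one of the agents involved in the blocking pair, and every agent sits on an edge of the perfect matching $M$.

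For the second claim, I would mirror the counting argument of \cref{lem:blocked-bound}. The key point is that any culprit $u'$ of $M$ arises from a blocking pair $\{u, u'\}$ where $u$ is an endpoint of some \emph{\expedge} $e \in M$: a \cheapedge{} cannot block anything (its agents already have their most acceptable partners, by \cref{lem:only-expensive}), so the edge $e$ of $M$ that participates in the blocking pair must be a \expedge. Since $M$ has \egalcostn{} at most $\egalcost$, the number of \expedge{s} in $M$ is at most $\egalcost$ (each contributes at least one unit of cost). Fix one such \expedge{} $e = \{u,v\} \in M$. For $u$ to form a blocking pair with some culprit $u'$, we need $u' \succ_u v$, so $u'$ ranks strictly above $v$ in the preference list of $u$; there are at most $\rank_u(v)$ such agents. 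Summing the two endpoints of $e$, the total number of culprits arising from $e$ is at most $\rank_u(v) + \rank_v(u) \le \egalcost$. Multiplying over the at most $\egalcost$ \expedge{s} in $M$ gives at most $\egalcost \cdot \egalcost = \egalcost^2$ culprits in total.

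The main obstacle I anticipate is bookkeeping rather than conceptual: making sure the per-edge count is the number of distinct culprit agents and that summing over edges does not double-count, so that the bound stays at $\egalcost^2$ rather than something larger. The clean way to handle this is to charge each culprit $u'$ to the specific endpoint $u$ of the specific \expedge{} $e \in M$ whose blocking with $u'$ makes $u'$ a culprit, and to note that for a fixed endpoint $u$ the culprits lie among the strictly-better-than-$M(u)$ agents, bounded by $\rank_u(M(u))$. Since $\sum_{i \in V}\rank_i(M(i)) = \egalcost(M) \le \egalcost$ and \cheapedge{s} contribute zero and no culprits, the grand total of culprits is at most $\sum_{i \in V}\rank_i(M(i)) \le \egalcost \le \egalcost^2$, which in fact gives the sharper bound $\egalcost$; stating it as $\egalcost^2$ is a safe overestimate sufficient for the algorithm's \Phase~3 cover-free family of width $\egalcost^2 + 2\egalcost$.
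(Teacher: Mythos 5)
Your proof is correct, and on the second (counting) part it takes essentially the paper's route: both arguments note that only \expedge{s} can block (\cref{lem:only-expensive}), that $M$ has at most~$\egalcost$ \expedge{s} since each costs at least one, and then bound, for each matched \expedge~$e=\{u,v\}$, the agents that can form an induced blocking pair with $u$ or with~$v$ by the number of agents strictly preferred to the partner, \emph{i.e.}\ by $\rank_u(v)+\rank_v(u)$. The paper charges at most $\egalcost-2$ \culprit{s} per \expedge{} and multiplies by~$\egalcost$; your charging by the edge cost and then summing over the edges of~$M$ gives $\sum_{\{u,v\}\in M}\bigl(\rank_u(v)+\rank_v(u)\bigr)\le\egalcost(M)\le\egalcost$, a sharper bound the paper does not record (either bound suffices for the $(|V|,2\egalcost,\egalcost^2+2\egalcost)$-cover-free family). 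The genuine difference is in the first part. The paper never invokes perfectness there: it argues that if a \culprit~$u'$ of~$M$ were unmatched, then the pair~$\{u,u'\}$, where $u$ is the endpoint of the matched edge witnessing that $u'$ is a \culprit, would be an \emph{actual} blocking pair of~$M$ (an unmatched agent prefers any acceptable agent to having no partner, and $u$ strictly prefers $u'$ to~$M(u)$), contradicting stability. You instead lean on the convention that solutions are perfect via \cref{lem:perfect}. That suffices for both places where the lemma is applied, since there $M$ is always a solution, but it proves a formally weaker statement than the lemma as written, which concerns an \emph{arbitrary} stable matching with no cost bound and no perfectness requirement; the paper's stability argument buys exactly that extra generality at no extra length, and keeps the lemma independent of the dummy-agent reduction.
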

\iflong
\begin{proof}
  For the first statement, let $u'$ be a \culprit{} of $M$. 
  Suppose, towards a contradiction, that $u'$ is not incident with any edge in $M$.
  This means that $u'$ is unmatched under $M$.
  However, by the definition of \culprit{s}, $M$ contains an edge~$e$ such that $u'$ and one of the endpoint of $e$ would form a blocking pair for $M$---a contradiction to $M$ being stable.

  For the second statement, observe that $M$ contains at most $\egalcost$ \expedge{s}.
  Each edge that is \block{ed} by an edge in~$M$ is \expensive\ and is \block{ed} by a \expedge\ in~$M$. 
  Pick a \expedge~$e$ in~$M$ and consider the set~$F_e$ of all edges \block{ed} by~$e$. 
  Recall that each edge~$e'$ \block{ed} by~$e$ would induce a blocking pair~$\{u,u'\}$ with $u\in e$ and $u'\in e'$. 
  By the same reasoning as used for \cref{lem:blocked-bound},
  we obtain that $F_e$ has at most $\egalcost-2$ endpoints.
Hence, $M$ has in total $\egalcost \cdot (\egalcost - 2) < \egalcost^2$
  \culprit{s}.
\end{proof}
\fi

\newcommand{\culset}{\mathsf{CI}}
\newcommand{\zhset}{\mathsf{Z}}
\newcommand{\crset}{\mathsf{R}}
\iflong
\noindent Consider a solution~$M$ and let $\culset(M)=\{v\in V \mid v$ is a culprit of or incident with some  \expedge{} of $M\}$. 
\else
\looseness=-1\noindent Consider a solution~$M$ and let $\culset(M)=\{v\in V \mid v$ is a culprit of or incident with some  \expedge{} of $M\}$. 
\fi
By \cref{lem:culprit-bound} and since $M$ has at most~$\egalcost$ \expedge{s}, it follows that $|\culset(M)|\le \egalcost^2+2\egalcost$. 
\iflong
Our goal was to delete edges without disturbing the solution~$M$ such that no remaining edges are blocking each other.
To achieve this, we
\else
We aim to
\fi
identify in $\culset(M)$ a subset~$\crset(M)$ of agents incident with a \criticaledge\ in~$M$, \emph{i.e.}\ $\crset(M)=\{v\in \culset(M) \mid \{v,w\} \in M $ with $\{v,w\}$ being \critical{} for $v\}$. 
Since $M$ has at most $2\egalcost$ \expedge{s}, it follows that $|\crset(M)|\le 2\egalcost$.
To ``separate'' $\crset(M)$ from $\culset(M)$,
we compute a $(|V|, 2\egalcost, \egalcost^2 + 2 \egalcost)$-cover-free family~$\calC$ 
\iflong on the set~$V$ of agents.
\else
on the set~$V$.
\fi
We call a member~$V' \in \calC$ \emph{good} if
there is a solution~$M \subseteq E_2$ such that 
\iflong
for each agent~$v\in \culset(M)$ the following holds.
\begin{compactitem}[-]
\item If $v\in \crset(M)$, then $v \in V'$, \emph{i.e.}\ if $v$ is incident with
an edge in $M$ that is \critical{} for~$v$, then $v \in V'$;
\item otherwise, $v\in V\setminus V'$, meaning that if $v$ is incident with a \cheapedge{} or with an edge in $M$ that is \fine{} for $v$, then $v\in V\setminus V'$. 
\end{compactitem}
\else
$\crset(M)\subseteq V'$ and $(\culset(M)\setminus \crset(M))\subseteq V\setminus V'$.
\fi

\iflong \noindent Since $|\crset(M)|\le 2\egalcost$ and $|\culset(M)\setminus \crset(M)|\le \egalcost^2+2\egalcost$, by a similar reasoning as given for \Phase~2 and by the properties of cover-free families,
\else
By a similar reasoning as given for \Phase~2 and by the properties of cover-free families,
\fi
if there is a solution~$M \subseteq E_2$, then $\calC$ contains a good member~$V'$. 
\iflong We describe two reduction rules to delete some edges from $G_2$ and show their correctness, \emph{i.e.}\ if some member~$V' \in \calC$ is good, then the rules do not delete any edge of a corresponding solution.
\else
For this member, the following two reduction rules will not destroy the solution.
\fi
\ifshort\begin{drule}[\appsymb]
\else\begin{drule}\fi\label[drule]{rule:delete-mismatched-edges}
  For each agent~$y \in V\setminus V'$, delete all incident edges
  that are \critical{} for~$y$.
\end{drule}
\iflong
\begin{proof}[Proof of the correctness of \cref{rule:delete-mismatched-edges}]
  Assume that $V'$ is good and let $M \subseteq E_2$ be a corresponding solution.
  Since $y\notin V\setminus V'$, it follows that $y$ is not incident with an edge in $M$ that is critical for~$y$.
  Thus, we can safely delete all incident edges that are \critical for~$y$.
\end{proof}
\fi
After having exhaustively applied \cref{rule:delete-mismatched-edges}, 
we use the following reduction rule.
\iflong Recall that by \cref{lem:only-critical}, if two
edges in~$E_2$ block each other, then one of them
is \critical. 
\fi
\ifshort\begin{drule}[\appsymb]
\else\begin{drule}\fi\label[drule]{rule:delete-blocking-critical-edges}
  If $E_2$ contains two edges~$e$ and $e'$ that induce a blocking pair~$\{u,u'\}$ with 
  $u \in e$ and $u' \in e'$ such that $e$ is critical for~$u$, then
  remove~$e'$ from~$E_2$.
\end{drule}
\iflong
\begin{proof}[Proof of the correctness of \cref{rule:delete-blocking-critical-edges}]
  Assume that $V'$ is good and let~$M \subseteq E_2$ be a corresponding solution. 
  Towards a contradiction, suppose that $e' \in M$. Hence, $u$ is
  a \culprit\ of~$M$. 
  By the first statement of \cref{lem:culprit-bound}, let $\{u,w\}\in M$ be an incident edge.
  Since \cref{rule:delete-mismatched-edges} has been
  applied exhaustively, all \criticaledge{s} that are incident with
  agents in~$V\setminus V'$ are deleted. 
  Since $V'$ is good, it follows that $u \in V'$ and $\{u,w\}$ is critical for $u$.
  Since $e$ is also critical for~$u$, it follows that $v \sim_u w$, where $v$ is the endpoint of~$e$ different from~$u$.
  However, since $e$ and $e'$ are blocking each other, it follows that $\{u,w\}$ and $e'$ are also blocking each other%
  ---a contradiction to $M$ being stable.
\end{proof}
\fi
Let $G_3 = (V, E_3)$ be the graph obtained after having exhaustively applied \cref{rule:delete-mismatched-edges,rule:delete-blocking-critical-edges} to~$G_2$.
\iflong
As mentioned, by the properties of cover-free families, if there is a solution contained in~$E_2$, then the above constructed cover-free family~$\calC$ contains a good member. Thus, by the correctness of \cref{rule:delete-mismatched-edges,rule:delete-blocking-critical-edges} we have the following.
\else
By the correctness of \cref{rule:delete-mismatched-edges,rule:delete-blocking-critical-edges} we have the following.
\fi
\begin{lemma}\label[lemma]{lem:stage3-preserve}
  If there is a stable matching~$M \subseteq E_2$ with \egalcostn\ at most~$\egalcost$,
  then the constructed cover-free family~$\calC$ contains a good member~$V'\in \calC$ such that the edge set~$E_3$ of $G_3$ resulting from the  application of \cref{rule:delete-mismatched-edges,rule:delete-blocking-critical-edges} contains all edges of~$M$.
\end{lemma}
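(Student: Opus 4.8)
The plan is to read this lemma as the assembly of two facts already in place in this phase: the existence of a \emph{good} member of~$\calC$ for the solution~$M$, and the correctness of \cref{rule:delete-mismatched-edges,rule:delete-blocking-critical-edges}. First I would fix a stable matching~$M \subseteq E_2$ of \egalcostn\ at most~$\egalcost$ and invoke \cref{lem:culprit-bound}: since $M$ has at most~$\egalcost$ \expedge{s}, its culprit set satisfies $|\culset(M)| \le \egalcost^2 + 2\egalcost$, while $|\crset(M)| \le 2\egalcost$. These cardinalities are exactly matched to the parameters of the cover-free family~$\calC$ built in this phase, so (after padding $\crset(M)$ and $\culset(M)$ to the required exact sizes) the defining property of $\calC$ yields a member~$V'$ with $V' \cap \culset(M) = \crset(M)$; by definition this is precisely the statement that $V'$ is good for~$M$.

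Next I would fix this good~$V'$ and trace $M$ through the application of the two rules to~$E_2$, the point being that neither rule ever deletes an edge of~$M$. For \cref{rule:delete-mismatched-edges}, every deleted edge is incident with some $y \in V \setminus V'$ and \critical{} for~$y$; were such an edge in~$M$, then $y$ would be incident with an $M$-edge that is \critical{} for~$y$ (and such an edge is an \expedge, so $y \in \culset(M)$), forcing $y \in \crset(M) \subseteq V'$ by goodness, a contradiction. For \cref{rule:delete-blocking-critical-edges}, I would appeal to its already-established correctness: deleting an $M$-edge would make the relevant endpoint a \culprit\ of~$M$, which by \cref{lem:culprit-bound} is incident to an $M$-edge that goodness forces to be \critical{} for it, and then stability of~$M$ gives the contradiction spelled out in that rule's correctness proof. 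I would stress here that this second argument relies on \cref{rule:delete-mismatched-edges} having been applied \emph{exhaustively first}, so the fixed order of the two rules is part of the argument and not just a convenience.

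Finally I would close with an invariance observation and conclude. Because both rules only ever remove edges outside~$M$, the matching~$M$ stays a subgraph of the current edge set at every step, so the hypotheses of the two correctness arguments—in particular that $M$ is still a stable matching contained in the graph—remain valid throughout the \emph{exhaustive} application. As $M \subseteq E_2$ by assumption and no edge of~$M$ is ever deleted, the resulting graph~$G_3 = (V, E_3)$ satisfies $M \subseteq E_3$, which is the claim. The one genuinely delicate point I anticipate is exactly this invariance: one must confirm that all quantities driving the rules for~$M$—namely $\culset(M)$, $\crset(M)$, the \critical/\fine\ classification of edges, and which pairs of edges block each other—depend only on~$M$ and the preference profile, and not on which non-$M$ edges happen to survive. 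Since this is the case, removing non-$M$ edges can neither retroactively invalidate the goodness of~$V'$ nor the culprit count, and everything else reduces to bookkeeping on the size bounds and a direct appeal to the cover-free property.
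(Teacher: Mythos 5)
Your proposal is correct and follows essentially the same route as the paper: existence of a good member~$V'$ via the cover-free-family property (using \cref{lem:culprit-bound} for the cardinality bounds), combined with the correctness of \cref{rule:delete-mismatched-edges,rule:delete-blocking-critical-edges}, which guarantee that no edge of~$M$ is ever deleted. You even make explicit two points the paper leaves implicit---the padding of $\crset(M)$ and $\culset(M)$ to the exact cardinalities required by the cover-free definition, and the fact that exhaustive application is safe because blocking relations and culprit status with respect to~$M$ are unaffected by deleting non-$M$ edges---so the argument is, if anything, more complete than the paper's own.
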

\noindent Since for each member~$V' \in \calC$, we delete all edges that pairwisely block each other,
each perfect matching in~$G_3$ induces a stable matching. We thus
have the following.
\ifshort\begin{lemma}[\appsymb]
 \else\begin{lemma}\fi\label[lemma]{lem:backwards}
  If $G_3$ admits a perfect matching~$M$ with edge cost at most~$\egalcost$,
  then $M$ corresponds to a stable matching with egalitarian cost at most~$\egalcost$.
\end{lemma}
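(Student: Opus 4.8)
The plan is to verify the two asserted properties of the perfect matching~$M$ separately: that it has \egalcostn\ at most~$\egalcost$, and that it is stable. The cost bound is immediate from perfectness. Since $M$ matches every agent, the \egalcostn\ splits over the matched pairs,
\[
  \egalcost(M) = \sum_{i\in V}\rank_i(M(i)) = \sum_{\{x,y\}\in M}\big(\rank_x(y)+\rank_y(x)\big),
\]
and the right-hand side is exactly the total edge cost of $M$ in~$G_3$, which by assumption is at most~$\egalcost$. So the only real content is stability.

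For stability I would argue by contradiction. Suppose some pair~$\{u,u'\}$ blocks~$M$. Because $M$ is perfect, neither $u$ nor $u'$ is unmatched, so the blocking condition simplifies to $u'\succ_u M(u)$ and $u\succ_{u'}M(u')$. Put $e\coloneqq\{u,M(u)\}$ and $e'\coloneqq\{u',M(u')\}$; these are two \emph{distinct} edges of~$M$, since $u'\succ_u M(u)$ forces $u'\neq M(u)$ and hence $u'\notin e$. By the definition of two edges blocking each other, $e$ and $e'$ block each other. Thus $e,e'\in E_3$ are two mutually blocking edges, and it suffices to show that $G_3$ contains no such pair.

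The hard part — indeed the main obstacle — is precisely this last claim: that the exhaustive application of the reduction rules leaves no two edges of~$E_3$ blocking each other. Here I would combine the structural lemma with the exhaustiveness of the rules. Since $E_3\subseteq E_2$ and whether two edges block each other depends only on the preferences (not on the ambient edge set), any mutually blocking pair in~$E_3$ is also a mutually blocking pair in~$G_2$; by \cref{lem:only-critical} one of the two edges is \critical{} for the relevant endpoint of the induced blocking pair. But then the precondition of \cref{rule:delete-blocking-critical-edges} is satisfied while both edges are still present, so that rule (applied exhaustively when constructing~$G_3$) would have deleted one of~$e,e'$, contradicting $e,e'\in E_3$. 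Together with the removal of finely mutually blocking pairs by \cref{rule:delete-blocking-green-edges} in \Phase~2 (which is exactly what licenses \cref{lem:only-critical}), this excludes every mutually blocking pair in~$E_3$, contradicting the existence of the blocking pair~$\{u,u'\}$ and completing the proof that $M$ is a stable matching of \egalcostn\ at most~$\egalcost$.
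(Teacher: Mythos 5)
Your proof is correct and follows essentially the same route as the paper's: the cost bound comes directly from the edge costs of a perfect matching, and stability is shown by contradiction via \cref{lem:only-critical} together with the exhaustive application of \cref{rule:delete-blocking-critical-edges}. The only difference is that you spell out details the paper leaves implicit (that a blocking pair of a perfect matching yields two distinct, mutually blocking matching edges, and that \cref{lem:only-critical} transfers from $E_2$ to $E_3\subseteq E_2$), which is fine.
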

\iflong
\begin{proof}
  Since the cost of each edge in~$G_3$ is exactly the egalitarian cost induced,
  considering $M$ as a matching, the egalitarian cost of~$M$ is at most~$\egalcost$. 
  Towards a contradiction, suppose that a perfect matching~$M$ in $G_3$ has two blocking
  edges~$e$ and $e'$. 
  By
  \cref{lem:only-critical} one of $e$ and $e'$ is \critical{} for its
  endpoint in the corresponding blocking pair. 
  Since \cref{rule:delete-blocking-critical-edges} does not apply to $G_3$ anymore,
  the other edge is not present in~$E_3$, a contradiction.
\end{proof}
\fi
\noindent Thus, to complete \cref{alg:egalmatch}, in \cref{ln:ph3-4} we
compute a minimum-cost perfect matching for~$G_3$ and output yes, if it
has \egalcostn\ at most~$\egalcost$.

\looseness=-1 Summarizing, by \cref{lem:perfect} if there is a stable matching of
\egalcostn\ at most~$\egalcost$, then it is perfect and thus, by
\cref{lem:stage1-preserve,lem:stage2-preserve,lem:stage3-preserve},
there is a perfect matching in~$G_3$ of cost at
most~$\egalcost$. Hence, if our input is a yes-instance, then \cref{alg:egalmatch} accepts by returning a desired solution. \iflong Furthermore, if \else If \fi it accepts, then by
\cref{lem:backwards} the input is a yes-instance. 
\ifshort The running time is proved in the appendix.\fi

\iflong
As to the running time, as a slight modification to \cref{alg:egalmatch}, prior to 
any other computation, we first compute an $(n^2, \egalcost, \egalcost^3)$-cover-free family~$\expF$ with cardinality $(\egalcost^3)^{O(\egalcost)}\cdot \log n$ from
\Phase~2 in $O\big((\egalcost^3)^{O(\egalcost)}\cdot \log n\big) = \egalcost^{O(\egalcost)}\cdot \log n$ time~\cite{Bsh15},  
and an $(n, 2\egalcost, \egalcost^2+2\egalcost)$-cover-free family~$\calC$ from \Phase~3 in $O\big((\egalcost^2 + 2\egalcost)^{O(\egalcost)} \cdot \log n \big) = \egalcost^{O(\egalcost)}\cdot \log n$~time~\cite{Bsh15}. 
Note that we can reuse $\calC$ during the course of the algorithm. The remaining computation time
can be bounded as follows.  First, we compute~$G_1$ in $O(\egalcost\cdot n^2)$
time by checking for each pair of agents, whether they are most
acceptable to each other or whether the sum of their ranks is at most~$\egalcost$.
Then, in \Phase~2, we iterate through all 
members of the cover-free family~$\expF$ and for each member we need the
following computation time. We first compute $G_2$ in $O(\egalcost\cdot n^2)$
time (note that $O(\egalcost)$ time is enough to check whether two given edges
block each other, assuming the preference lists are ordered).  Then we
iterate through all members of 
the cover-free family~$\calC$ and for each of them
compute~$G_3$. Computing $G_3$ can be done in~$O(\egalcost\cdot n^2)$ time by
similar reasoning as before. Finally, the minimum-cost perfect
matching can be found in $O(n^3 \cdot \log n)$
time~\cite{cook_computing_1999}.  Thus, overall the running time is
$\egalcost^{O(\egalcost)}\cdot n^3 \cdot (\log n)^3$.
Thus we have proved \cref{thm:ESR}.
\fi

\subsection{Variants of the egalitarian cost for unmatched agents}\label[section]{subsec:egal-variants}

As discussed in Sections~\ref{subsec:results} and \ref{sec:defi}, 
when the input preferences are incomplete,
a stable matching may leave some agents unmatched.  
In the absence of ties, all stable matchings leave the same set of agents unmatched~\citallmatchingssamesize. Hence, whether an unmatched agent should infer any cost is not relevant in terms of complexity. 
However, when preferences are incomplete and with ties, 
stable matchings may involve different sets of matched agents.
The cost of unmatched agents changes the parameterized complexity dramatically.
\ifshort
In particular, we find that as soon as the cost of an unmatched agent is bounded by a fixed constant, 
seeking for an optimal egalitarian stable matching is parameterized intractable.
The corresponding results are deferred to the appendix.
\fi

\iflong

In this section, we consider two variants of assigning costs to unmatched agents:
zero cost or a constant fixed cost,
and we show that for both cost variants, seeking for an optimal egalitarian stable matching is parameterized intractable.
(Note that the cost $\infty$, that is, allowing
only perfect matchings, is already covered in \cref{sec:egal-ties}.)

\subparagraph*{Unmatched agents have cost zero.}
By the definition of stability,
if two agents are acceptable to each other, but
their corresponding dissatisfaction is large, then either they are
matched, contributing a large, or possibly too large, portion to the
\egalcostn, or one of the two agents must be matched with someone else.
This can be used to model a choice of truth value for a variable in a
reduction from satisfiability problems.
Indeed, we can show that the problem is already hard for \ESM, the bipartite variant of \ESR.

\ifshort\begin{theorem}[\appsymb]\else
  \begin{theorem}\fi\label{thm:esr-ties-incomplete-np-hard-const-maxdeg-zero-cost}
  If the cost of the unmatched agents is zero, then \ESM{} with incomplete preferences and ties is NP-complete even in
  the case where the \egalcostn{} is zero and each agent has at most three acceptable agents.
\end{theorem}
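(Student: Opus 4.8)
The plan is first to reduce the question to a purely combinatorial covering condition, and then to build gadgets realizing a satisfiability instance. Membership in NP is immediate: given a matching one checks stability and computes the \egalcostn{} in polynomial time. For hardness, I would start from the key observation that, since we ask for \egalcostn{} exactly~$0$ and charge an unmatched agent cost~$0$, any cost-$0$ stable matching~$M$ must match every matched agent to a rank-$0$ (top-tier, possibly tied) partner; equivalently, every pair of~$M$ is a mutual-top \emph{good pair}. An agent matched at rank~$0$ can never strictly prefer anyone else, hence can never belong to a blocking pair, so the only pairs that can block~$M$ consist of \emph{two unmatched} mutually acceptable agents. Thus a cost-$0$ stable matching exists if and only if the acceptability graph admits a matching using only good pairs whose unmatched vertices form an independent set (equivalently, whose matched vertices form a vertex cover). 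This reformulation is what makes the bound~$\egalcost=0$ usable, and it also explains why the cost assigned to unmatched agents is decisive.

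With this in hand I would reduce from $3$-SAT, using three primitives, each realizable bipartitely with at most three acceptable agents per vertex. First, a \emph{choice gadget} is a three-agent path $a - c - b$ in which $c$ ranks $a \sim b$ jointly top and $a,b$ each rank $c$ top; in a good-pairs matching $c$ must take exactly one of $a,b$, so exactly one of $a,b$ is left unmatched---a clean Boolean switch. Second, a \emph{costly edge} (an edge that is not top-tier for at least one endpoint) can never lie in a cost-$0$ matching, and by the reformulation it merely \emph{forbids its two endpoints from being simultaneously unmatched}; this is the only constraint-enforcement mechanism available. Third, a \emph{forced-unmatched agent}, one having no good pair at all, is always unmatched and therefore forces every one of its neighbours to be matched.

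For a variable~$x$ I would take one choice gadget per occurrence, reading ``$x$ true'' off whichever endpoint is freed, and wire the per-occurrence gadgets into a cycle using costly edges joining a free-iff-true token of one copy to a free-iff-false token of the next. Since in any cyclic $0/1$ pattern the number of $\mathit{false}\!\to\!\mathit{true}$ transitions equals the number of $\mathit{true}\!\to\!\mathit{false}$ transitions, forbidding one transition direction (each such transition would leave both endpoints of a costly edge unmatched, hence blocking) forces all copies to the same value; this propagates a single consistent truth value while keeping every token at degree at most three. For a clause $C=(\ell_1 \vee \ell_2 \vee \ell_3)$ I would introduce a costly edge $\{u_C,v_C\}$ and attach the signal tokens of $\ell_1,\ell_2$ to $u_C$ and of $\ell_3$ to $v_C$ by good edges (a $2$--$1$ split keeping $u_C,v_C$ within degree three), arranging that a literal's token is available (unmatched in its variable gadget, hence grabbable by the clause) exactly when the literal is true. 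Then $\{u_C,v_C\}$ can be kept from blocking---some endpoint matched---if and only if at least one of its literals is true.

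Correctness would then follow in both directions from the reformulation: a satisfying assignment yields a good-pairs matching whose unmatched vertices (the non-available tokens and the unused clause endpoints) form an independent set, while conversely any cost-$0$ stable matching reads back a consistent assignment (via the consistency cycles) that satisfies every clause (via the clause costly edges). The whole construction stays bipartite and within degree three, giving NP-completeness for $\egalcost=0$ with at most three acceptable agents each. The main obstacle, and the part needing the most care, is exactly the consistency propagation: because the sole enforcement primitive is ``two unmatched acceptable agents block'', propagating one truth value to all occurrences of a variable while simultaneously respecting bipartiteness (I would pad the consistency cycle with dummy agents to correct its parity) and the degree-three cap is delicate, and distinguishing a token that is ``available to a clause'' from one that is ``busy inside its variable gadget''---without letting a clause silently break a consistency link---is the crux of the argument.
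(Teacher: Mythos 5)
Your key lemma is exactly the paper's starting point: with zero cost for unmatched agents and budget $\egalcost = 0$, every matched pair must be a mutual rank-$0$ pair, an agent matched at rank $0$ can never strictly prefer anyone else, and hence the only possible blocking pairs consist of two mutually acceptable \emph{unmatched} agents; so the question reduces to finding a matching along mutual-top edges whose unmatched vertices form an independent set in the acceptability graph. The paper's reduction (from the 3-SAT variant in which every literal occurs exactly twice) rests on the same reformulation and on the same two enforcement primitives you list (``good'' edges and ``costly'' edges that merely forbid both endpoints being unmatched). The problem lies in how you wire them together.

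The gap is precisely the point you flag as ``the crux'' and leave unresolved: your consistency cycle fails because clause agents attach to the \emph{same} tokens by \emph{good} edges. A costly edge $\{b_i, a_{i+1}\}$ only enforces ``not both unmatched'', and a token matched \emph{to a clause agent} is indistinguishable, for that constraint, from a token matched inside its own gadget. Concretely, take a variable $x$ occurring positively in $C_1$ (copy $1$) and negatively in $C_2$ (copy $2$), with consistency edges $\{b_1,a_2\}$ and $\{b_2,a_1\}$. Set copy $1$ ``true'' ($c_1$ matched to $a_1$) and let $C_1$ match its signal token $b_1$; set copy $2$ ``false'' ($c_2$ matched to $b_2$) and let $C_2$ match its signal token $a_2$. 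Every endpoint of both consistency edges is now matched, nothing blocks, the cost is $0$---yet $x$ has been used as true in $C_1$ and as false in $C_2$, so an unsatisfiable formula can yield a stable cost-$0$ matching. The paper avoids this in two ways: (i) the truth value is chosen \emph{once per variable}, at a central agent $a_i^*$ that must be matched to exactly one of $a_i^{\textsf{true}}, a_i^{\textsf{false}}$, and is propagated to each occurrence by a forced chain (one costly edge followed by two good edges), so no consistency cycle among copies is needed at all; and (ii) the clause--variable interface is a \emph{costly} edge from a clause agent into that chain, so a clause can only \emph{read} the signal (an unmatched chain agent forces the clause agent to get matched) but can never match with a variable-gadget agent and corrupt its state; the disjunction itself is counted by three clause agents competing for two rank-$0$ partners. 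Finally, your parity fix (``pad the consistency cycle with dummy agents'') is not innocuous: subdividing a costly edge changes its meaning---a middle agent with no good edge is permanently unmatched and then forces \emph{both} original endpoints to be matched, while inserting a middle good pair trivializes the constraint---so bipartiteness also has to come from the gadget design itself, as in the paper, rather than from padding.
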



\iflong
\newcommand{\true}{\textsf{true}}
\newcommand{\false}{\textsf{false}}
\begin{proof}
  We reduce from the NP-complete \textsc{3SAT} variant in which each literal appears exactly twice and each clause has exactly three literals~\cite{BerKarSco2003}. 
  Let $\phi$ be a corresponding Boolean
  formula in conjunctive normal form with variable set~$X$ and clause
  set~$\mathcal{C}$.

  The reduction proceeds as follows. For each
  variable~$x_i\in X$, we introduce $3$~agents~$a_i^*$, $a_i^{\true}$, and $a_i^{\false}$,
  and $12$ agents,
  denoted as $b_{i,j}^{s}$, $c_{i,j}^{s}$, and $d_{i,j}^{s}$ for all $j\in \{1,2\}$ and $s\in \{\true, \false\}$.
  \iflong
  The agents for one specific variable~$x_i$ are shown in the variable gadget in \cref{fig:gadgets}.
  \else 
  The agents for one specific variable~$x_i$ are shown in the variable gadget in appendix.
  \fi
  The preference list of an agent~$v$ in the variable gadget is defined as follows: Each agent
  that is adjacent to~$v$ with a solid line is a most acceptable agent of~$v$
  (and vice versa); they have rank~$0$ in each other's preference list. Each agent that is adjacent to~$v$ with a dashed
  line has rank~$1$. 
  For each clause~$C_j$ (note that it contains three literals), we construct a \emph{clause gadget} that consists of
  $2$ agents~$u^*_j$ and $w^*_j$, and of $9$ agents, denoted as $x_{j,i}$, $y_{j,i}$, and $z_{j,i}$ for all $i \in \{1,2,3\}$.
  \iflong
  The agents for clause~$C_j$ are shown in the variable gadget on
  the left in \cref{fig:gadgets}.
  \else 
  The agents for one specific variable~$x_i$ are shown in the variable gadget in appendix.
  \fi
  We use the same conventions (the solid and dashed edges) to define their preference lists as we did for the variable gadgets.

%
  To combine the clause and variable gadgets, for each
  variable~$x_i \in X$ and for each clause~$C_j\in \mathcal{C}$ we do the following:
  If $x_i$ appears \emph{positively} in a
  clause~$C_j \in \mathcal{C}$, then we pick two agents $c_{i,s}^\true$,
  $d_{i,s}^\true$ in $x_i$'s variable gadget and two agents $y_{j,r}$,
  $z_{j,r}$ in $C_j$'s clause gadget that have not been used for combining
  before, and identify $c_{i,s}^\true = z_{j,r}$ and
  $d_{i,s}^\true = y_{j,r}$.
  Analogously, if $x_i$ appears \emph{negatively} in $C_j$, then we identify $c_{i,s}^{\false} = z_{j,r}$ and $d_{i,s}^{\false} = y_{j,r}$.

  \begin{figure}\centering
  \begin{tikzpicture}
    \tikzstyle{nod} = [draw,fill=black,circle, inner sep=1.6pt]
    \node[nod] (a) {};
    \node[above = 0pt of a] (an) {$a_i^*$};

    \node[nod,below = .5 of a, xshift=-1cm] (at) {};
    \node[nod,below = .5 of a, xshift =1cm] (af) {};

    \node[nod,below = .7 of at, xshift=-1.5cm] (bt1) {};
    \node[nod,below = .7 of at, xshift=-.5cm] (bt2) {};

    \node[nod,below = .7 of af, xshift=.5cm] (bf1) {};
    \node[nod,below = .7 of af, xshift=1.5cm] (bf2) {};

    \node[nod,below = .7 of bt1] (ct1) {};
    \node[nod,below = .7 of bt2] (ct2) {};

    \node[nod,below = .7 of bf1] (cf1) {};
    \node[nod,below = .7 of bf2] (cf2) {};

    \node[nod,below = .7 of ct1] (dt1) {};
    \node[nod,below = .7 of ct2] (dt2) {};

    \node[nod,below = .7 of cf1] (df1) {};
    \node[nod,below = .7 of cf2] (df2) {};

    \foreach \o / \n / \i / \j / \m in {left/at//\true/{a_i}, left/bt1/1/\true/{b_{i,1}}, left/ct1/1/\true/{c_{i,1}}, left/dt1/1/\true/{d_{i,1}},
      right/bt2/2/\true/{b_{i,2}}, right/ct2/1/\true/{c_{i,2}}, right/dt2/2/\true/{d_{i,2}},
      right/af//\false/{a_i}, left/bf1/1/\false/{b_{i,1}}, left/cf1/1/\false/{c_{i,1}},
      left/df1/1/\false/{d_{i,1}},
      right/bf2/2/\false/{b_{i,2}}, right/cf2/1/\false/{c_{i,2}}, right/df2/2/\false/{d_{i,2}}%
}
    {
      \node[\o = 0pt of \n] {$\m^{\j}$};
    }

    \foreach \s / \t / \a in {a/at/solid,
      at/bt1/dashed, bt1/ct1/-, ct1/dt1/-,
      at/bt2/dashed, bt2/ct2/-, ct2/dt2/-,
      a/af/solid, af/bf1/dashed, bf1/cf1/-, cf1/df1/-,
      af/bf2/dashed, bf2/cf2/-, cf2/df2/-} {
      \draw[\a, thick] (\s) edge (\t);
    }

    \begin{scope}[xshift=.5\textwidth]

      \foreach \i/\p in {u/.7,w/-.7} {
        \node[] at (\p,-.3) (s\i) {$\i^*_j$};
      }

      \foreach \i  in {u,w} {
        \node[nod, below = 0 of s\i] (\i) {};
    }
    
      \foreach \i / \o / \n in {-1.4/left/1,0/right/2,1.4/right/3} {
        \foreach \x / \j  / \p in {x/1/1.6,y/2/2.4,z/3/3.2} {
          \node[nod] at (\i, -\p) (\x\n) {};
          \node[\o = 0pt of \x\n] {$\x_{j,\n}$};
        }
      }

      \foreach \s in {u,w} {
        \foreach \t in {1,2,3}{
          \draw[thick] (\s) edge (x\t);
        }
      }
      \foreach \s / \t / \a in {x/y/dashed, y/z/solid}
      {
        \foreach \i  in {1,2,3}
        {
          \draw[\a] (\s\i) edge (\t\i);
        }
      }
    \end{scope}





  \end{tikzpicture}
  \caption{A variable gadget (left) and a clause gadget (right) for the proof of \cref{thm:esr-ties-incomplete-np-hard-const-maxdeg-zero-cost}.}\label{fig:gadgets}
  \end{figure}
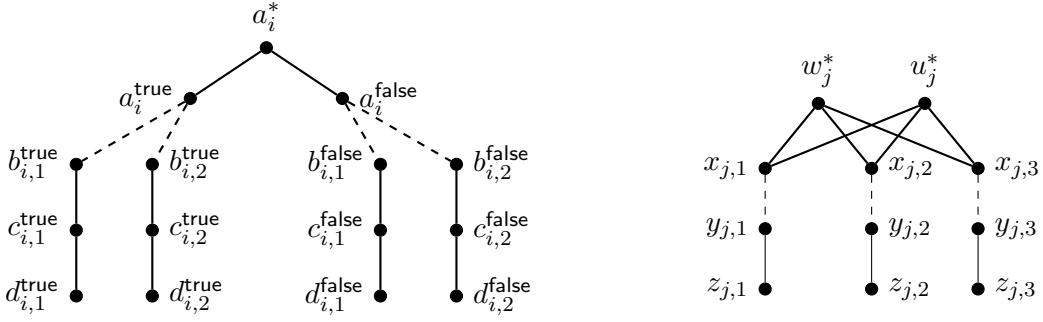

  This completes the construction, which can clearly be done in polynomial time. Observe
  that each agent has \emph{at most three} acceptable agents.
  One can verify that the underlying acceptability graph is bipartite since every cycle has an even length.
  Hence, the constructed preference profile is indeed a valid instance of \SM.

  To see that a satisfying assignment for~$\phi$ induces a stable
  matching with egalitarian cost~$0$, construct a matching~$M$ as follows.
  For each variable~$x_i$ that is assigned to \true,
  let $\{a_i^*, a_i^\true\} \in M$ and, for each $s \in \{1,2\}$, let
  $\{c_{i,s}^\true, d_{i,s}^\true\}, \{b_{i,s}^\false, c_{i,s}^{\false}\}\in M$. Accordingly, for
  each variable~$x_i$ that is assigned to \false, 
  let $\{a^*, a^\false\}\in M$ and, for each $i \in \{1,2\}$
  let 
  $\{b_{i,s}^\true, c_{i,s}^\true\}, \{c_{i,s}^\false, d_{i,s}^{\false}\}\in M$.
  Observe that, if an agent from the variable gadget is matched, then it is matched with one of his most acceptable agents. Furthermore,
  if an agent (except $d_{i,s}^{\textsf{false/true}}$) is not matched, then
  all of its acceptable agents are matched. For each clause~$C_j$, pick
  a literal that satisfies it, say it is the literal~$x_1$, and
  let $\{u^*_{j}, x_{j,2}\}, \{w_j^*, x_{j,3}\}\in M$.
  Again, each matched agent is with one of its most acceptable agents, hence the egalitarian cost
  is~$0$. Furthermore, for each agent~$v$ from the clause gadget,
  $v$ is either matched (with \egalcostn~$0$), or each agent acceptable to~$v$ is matched to one of its most acceptable agents. Hence,
  there are no blocking pairs. We conclude that $M$ is a stable matching and has \egalcostn~$0$. 

  Now let $M$ be a stable matching of egalitarian cost~$0$. Note that,
  in the variable gadget of each variable~$x_i$, agent~$a_i^*$ is matched to either
  $a_{i}^\true$ or $a_i^\false$.
  This is because $M$ being stable implies that at least one of  $a_i^{\true}$ and $a_i^{\false}$ has to be matched
  and $a^*$ is
  the only agent acceptable to them that has cost~$0$.
  For each variable~$x_i$, assign to $x_i$
  \true{} if $\{a_i^*,a_i^\true\}\in M$ and \false{} otherwise. We
  claim that each clause~$C_j \in \mathcal{C}$ is satisfied in this
  way. To see this, consider $C_j$'s clause gadget and observe that one
  of the triple~$x_{j,1}$, $x_{j,2}$, and $x_{j,3}$ is not matched in~$M$ because there are only two agents that are most acceptable to the triple;
  note that the total egalitarian cost should be zero.
  Say $x_{j,1}$ is not matched (the other cases are symmetric). This implies that $p=\{y_{j,1}, z_{j,1}\}\in M$, which corresponds to a literal as by our construction. 
  Assume that this pair~$p$ equals a pair $\{d_{i,1}^{\true}, c_{i,1}^{\true}\}$ for some variable~$x_i$ (\emph{i.e.}\ it occurs positively in~$C_j$).
  This is without loss of generality, due to symmetry. 
  Since $\{a_i^\true,b_{i,1}^\true\}$ should be neither a member of nor blocking~$M$,
  it follows that $a_i^\true$ is matched to $a_i^*$, meaning that variable~$x_i$ is set to
  \true. Thus, clause~$C_j$ is satisfied, as claimed.
\end{proof}

\subparagraph*{Unmatched agents have some constant positive cost.}
If the unmatched agents have some constant positive cost~$c$, then it
is easy to see that \ESR\ belongs to~XP.

\begin{proposition}\label{prop:ESR-unmatched-constant-XP}
  If the cost of each unmatched agent is some positive constant, then
  \ESR\ with incomplete preferences and with ties can be solved in
  $n^{\egalcost} \cdot \esrtiesfptrunningtime$ time, where $\egalcost$ is the \egalcostn.
\end{proposition}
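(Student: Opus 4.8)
The plan is to enumerate the few agents that a solution leaves unmatched and, for each guess, reduce to the perfect-matching routine behind \cref{thm:ESR}. Since every unmatched agent now contributes a fixed cost $c \ge 1$, any stable matching of \egalcostn\ at most $\egalcost$ leaves at most $\egalcost/c \le \egalcost$ agents unmatched. I would therefore iterate over all subsets $U \subseteq V$ with $|U| \le \egalcost/c$, of which there are $n^{O(\egalcost/c)} \le n^{O(\egalcost)}$, interpreting $U$ as the intended set of unmatched agents. A guess $U$ can be discarded immediately unless $U$ is an independent set in the acceptability graph, since two mutually acceptable unmatched agents always form a blocking pair.

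Given such an independent $U$, I would build a reduced instance on $V \setminus U$ by deleting the agents of $U$ and, for every remaining agent~$w$ that finds some member of $U$ acceptable, letting $u^\star_w$ be the most preferred (by $w$) agent of $U \cap V_w$ and truncating $w$'s list to keep only the agents $x$ with $x \succeq_w u^\star_w$. The point of the cut is to forbid exactly the pairs that could block between $U$ and $V \setminus U$: an unmatched $u \in U$ blocks with a matched $w$ iff $u \succ_w M(w)$, so forcing $M(w) \succeq_w u^\star_w \succeq_w u$ rules this out. Crucially, after the cut every surviving agent $v$ in $w$'s list satisfies $v \succeq_w u^\star_w \succeq_w u$ for all $u \in U \cap V_w$, so no deleted agent is \emph{strictly} preferred to $v$; hence deleting $U$ does not change $\rank_w(v)$ for any surviving $v$, and (arguing symmetrically for $\rank_v(w)$) the cost of every matched pair is computed faithfully. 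I would then run \cref{alg:egalmatch} (which by \cref{lem:perfect} computes a minimum-cost \emph{perfect} stable matching) on $V \setminus U$ with budget $\egalcost - c\lvert U\rvert$, and on success return $M$ together with $U$ left unmatched.

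For soundness, a perfect stable matching $M$ of the reduced instance has total \egalcostn\ $\egalcost(M) + c\lvert U\rvert \le \egalcost$, and the combined matching is stable in the original instance: no blocking pair lies inside $U$ (independence), between $U$ and $V \setminus U$ (the cut), or inside $V \setminus U$ (a pair $\{w,w'\}$ removed by truncation has $M(w) \succeq_w u^\star_w \succ_w w'$, so it cannot block, while every other pair is governed by the stability of $M$). For completeness, given a solution $M^\star$ with unmatched set $U^\star$, stability forces $M^\star(w) \succeq_w u$ for every $u \in U^\star \cap V_w$; thus $M^\star$ uses only edges surviving the cut for $U = U^\star$, its matched ranks are unchanged by the deletion, and it is a perfect stable matching of the reduced instance of cost $\egalcost(M^\star) - c\lvert U^\star\rvert \le \egalcost - c\lvert U^\star\rvert$, so it is found.

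The total running time is the number $n^{O(\egalcost)}$ of guesses times the cost of one reduced call; since building the reduced instance takes $O(n^2)$ time and \cref{alg:egalmatch} runs in \esrtiesfptrunningtime~time, this is $n^{\egalcost} \cdot \esrtiesfptrunningtime$ as claimed. I expect the only genuinely delicate point to be the bookkeeping around ties in the truncation step---arguing simultaneously that cutting $w$'s list at $u^\star_w$ (retaining ties) kills all $U$-to-$(V\setminus U)$ blocking pairs, introduces no new blocking pairs inside $V \setminus U$, and leaves the ranks of all matched pairs intact---rather than anything algorithmically deep.
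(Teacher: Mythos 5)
Your proposal is correct and is essentially the paper's own proof: guess the at most $\egalcost/c$ unmatched agents, delete them while truncating each remaining agent's list at its most preferred acceptable agent among the guessed set, and run the perfect-stable-matching algorithm behind \cref{thm:ESR} with budget $\egalcost - c\cdot|U|$. Your explicit independence check on the guessed set and the rank-preservation argument only spell out soundness details that the paper's proof sketch leaves implicit.
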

\begin{proof}[Proof sketch]
  Let $c$ be the cost for an unmatched agent. The algorithm is as
  follows. Guess, by trying all possibilities, a subset~$A$ of at most
  $\egalcost/c$ unmatched agents. Remove $A$ from the set of
  agents~$V$ and all preference lists and modify the preference lists
  of the remaining agents as follows. For each
  agent~$u \in V \setminus A$ who was acceptable to some
  agent~$a \in A$, remove from $u$'s preference list all agents~$b$
  for which $u$ strictly preferred~$a$ over~$b$. In the remaining instance, search for
  a perfect stable matching of egalitarian cost at most~$\egalcost-c\dot |A|$. 
  This can be done in
  \esrtiesfptrunningtime~time using \cref{thm:ESR}. If such a
  matching exists, accept and otherwise reject.
\end{proof}
\fi

\ifshort
\noindent If the cost of each unmatched agent is a positive constant, then \ESR\ admits an $f(\egalcost) \cdot n^{\egalcost + O(1)}$-time algorithm, and we cannot substantially improve on this.
\fi

\iflong
We cannot substantially improve on the above algorithm in general,
however. Indeed, we can use the same idea as in the reduction for
\cref{thm:esr-ties-incomplete-np-hard-const-maxdeg-zero-cost} and
utilize the fact that, when there are ties, an agent can select his partner from an unbounded number of agents with the same cost, in order to obtain a polynomial-time parameterized reduction
from the W[1]-complete \IS{} problem (parameterized by the size of the independent set solution).
\fi

\ifshort\begin{theorem}[\appsymb]\else
  \begin{theorem}\fi\label{thm:egal-cost-fixed-w[1]-h}
    Let $n$ denote the number of agents and $\egalcost$ denote the \egalcostn.
  If the cost of each unmatched agent is some positive constant, then
  \ESR\ with incomplete preferences and ties is W[1]-hard with respect to~$\egalcost$.
  It does not admit an $f(\egalcost)\cdot n^{o(\egalcost)}$-time algorithm unless the \ETH\ is false. 
\end{theorem}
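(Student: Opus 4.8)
The plan is to give a polynomial parameterized reduction from \IS, which is W[1]-complete parameterized by the solution size~$k$ and, assuming the \ETH, admits no $f(k)\cdot n^{o(k)}$-time algorithm. Given an instance $(H,k)$ with $H=(W,F)$, I would build an \ESR instance whose egalitarian cost bound is $\egalcost=\Theta(k)$, so that the two bounds translate into each other. The guiding idea, as in the construction for \cref{thm:esr-ties-incomplete-np-hard-const-maxdeg-zero-cost}, is to use solid edges to denote rank-$0$ (mutually most-acceptable) pairs and dashed edges to denote rank-$1$ pairs, but now I exploit the two features specific to this variant: a single selector agent may be tied at rank~$0$ with an \emph{unbounded} number of candidate agents, and leaving an agent unmatched costs only the fixed constant~$c$. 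Together these let me encode the choice of one vertex out of $|W|$ at cost~$O(1)$ per chosen vertex.

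Concretely, for each of the $k$ ``slots'' I would introduce one selector~$s_i$ that is tied at rank~$0$ with a vertex-agent~$t_{i,w}$ for every $w\in W$; matching $s_i$ to~$t_{i,w}$ encodes selecting~$w$ in slot~$i$. To keep non-selected vertices free of cost, each $t_{i,w}$ also gets a private partner tied at rank~$0$ with~$s_i$, so that $t_{i,w}$ can be matched at cost~$0$ whenever $w$ is not selected. The construction is arranged so that matching $s_i$ to exactly one $t_{i,w}$ leaves precisely one auxiliary agent per slot unmatched, contributing~$c$ to the egalitarian cost; an additional forcing agent attached to~$s_i$ only through a high-rank edge (of cost exceeding the budget) guarantees that $s_i$ must actually pick a vertex rather than stay idle. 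Setting $\egalcost$ to the resulting base cost of $\Theta(k)$ then forces every admissible stable matching to realize exactly $k$ genuine selections.

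Finally, for every edge $\{w,w'\}\in F$ I would attach a small consistency gadget of port agents whose preferences are tuned so that, if $w$ and $w'$ are \emph{both} selected (in any two slots), then the agents freed by these two selections form a blocking pair, destroying stability; whereas if at most one of $w,w'$ is selected, the gadget is matched internally at cost~$0$ and blocks nothing. Completeness then follows by reading a size-$k$ independent set off the selections and checking that the described matching is stable with cost exactly~$\egalcost$; soundness follows because a stable matching of cost at most~$\egalcost$ can afford only the $k$ forced per-slot payments, hence must select $k$ vertices, and stability through the edge gadgets forces these to be pairwise non-adjacent. As the reduction is polynomial and $\egalcost=\Theta(k)$, W[1]-hardness and the $f(\egalcost)\cdot n^{o(\egalcost)}$ lower bound transfer directly from \IS.

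I expect the main obstacle to be the design and verification of the edge gadget: it must convert ``both endpoints selected'' into an \emph{actual} blocking pair (and nothing weaker), while contributing zero egalitarian cost and introducing no spurious blocking pairs across any of the $\binom{k}{2}$ slot combinations, and this has to mesh exactly with the per-slot cost bookkeeping so that the global budget lands at $\Theta(k)$. Handling ties carefully here---since blocking requires a \emph{strict} improvement for both agents---is the delicate point.
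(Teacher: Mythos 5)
Your high-level strategy---reduce from \IS, let unmatched agents paying cost~$c$ encode the vertex selection, exploit unbounded ties at rank~$0$ so that selection itself is free, and let two unmatched, mutually acceptable agents form the blocking pair that enforces independence---is the same as the paper's. The implementation, however, differs in the selection mechanism, and this is where your construction has a genuine soundness gap. The paper introduces $n-k$ selector agents, each tied at rank~$0$ with \emph{all} $n$ vertex agents; a stable matching within budget must match every selector to a vertex agent, and the independent set is read off as the $k$ vertex agents \emph{left over}. Distinctness of these $k$ vertices comes for free from the injectivity of a matching. Your construction instead uses $k$ selector slots, each carrying its own private copies $t_{i,w}$ of every vertex $w$, and reads the solution off as the set of selected vertices. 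Nothing prevents the \emph{same} vertex from being selected in two different slots: if $w$ is chosen in slots $i$ and $j$, the freed agents are $p_{i,w}$ and $p_{j,w}$, and your edge gadgets only create blocking pairs for pairs $\{w,w'\}\in F$, never for two copies of one vertex. Concretely, for $G=K_n$ and $k=2$ your instance admits a stable matching within budget (select any single vertex in both slots), although $K_n$ has no independent set of size~$2$. This is fixable---either reduce from \MIS{} so that slot~$i$ only carries copies of color class~$i$, or make $p_{i,w}$ and $p_{j,w}$ mutually acceptable for all $i\neq j$---but as written the reduction is incorrect.

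A second, smaller gap: for soundness you also need that the freed agents of two \emph{adjacent} selected vertices cannot simply be matched \emph{to each other} within the budget; otherwise a matching can pay a small rank cost for that pair instead of leaving both unmatched and blocking, and stability no longer certifies independence. So the mutual acceptability used by your edge gadget must sit at rank strictly larger than~$c$ in each preference list (so that pairing two such agents costs more than the $2c$ saved by matching them), while still counting as acceptability for the purpose of blocking. The rank-$0$/rank-$1$ (solid/dashed) scheme you borrow from \cref{thm:esr-ties-incomplete-np-hard-const-maxdeg-zero-cost} does not suffice here, because there the budget is zero and any positive rank is fatal, whereas here the budget is $\Theta(ck)$. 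The paper handles exactly this point by inserting a chain of $c$ forced dummy pairs into each vertex agent's preference list, pushing all neighbors down to rank $c+1$; you would need the same padding for your port agents.
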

\iflong
\begin{proof}
  We reduce from \IS, which, given an $n$-vertex
  graph~$G = (U, E)$ with vertex set~$U$ and edge set~$E$, and a number~$k\in \mathds{N}$,
  asks whether $G$ admits a $k$-vertex \emph{independent set}, a vertex subset~$U'\subseteq U$ with pairwisely non-adjacent vertices.
  Let $c$ be the cost for an unmatched agent, which is positive; without loss of generality, we assume that $c$ is also a positive integer. 
  To construct an instance of \ESR, set the \egalcostn{} to
  $c\cdot k$ and construct a preference profile as follows. Set the set~$V$
  of agents to $U \cup A \cup D$, where $U$ is a set of \emph{vertex agents}, $A = \{a_1, a_2, \ldots, a_{n - k}\}$ is a set of $n - k$
  \emph{selector agents}, and
  $D = \{d_{1, i}^u, d_{2, i}^u \mid i \in \{1, 2, \ldots, c\} \wedge u \in U\}$ is a set of $2\cdot c \cdot n$ \emph{dummy agents}.
  Note that we use $U$ to denote both the vertex set and the set of vertex
  agents, and we will make it clear whether we mean the vertices or
  their corresponding vertex agents.

  For each $i \in \{1, 2, \ldots, n - k\}$ the set of acceptable agents to $a_i$ is
  precisely~$U$ and each pair of acceptable agents are tied, that is,
  they have rank~$0$ in $a_i$'s preference list. For each
  vertex~$u \in U$, the set of the acceptable agents of the corresponding agent~$u$ is $A \cup \{d_{1, i}^u \mid i \in \{1, 2, \ldots, c\} \} \cup N(u)$.
  The preference list of $u$ is $A \succ d_{1, 1}^u \succ d_{1, 2}^u \succ \ldots \succ d_{1, c}^u \succ N(u)$,
  \emph{i.e.}\ $A$ are all tied with rank~$0$, followed by $c$ dummy agents, and finally the agents from the neighborhood $N(u)$ are all tied with rank~$c + 1$. Finally, for each $u \in U$ and $i \in \{1, 2, \ldots, c\}$, agent~$d_{1,i}^{u}$ is the only acceptable agent of $d_{2,i}^{u}$, and $d_{1,i}^{u}$'s preference list is $d_{2, i}^{u} \succ u$.
  This completes the construction, which can clearly be done in polynomial time. 
  If the reduction is correct, then it implies the result, because \IS\ is well-known to be W[1]-hard and, moreover, an $f(k) \cdot n^{o(k)}$-time algorithm for \IS would contradict the \ETH~\cite{CyFoKoLoMaPiPiSa2015}.

  To see that a size-$k$ independent set~$U'$ induces a stable
  matching of \egalcostn~$c\cdot k$, match each agent~$a_i$ with a distinct
  vertex in~$U \setminus U'$ and, for each $u \in U$ and
   each $i \in \{1, 2, \ldots, c\}$, match $d_{1, i}^u$ with $d_{2,i}^u$. Observe that the thus constructed matching~$M$ has  egalitarian cost~$c\cdot k$, because, 
   besides the~$k$ unmatched agents  in $U'$, each agent is matched with one of its most acceptable agents. To see that $M$ is also stable, it suffices to
  show that no pair of agents~$u, v \in U'$ induces a blocking
  pair. This clearly holds, since $u$ and $v$ are not adjacent in~$G$
  and thus not acceptable to each other. Thus, $M$ is a stable
  matching of egalitarian cost at most~$c\cdot k$.


  To see that a stable matching~$M$ of egalitarian cost~$c\cdot k$ induces a
  size-$k$ independent set~$U'$ for~$G$, let $U'=\{v\in U\mid M(u)\notin A\}$ be the set of agents
  in $U$, which are not assigned to some selector agent~$a_i$ by~$M$
  as a partner. Observe that each selector agent~$a_i$ is matched to
  some agent in~$U$. Otherwise, there is some agent~$b \in U$ which is
  either unmatched or matched to some other agent in~$U$ or some dummy agent. Hence
  $\{b, a_i\}$ would form a blocking pair. Thus, each selector
  agent~$a_i$ is matched to some agent in~$U$, meaning that
  $|U'| = k$. Observe also that each agent~$d_{1, i}^u$ is matched to
  $d_{2, i}^u$, because, otherwise, they would form a blocking
  pair. Hence, the only possible matches for agents in~$U'$ are their neighbors in~$G$. 
  However, if two agents in~$U'$ are matched together, then $M$'s
  \egalcostn\ is in total larger than~$c\cdot k$, a contradiction. Thus, no two
  agents in $U'$ are matched together, and since $M$ is stable, no two agents in $U'$ are acceptable to each other. Thus, $U'$ is of size $k$ and induces an independent set, as required.
\end{proof}
\fi
\fi

\section{Minimizing the number of blocking pairs}\label[section]{sec:SRB}
In this section, we strengthen the known result~\cite{AbBiMa2005} 
\iflong (\emph{i.e.}\ \SRB is NP-complete even for complete preferences without ties) \fi
by showing
that \SRB is W[1]-hard with respect to ``the number~$\bp$ blocking pairs'', even when each preference list has length at most five.
\iflong Note that the NP-hardness reduction in \cite{AbBiMa2005} is from the
problem of finding a maximal matching of minimum cardinality in a graph.
However, this matching problem is fixed-parameter tractable for the cardinality of the solution~\cite{Prieto2005}.
Thus, it is not clear how to adapt the proof of \citet{AbBiMa2005} to provide a parameterized reduction. \fi
\iflong

\fi
The main building block of our reduction, which is from the W[1]-hard \MIS problem (see appendix for the definition), 
is a selector gadget (\cref{gadget:selector}) that always induces at least one blocking pair and allows for many different configurations. To keep the lengths of the preference lists short we use ``duplicating'' agents (\cref{gadget:vertex}). 

\iflong
Our result excludes any $f(\bp) \cdot n^{O(1)}$-time algorithm (unless FPT${}={}$W[1]) and any $f(\bp) \cdot n^{o(\bp)}$-time algorithm (unless \ETH\ fails).
Using the same reduction, we also answer an open question by \citet[Chapter~4.6.5]{Manlove2013}, showing that minimizing the number~$\ba$ of blocking agents is NP-hard and W[1]-hard with respect to~$\ba$.
\fi

\looseness=-1 First, we discuss a vertex-selection gadget which we later use to select a vertex of the input graph into the independent set. 
The selected vertex is indicated by an agent which is matched to someone \emph{outside} of the vertex-selection gadget. 
The gadget always induces at least one blocking pair. 
\iflong
Moreover, if an agent from the gadget is not matched to one of a specified set of agents, then an arbitrary matching induces more than one blocking pair. 
An illustration is shown in the left part of \cref{fig:block-gadget}.
\begin{figure}
  \tikzstyle{pnode} = [fill=black!10]

  \centering
     \def \nodesize {17pt}
     \def \smallnodesize {15pt}

      \tikzstyle{cnode} = []
      \tikzstyle{match} = [line width = 2.6pt]
      \begin{tikzpicture}[every node/.style={draw=black,thick,circle,inner sep=0pt, font=\footnotesize}]
      \def \n {5}    
      \def \bradius {3.2cm}
      \def \radius {2.2cm}
      \def \sradius {1.3cm}
      \def \margin {12}
      \def \offs {90}
      
      \foreach \s in {0,1,2,3,4} {
        \node[draw, minimum size=\nodesize] at ({-360/\n * \s+\offs}:\radius) (a\s) {$a^\s$};
        \node[draw, minimum size=\nodesize,pnode] at ({-360/\n * \s+\offs}:\bradius) (u\s) {$u^\s$};

        \node[draw, minimum size=\smallnodesize] at ({-360/\n * \s+\offs +21}:\sradius) (c\s) {$c^\s$};
        \node[draw, minimum size=\smallnodesize] at ({-360/\n * \s+\offs - 21}:\sradius) (d\s) {$d^\s$};

         \draw (a\s) edge[normalline] (c\s);
         \draw (a\s) edge[normalline] (u\s);
         \draw (a\s) edge[normalline] (d\s);
         \draw (c\s) edge[match] (d\s);
      }

      \foreach \i /\j in {0/1,1/2,2/3,3/4,4/0} {
        \draw (a\i) edge[normalline] (a\j);
      } 

      \foreach \i /\j in {a0/a1,a3/a4,a2/u2} {
        \draw (\i) edge[match] (\j);
      }

    \end{tikzpicture}
    ~~\qquad~~
    \begin{tikzpicture}[every node/.style={draw=black,thick,circle,inner sep=1pt, font=\footnotesize}]
      \def \n {6}
      \def \nodesize {17pt}
     \def \smallnodesize {15pt}

      \tikzstyle{pnode} = [fill=black!10]
      \tikzstyle{cnode} = []
      \tikzstyle{match} = [line width = 2.6pt]
      \def \bradius {2.2cm}
      \def \radius {1.2cm}
      \def \margin {12}
      \def \offs {60}
      
      \foreach \s in {0,1,2,3,4,5} {
        \node[draw, minimum size=\nodesize] at ({-360/\n * \s+\offs}:\radius) (x\s) {$x^\s$};
      }      
       \foreach \s in {0,5} {
        \node[draw, minimum size=\nodesize, pnode] at (x\s) {$x^\s$};
      }  
      \foreach \s / \l / \nn in {0/a/a, 2/y1/{y^1}, 4/y2/{y^2}, 5/b/b} {
        \node[draw, minimum size=\nodesize] at ({-360/\n * \s+\offs}:\bradius) (\l) {$\nn$};
        \draw (x\s) edge[normalline] (\l);
      }      
        \foreach \i /\j in {0/1,1/2,2/3,3/4,4/5,5/0} {
        \draw (x\i) edge[normalline] (x\j);
      } 
      \foreach \s /\t in {x0/a,x1/x2,x3/x4,x5/b} {
        \draw (\s) edge[match] (\t);
      }       
    \end{tikzpicture}

    \label[figure]{fig:block-gadget}
    \caption{The acceptability graphs for \cref{gadget:selector,gadget:vertex} with $n'=2$ and $\delta=2$. Left: Thick lines correspond to a matching with exactly one blocking pair, $\{a^1,a^2\}$. Right: Thick lines correspond to a possible stable matching when $y^1$ and $y^2$ are matched with some agents that they prefer to $x^2$ and $x^4$, respectively.}
\end{figure}
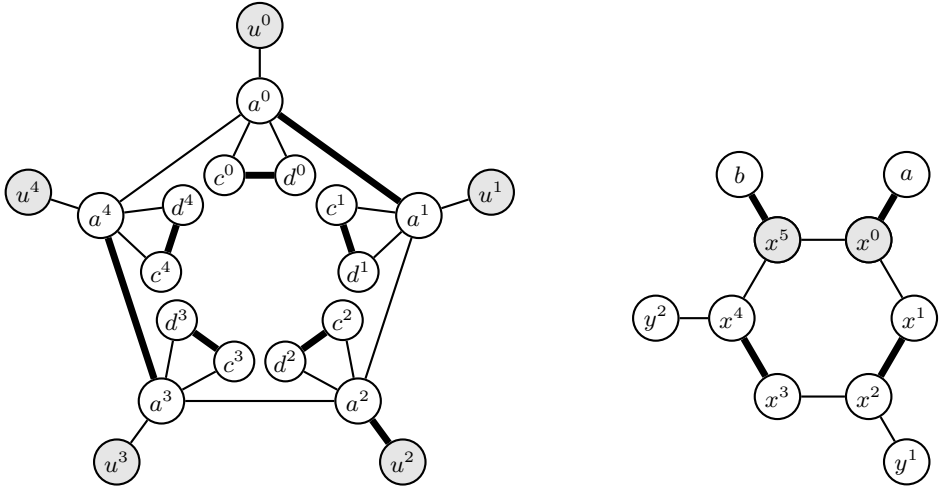
\else
An illustration is shown in the appendix. 
\fi
In the following, let $n'$ be a positive integer, and all additions and subtractions in the superscript are taken modulo~$2n'+1$:
\begin{construction}\label{gadget:selector}
Consider the following four disjoint sets~$U, A, C, D$ of $2n'+1$ agents each, where 
$A\coloneqq \{a^i \mid 0\le i \le 2n'\}$, $U\coloneqq \{u^i \mid 0\le i\le 2n'\}$, $C\coloneqq \{c^i\mid 0\le i\le 2n'\}$, 
and $D\coloneqq \{d^i\mid 0\le i \le 2n'\}$. 
\ifshort
The preferences of the agents in $A\cup C \cup D$ are: 
\newcommand{\agent}{\text{agent}}
$\forall i\in$ $\{0,$ $1,\ldots, 2n'\}\colon$ \agent~$a^i\colon$ $a^{i+1} \succ a^{i-1} \succ u^i \succ c^i \succ d^i$,  $\agent$ $c^i\colon$  $d^i \succ a^i$,  \agent~$d^i\colon$  $a^i \succ c^i$.
\else
The preference lists of the agents in $A\cup C \cup D$ are as follows:
   \begin{tabular}{lll}
  $\forall i\in \{0,1,\ldots, 2n'\}\colon$ & \Agent~$a^i\colon$ & $a^{i+1} \succ a^{i-1} \succ u^{i} \succ c^i \succ d^i$,\\
  & $\Agent~c^i\colon$ & $d^i \succ a^i$,\\
  & $\Agent~d^i\colon$ & $a^i \succ c^i$.\\
\end{tabular}%
\fi
\end{construction}
The preferences of the agents in $U$ are intentionally left unspecified and we define them later when we use the gadget. Regardless of the preferences of the agents in $U$, we can verify that if no $a^i$ obtains an agent~$u^i$ as a partner,
then it induces at least two blocking pairs.

\iflong
\ifshort\begin{lemma}[\appsymb]\else
\begin{lemma}\fi
  \label[lemma]{lem:selector}
  Let $\Pot$ be a profile with agents~$A\uplus U\uplus C \uplus D$ 
  where the preferences of the agents in $A\cup C \cup D$
  obey \cref{gadget:selector}.
  Let $M$ be a matching for $\Pot$. The following holds.
\ifshort\begin{inparaenum}\else\begin{compactenum}\fi
  \item $M$ induces at least one blocking pair of the form~$\{a^{i-1}, a^{i}\}$ for some $i\in \{0,1,\ldots,2n'\}$. Moreover, if $M(a^i)\neq u^i$, then $M$ induces one more blocking pair~$p'$ with $p'\subseteq \{a^i,b^i,c^i\}$.
  \item Assume that there is an agent~$a^i\in A$ with $M(a^i)=u^i$.
  If (i) for each~$z \in \{1,\ldots, n'\}$ it holds that $M(a^{i+2z-1})=a^{i+2z}$,
    and (ii) for each $z'\in \{0,1,\ldots,2n'\}$ it holds that $M(c^{z'})=d^{z'}$,
  then $\{a^i,a^{i-1}\}$ is the only blocking pair that involves some agent from $A\cup C \cup D$.
\ifshort\end{inparaenum}\else\end{compactenum}\fi
\end{lemma}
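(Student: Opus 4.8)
The plan for the first statement of part~(1) is to argue by parity on the odd cycle formed by the edges $e_j \coloneqq \{a^j, a^{j+1}\}$, $j \in \{0, \ldots, 2n'\}$ (indices modulo $2n'+1$). The feature of \cref{gadget:selector} I would exploit is that every $a^j$ ranks its forward neighbour $a^{j+1}$ first and its backward neighbour $a^{j-1}$ second. Assuming for contradiction that no $e_j$ blocks $M$, I would observe: if $e_j \notin M$, then since $a^j$ ranks $a^{j+1}$ first, $e_j$ fails to block only if $a^{j+1}$ strictly prefers its partner to $a^j$; as the sole agent above $a^j$ in $a^{j+1}$'s list is $a^{j+2}$, this forces $M(a^{j+1}) = a^{j+2}$ and hence $e_{j+1} \in M$. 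Conversely $e_j \in M$ matches $a^{j+1}$ to $a^j$, so $e_{j+1} \notin M$. Together these give $e_j \in M \Leftrightarrow e_{j+1} \notin M$, so membership in $M$ alternates strictly around the $2n'+1$ cyclically arranged edges, which is impossible for an odd number of edges. This yields a blocking cycle edge, which I would rename $\{a^{i-1}, a^i\}$.

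For the ``moreover'' part I would fix this $i$. Since $\{a^{i-1}, a^i\}$ blocks, $a^i$ strictly prefers $a^{i-1}$ to $M(a^i)$, so $M(a^i)$ lies below $a^{i-1}$ in $a^i$'s list, i.e.\ $M(a^i) \in \{u^i, c^i, d^i\}$ or $a^i$ is unmatched; the hypothesis $M(a^i) \neq u^i$ then leaves $M(a^i) \in \{c^i, d^i\}$ or $a^i$ unmatched. A short case analysis on the three-agent cyclic-preference set $\{a^i, c^i, d^i\}$ (recall $c^i$ and $d^i$ accept only $a^i$ and each other) yields a second blocking pair $p' \subseteq \{a^i, c^i, d^i\}$ (the set written $\{a^i, b^i, c^i\}$ in the statement, with $b^i$ reading $d^i$): if $M(a^i)=c^i$ then $d^i$ is free and $\{c^i,d^i\}$ blocks; if $M(a^i)=d^i$ then $c^i$ is free and $\{a^i,c^i\}$ blocks; and if $a^i$ is unmatched then, according to whether $\{c^i,d^i\}\in M$, either $\{a^i,d^i\}$ or $\{c^i,d^i\}$ blocks.

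For part~(2) the plan is a direct, exhaustive verification that, under (i)--(ii), $\{a^i,a^{i-1}\}$ is the unique blocking pair meeting $A\cup C\cup D$. I would first confirm it blocks: $a^i$ is matched to $u^i$ but prefers $a^{i-1}$, while $a^{i-1}$ (matched to $a^{i-2}$ by~(i)) ranks $a^i$ first. Then I would rule out every other acceptable pair touching $A\cup C\cup D$; since $C\cup D$ agents accept only $a^{z'},c^{z'},d^{z'}$ and $A$ agents accept only cycle neighbours, $u^j$, $c^j$, $d^j$, the candidates are the other cycle edges and the pairs $\{a^j,u^j\}$, $\{a^j,c^j\}$, $\{a^j,d^j\}$, $\{c^j,d^j\}$. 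The decisive point is that~(i) matches every odd-offset agent $a^{i+2z-1}$ to its first choice $a^{i+2z}$, so every cycle edge except the wrap-around edge $\{a^{i-1},a^i\}$ has an endpoint at its top choice and cannot block. The remaining pairs fall quickly: $\{a^j,u^j\}$ cannot block as each $a^j$ ($j\neq i$) is matched to a cycle neighbour ranked above $u^j$ (and $a^i$ is matched to $u^i$); $\{a^j,c^j\}$ cannot block as~(ii) puts $c^j$ at its first choice $d^j$; $\{a^j,d^j\}$ cannot block as $d^j$ is last in $a^j$'s list; and $\{c^j,d^j\}\in M$ by~(ii).

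The step I expect to demand the most care is the parity argument in part~(1): I must make both implications airtight, especially the subcase where $a^{j+1}$ is unmatched (which itself would make $e_j$ blocking), so that the forced alternation around an odd cycle is a genuine contradiction. The analogous subtle point in part~(2) is explaining precisely why the single wrap-around edge $\{a^{i-1},a^i\}$ escapes the ``endpoint at its top choice'' argument --- namely because $a^i$ is matched to $u^i$ rather than to its forward neighbour --- which is exactly what makes it the unique blocker.
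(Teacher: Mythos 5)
Your proof is correct, and it matches the paper's proof in everything except the existence argument in part~(1), where you take a genuinely different route. The paper argues directly: since $|A|=2n'+1$ is odd, some agent~$a^i$ must satisfy $M(a^i)\notin A$, and then $\{a^{i-1},a^i\}$ blocks immediately because $a^{i-1}$ ranks $a^i$ first while $a^i$ ranks $a^{i-1}$ above everything outside~$A$. You instead assume no cycle edge blocks, derive the alternation $e_j\in M \Leftrightarrow e_{j+1}\notin M$, and obtain a parity contradiction around the odd cycle. Both arguments are sound and both hinge on the oddness of $2n'+1$, but the paper's version is shorter and, importantly, hands you the fact $M(a^i)\notin A$ for free, which is exactly what the ``moreover'' part needs; your version must (and correctly does) recover this fact afterwards from the blockingness of $\{a^{i-1},a^i\}$ itself, which is a necessary bridging step in your arrangement. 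Your case analysis for the ``moreover'' part and your part~(2) coincide in substance with the paper's, up to organizing part~(2) by pair type ($\{a^j,u^j\}$, $\{a^j,c^j\}$, $\{a^j,d^j\}$, $\{c^j,d^j\}$, and cycle edges) rather than agent by agent; you also both read the statement's $b^i$ as the typo it is (it should be $d^i$).
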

\fi
\iflong
\begin{proof}
  To show the first statement, observe that $|A|$ is odd, 
  implying that there is an $i\in \{0,1,\ldots,2n'\}$ such that $M(a^i)\notin A$.
  Then, since $a^{i-1}$ ranks $a^i$ in the first position it follows that $\{a^{i-1},a^i\}$ is blocking $M$.
  Now, assume that $M(a^i)\neq u^i$.
  By assumption, it holds that $M(a^i)\notin A\cup \{u^i\}$, implying that $M(a^i)\in \{\bot, c^i,d^i\}$.
  If $M(a^i)=\bot$, then since $d^i$ ranks $a^i$ in the first position, it follows that $\{a^i,d^i\}$ is also blocking~$M$.
  If $M(a^i)=c^i$, then $M(d^i)=\bot$ and $\{d^i,c^i\}$ is blocking $M$ (observe that $c^i$ ranks $d^i$ in the first position).
  If $M(a^i)=d^i$, then $M(c^i)=\bot$ and $\{c^i,a^i\}$ is blocking $M$ (observe that $a^i$ prefers $c^i$ to $d^i$).
  In any case, we find at least two blocking pairs, namely $\{a^i,a^{i-1}\}$ and another that involves two agents from $\{a^i,c^i,d^i\}$.

  It remains to show the last statement. Let $M$ be a matching for $\Pot$ with $M(a^i) = u^i$ for some~$i$ and satisfying Condition~(i) and~(ii).
  First of all, for each $z'\in \{0,1,\ldots, 2n'\}$ we observe that $c^{z'}$ already obtains its most preferred agent,
  and the only agent that $d^{z'}$ prefers to its partner~$c^{z'}$ is $a^{z'}$.
  However, $a^{z'}$ prefers its partner (which is some agent from $\{a^{z'+1}, a^{z'-1}, u^{z'}\}$) to $d^{z'}$.
  Thus, no blocking pair involves any agent from $C\cup D$.

  Now consider an agent $a^{i+2z-1}$ with $z\in \{1,2,\ldots,n'\}$.
  By the assumption on the matching~$M$,
  we know that $a^{i+2z-1}$ already obtains its most preferred agent~$a^{i+2z}$.
  Thus, no blocking pair involves any agent of the form~$a^{i+2z-1}$.
  Consider an agent~$a^{i+2z}$ with $z\in \{1,2,\ldots, n'-1\}$
  and observe that
  agent~$a^{i+2z+1}$ is the only agent that $a^{i+2z}$ prefers to its partner~$M(a^{i+2z})=a^{i+2z-1}$.
  However, as already reasoned, agent~$a^{i+2z+1}$ already obtains its most preferred agent as a partner. 
  Thus, no blocking pair involves an agent of the form~$a^{i+2z}$.
  By the reasoning for the first statement,  $\{a^{i-1},a^i\}$ is the only blocking~pair.
  \end{proof}\fi%

\looseness=-1 Next, we construct verification gadgets that ensure that no two
adjacent vertices are chosen into the independent set solution. A
straightforward idea would be to have a single agent for each vertex 
which prefers to be with its ``neighbor'' rather than with any agent from the selection gadget.
If an agent and its ``neighbor'' are
forced to be matched elsewhere by the selection gadgets, they would
induce a blocking pair, exceeding the cost bound. However, this would
introduce preference lists of unbounded length. We now show how, by
configuring some additional agents in a cyclic fashion, we can reduce
the length of a preference list, while maintaining the independent set model.
\iflong
The resulting gadget is illustrated in the right part of \cref{fig:block-gadget}.
\else
The resulting gadget is illustrated in the appendix.
\fi 
Herein, let $\delta$ be a positive integer,
and all additions and subtractions in the superscript are taken modulo~$2\delta+2$.
\begin{construction}\label{gadget:vertex}Consider two disjoint sets~$X\uplus Y$ where $X=\{x^i\mid 0\le i \le 2\delta+1\}$ is a set of $2\delta+2$ agents and $Y=\{y^{i} \mid 1\le i \le \delta\}$ is a set of $\delta$ agents. 
Let $a,b$ be two agents distinct from the agents in $X\cup Y$.
The preference lists of the agents from $X$ are as follows.
\iflong
\begin{tabular}{lllll}
   & \Agent~$x^{0}\colon$ & $x^{1} \succ a \succ x^{2\delta+1}$, \\
  $\forall i\in \{1,\ldots, \delta\}\colon$ & \Agent~$x^{2i-1}\colon$ & $x^{2i}  \succ x^{2i-2}$,\\
  $\forall i\in \{1,\ldots, \delta\}\colon$ & \Agent~$x^{2i}\colon$ & $x^{2i+1} \succ y^{i} \succ x^{2i-1}$.\\
   & \Agent~$x^{2\delta+1}\colon$ & $x^{0} \succ b \succ x^{2\delta}$.\\
\end{tabular}
\else
\begin{tabular}{lllll}
   & \Agent~$x^{0}\colon$ & $x^{1} \succ a \succ x^{2\delta+1}$,& \Agent~$x^{2\delta+1}\colon$ & $x^{0} \succ b \succ x^{2\delta}$.\\
  $\forall i\in \{1,\ldots, \delta\}\colon$ & \Agent~$x^{2i-1}\colon$ & $x^{2i}  \succ x^{2i-2}$, & \Agent~$x^{2i}\colon$ & $x^{2i+1} \succ y^{i} \succ x^{2i-1}$.\\
\end{tabular}
\fi
\end{construction}
\looseness=-1 The preferences of the agents $a, b$ and those in $Y$ are intentionally left unspecified and will be defined when we use the gadget later. Regardless of the concrete preferences of agents in $Y \cup \{a, b\}$, 
we claim that the above gadget has two possible matchings such that no blocking pair involves any agent from $X$. The first one is straightforward from the definition of the preference lists: $\{\{x^{2i}, x^{2i + 1}\} \mid i \in \{0, 1, \ldots, \delta\}\}$. 
\ifshort
The second one matches $x^0$ to~$a$, $x^{2\delta+1}$ to~$b$, while keeping the remaining agents matched in some stable way.
\else
The second one is stated in the following~lemma.
\ifshort\begin{lemma}[\appsymb]\else
\begin{lemma}\fi
  \label[lemma]{lem:vertex}
 Let $\Pot$ be a profile with agents~$X\cup Y \cup \{a,b\}$ 
  where the preferences of the agents from $X$
  obey \cref{gadget:vertex}.
  Let $M$ be a matching for $\Pot$ such that $M(x^0)=a$ and $M$ does not induce any blocking pair involving an agent from~$X$.
  \iflong
 The following holds.
\else
Then, 
\fi
\ifshort\begin{inparaenum}\else\begin{compactenum}\fi
  \item \iflong For each 
\else
for each
\fi
$z \in \{1,2,\ldots,\delta\}$, it holds that 
  $M(x^{2z-1})=x^{2z}$, and $y^{z}$ prefers $M(y^{z})$ to $x^{2z}$.
  \item $M(x^{2\delta+1})=b$.
\ifshort\end{inparaenum}\else\end{compactenum}\fi  
\end{lemma}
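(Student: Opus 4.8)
The plan is to exploit the rigidity of the cyclic gadget: once $x^0$ is pinned outside the cycle (to $a$), the requirement that no blocking pair touches $X$ propagates a unique matching along the ``spine'' $x^1, x^2, \ldots, x^{2\delta}$, which then forces the matches of the agents $y^z$ and of $x^{2\delta+1}$. Every step uses the same elementary device: a candidate pair $\{p,q\}$ with $p \in X$ that both agents prefer to their current partners would block $M$ and involve an agent of $X$, contradicting the hypothesis.

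First I would establish the spine claim $M(x^{2z-1}) = x^{2z}$ for all $z \in \{1,\ldots,\delta\}$ by induction on $z$. For the base case, $x^1$ accepts only $x^0$ and $x^2$ with $x^2 \succ x^0$; since $M(x^0)=a$ the pair $\{x^0,x^1\}$ is unavailable, so $M(x^1) \in \{x^2, \bot\}$, and $M(x^1)=\bot$ is impossible because then $\{x^0,x^1\}$ blocks ($x^0$ ranks $x^1$ above $a$). Hence $M(x^1)=x^2$. The inductive step is verbatim the same with indices shifted: assuming $M(x^{2z-1})=x^{2z}$, agent $x^{2z+1}$ (whose list is $x^{2z+2} \succ x^{2z}$) cannot take the already-occupied $x^{2z}$, and leaving $x^{2z+1}$ unmatched lets $\{x^{2z},x^{2z+1}\}$ block because $x^{2z}$ ranks $x^{2z+1}$ strictly above its partner $x^{2z-1}$; thus $M(x^{2z+1})=x^{2z+2}$. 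This step runs for $z$ up to $\delta-1$ and yields the claim through $z=\delta$.

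With the spine fixed, both remaining conclusions are one-line consequences of the same device. For the statement on $y^z$, note that $M(x^{2z})=x^{2z-1}$ and that $x^{2z}$ prefers $y^z$ to $x^{2z-1}$; hence if $y^z$ were unmatched or preferred $x^{2z}$ to $M(y^z)$, the pair $\{x^{2z},y^z\}$ would block, which forces $y^z$ to prefer $M(y^z)$ to $x^{2z}$. For $x^{2\delta+1}$ (whose list is $x^0 \succ b \succ x^{2\delta}$), both $x^0$ (matched to $a$) and $x^{2\delta}$ (matched to $x^{2\delta-1}$ by the spine claim at $z=\delta$) are occupied, so $M(x^{2\delta+1}) \in \{b,\bot\}$; and $\bot$ is excluded exactly as in the inductive step, since then $\{x^{2\delta},x^{2\delta+1}\}$ would block. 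Hence $M(x^{2\delta+1})=b$.

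There is no deep obstacle here, as the argument is essentially mechanical. The one point demanding care is the bookkeeping of the modular indices together with the asymmetry built into the even agents' preference lists: the pattern $x^{2i+1} \succ y^i \succ x^{2i-1}$ is exactly what makes each $x^{2i}$ prefer its ``forward'' odd neighbour to its ``backward'' one, and this is the property that powers every blocking-pair argument above and rules out the alternative (shifted) cyclic matching. At each step I would explicitly verify that the candidate blocking pair consists of mutually acceptable agents who both \emph{strictly} prefer each other, since it is only the combination of these two facts that yields the needed contradiction.
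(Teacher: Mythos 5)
Your proof is correct and takes essentially the same approach as the paper's: induction along the spine using a blocking-pair contradiction at each step (base case via $\{x^0,x^1\}$, inductive step via $\{x^{2z},x^{2z+1}\}$), with the claims about $y^z$ and $M(x^{2\delta+1})=b$ then following from the same device. The only cosmetic differences are that the paper folds the $y^z$ statement into the induction hypothesis while you derive it after the spine is fixed, and that you spell out explicitly the exclusion of $M(x^{2\delta+1})=\bot$ which the paper leaves implicit.
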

\fi
\iflong

\begin{proof}
  The first statement can be proved by induction on $z$ with $1\le z \le \delta$.
  For $z=1$, since $M(x^0)=a$, it follows that $M(x^1)=x^2$ as otherwise $\{x^0,x^1\}$ is blocking $M$.
  Furthermore, since $x^2$ prefers $y^1$ to $x^1$ and since no blocking pair involves agent~$x^2$, it follows that $y^1$ prefers $M(y^1)$ to $x^2$.

  Now assume that $M(x^{2\delta-3})=x^{2\delta-2}$ and that $y^{\delta-1}$ prefers $M(y^{\delta-1})$ to $x^{2\delta-2}$.
  By an analogous reasoning as above we deduce that $M(x^{2\delta-1}) = x^{2\delta}$ as otherwise $\{x^{2\delta-1},x^{2\delta-2}\}$ would be blocking $M$.
  Consequently, $y^{\delta}$ must obtain an agent~$M(y^{\delta})$ that it prefers to $x^{2\delta}$.
  
  The second statement follows since $M(x^{2\delta})=x^{2\delta-1}$ and $M(x^0)=a$. 
\end{proof}
\fi
\iflong
Using \cref{lem:selector,lem:vertex}, we can prove our second main result, \cref{thm:SRB}, 
by providing a parameterized reduction from the W[1]-complete \MIS problem~\cite{fellows_parameterized_2009} parameterized by the size of the independent set~(see the proof of \cref{thm:egal-cost-fixed-w[1]-h} for the definition of \IS).
This is a special variant of \IS (also see the definition in the proof of \cref{thm:egal-cost-fixed-w[1]-h}), 
in which given a graph~$G$ on $k$ disjoint vertex subsets~$V_1,V_2,\ldots,V_k$ 
we ask whether there is a size-$k$ multi-colored independent set~$V'$ for $G$,
\emph{i.e.} an independent set which has exactly one vertex from each subset~$V_i$.
\else
Using \cref{gadget:selector,gadget:vertex}, we can prove \iflong our second main result (\cref{thm:SRB}) \else \cref{thm:SRB}.\fi
\fi


\iflong
\begingroup
  \def\thetheorem{\ref{thm:SRB}}
  \begin{theorem}\thmSRB
\end{theorem}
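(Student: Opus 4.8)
The plan is to give a parameterized reduction from \MIS{} parameterized by the number~$k$ of color classes, which is W[1]-hard and, under the \ETH, admits no $f(k)\cdot n^{o(k)}$-time algorithm. Given a graph with color classes~$V_1,\dots,V_k$, I would build one selector gadget (\cref{gadget:selector}) per class and one vertex gadget (\cref{gadget:vertex}) per vertex, and set the target number of blocking pairs to $\bp=k$. The guiding idea is that each selector gadget is \emph{forced} to contribute exactly one blocking pair by the first part of \cref{lem:selector}, so a budget of $k$ can only be met when every selector is in the ``clean'' configuration of its second part: one agent~$a^i$ matched to its partner~$u^i$ (encoding ``pick the $i$-th vertex of this class'') and all other $A$-agents matched along the cycle. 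Thus a within-budget solution already enforces ``exactly one vertex chosen per class'', and all remaining work is to make this choice an independent set.

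To connect choices to independence, I would route each selector's bridge agents~$u^i$ into the vertex gadget of the corresponding vertex, fixing the so-far-unspecified preferences of the $u$-, $a$-, $b$-, and $y$-agents so that every vertex gadget sits in one of the two blocking-pair-free states identified for \cref{gadget:vertex}: an ``unselected'' state (the matching~$\{x^{2i},x^{2i+1}\}$, where each port~$y^z$ is effortlessly non-blocking) and a ``selected'' state ($M(x^0)=a$, $M(x^{2\delta+1})=b$, governed by \cref{lem:vertex}), in which every port~$y^z$ must be matched to a partner it prefers to~$x^{2z}$, on pain of creating the blocking pair~$\{x^{2z},y^z\}$. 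Being selected in the selector releases the gadget into the active state, so each incident edge's port becomes ``demanding''. Using one port per incident edge keeps every list short; the longest list in the construction is that of the $a^i$-agents of \cref{gadget:selector}, which has length exactly five, giving the claimed bound.

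For each edge~$\{v,w\}$ I would link the port of~$v$ reserved for this edge to the port of~$w$ reserved for it through a small consistency agent (or pair of agents), tuning preferences so that both ports can be satisfied simultaneously iff \emph{at most one} of $v,w$ is selected: if only one endpoint is selected, its demanding port is absorbed by the consistency agent while the other (passive) port stays harmless; if both endpoints are selected, the consistency agent cannot serve both demands, and one port is left with its $x$-neighbor, producing an extra blocking pair beyond~$k$. The forward direction is then immediate: from a multicolored independent set I put each selector in its clean configuration, each chosen vertex's gadget into the selected state and every other gadget into the unselected state, satisfy every demanding port across its edge (possible precisely because no edge joins two chosen vertices), and conclude via \cref{lem:selector,lem:vertex} that the only blocking pairs are the $k$ selector pairs. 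The backward direction uses the budget to force exactly one chosen vertex per class (as above) and then argues that a surviving edge between two chosen vertices would, by the consistency agent's design, yield a $(k+1)$-st blocking pair.

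The main obstacle is this third step: designing the consistency agents and the unspecified preferences so that the blocking-pair count is \emph{exactly} right in every mixed configuration. The delicate point is avoiding ``leaks''---spurious blocking pairs when only one endpoint of an edge is selected (a free passive port must not be able to poach the consistency agent away from a genuinely demanding port)---while still guaranteeing that two demanding ports truly cannot both be served. Since these interactions must hold simultaneously across all gadgets, the correctness proof will require a careful global case analysis of which agent each selector- and vertex-gadget agent is matched to, leaning on \cref{lem:selector,lem:vertex} to localize every blocking pair to a single gadget. Once the budget $\bp=k$ is shown to be achievable iff a multicolored independent set exists, W[1]-hardness for~$\bp$ and the $f(\bp)\cdot n^{o(\bp)}$ lower bound follow directly from the corresponding bounds for \MIS, because $\bp=k$ and the construction has polynomial size.
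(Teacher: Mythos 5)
Your high-level architecture---reduce from \MIS, use \cref{gadget:selector} to force one blocking pair per color class and \cref{gadget:vertex} with per-edge ports to check independence---is the same as the paper's, but two components of your sketch fail as described, and fixing them is precisely where the paper's construction differs from yours. First, one selector gadget per class with budget $\bp=k$ cannot work with \cref{gadget:vertex} as given: when a vertex is selected, $x^0$ leaves the (even) cycle to be matched with $a$, and the cascade of \cref{lem:vertex} then forces $x^{2\delta+1}$ to be matched to a \emph{second} external absorber $b$; moreover, in the \emph{unselected} state $x^{2\delta+1}$ is matched to its worst choice $x^{2\delta}$, so $b$ must be content elsewhere or $\{x^{2\delta+1},b\}$ blocks. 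You invoke the $b$-agents but never say where they come from; a free-standing $b$ either blocks in the unselected state (if it ranks $x^{2\delta+1}$ on top) or, if it is protected by a companion it prefers, cannot absorb $x^{2\delta+1}$ without that companion becoming a new blocking partner. The paper resolves this by introducing a \emph{second} selector gadget per color class (on $B_j\cup F_j\cup W_j$), whose distinguished agents are exactly the $b$'s, and accordingly sets the budget to $\bp=2k$, not $k$.

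Second, and more seriously, your per-edge ``consistency agents'' cannot enforce independence. By \cref{lem:vertex}, in a selected gadget every port $x^{2z}$ is \emph{locked} to its worst-choice partner $x^{2z-1}$ (the cascade forces $M(x^{2z-1})=x^{2z}$), so a demanding port can never be ``absorbed'' by an external agent, contrary to your mechanism; the only way to avoid a blocking pair at that port is for the agent sitting in the $y^z$ slot to prefer its own partner to $x^{2z}$. A consistency agent $c$ shared by the two ports of an edge then fails in both directions: if $c$ has a safe partner it ranks above the ports, it never blocks with anything, so two adjacent selected vertices incur no extra cost (backward direction fails); if instead $c$ ranks the ports on top, then $c$ blocks with a demanding port even when only one endpoint is selected, because $c$ cannot actually be matched to that locked port (forward direction fails). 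The paper's fix uses no extra agents at all: the $y^z$ agent of $v$'s gadget \emph{is} the port $u^{i',2z'}_{j'}$ of the adjacent vertex's gadget. Then ``both endpoints selected'' means both ports are locked to their worst choices and mutually prefer each other---a blocking pair beyond the budget---while ``at most one selected'' leaves the other port with its most preferred partner, hence harmless. This cross-identification is not a cosmetic shortcut but the mechanism that makes the counting exact; without it, or a genuinely different vertex gadget, the reduction you outline cannot be completed.
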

\addtocounter{theorem}{-1}
\endgroup
\fi
\ifshort
\begin{proof}[Proof sketch of \cref{thm:SRB}]
\else
\begin{proof}
  \fi
   \looseness=-1 Let $(G=(V_1,V_2,\ldots, V_k,E))$ be a \MIS instance (see appendix for ).
  Without loss of generality, assume that each vertex subset~$V_j$ has exactly $2n'+1$ vertices 
\iflong and has the following form~$V_j=\{v_j^0,v_j^1,\ldots,v_j^{2n'}\}$.   
\else
with the form $V_j=\{v_j^0,v_j^1,\ldots,v_j^{2n'}\}$.   
\fi
  Construct a \SRB instance with the following groups of agents: $U_j, A_j, B_j, C_j, D_j, F_j, W_j$, $j\in \{1,2,\ldots, k\}$,
  where $U_j$ corresponds to the vertex subset~$V_j$.
  Let $\delta_j^i$ \iflong denote \else be \fi the degree of vertex $v_j^i$.
  For each vertex~$v^i_j \in V_j$, construct $2\delta_j^i+2$ agents~$u^{i,0}_j,u_j^{i,1}, \ldots, u^{i,2\delta_j^i+1}_j$ and let $U^i_j=\{u^{i,z}_j \mid 0\le z\le 2\delta_j^i+1\}$.
  Define $U_j=\cup_{0\le i\le 2n'}U^i_j$.
  For each $(Q, q) \in \{(A, a), (B, b), (C, c), (D, d), (F, f), (W, w)\}$ and for each $i\in \{1,2,\ldots,k\}$,
  the set~$Q_j\coloneqq \{q_j^i \mid 0\le i \le 2n'\}$ consists of $2n'+1$ agents.
  The preference lists of the agents in $U^i_j$ obey the verification gadget constructed in \cref{gadget:vertex}.
  Formally, for each $j \in \{1, \ldots, k\}$ and each $i \in \{0, 1, \ldots, 2n'\}$ we introduce a \emph{verification gadget for $v_j^i$} as in \cref{gadget:vertex} where we set $\delta = \delta_j^i$,  $x^z=u^{i,z}_j$, $0\le z\le 2\delta_j^i+1$, $a=a^i_j$, and $b=b^i_j$.
  The agents from $Y$ correspond to the neighbors of $v^i_j$:
  For each neighbor~$v^{i'}_{j'}$ of $v^i_j$ 
  we pick a not-yet-set agent~$y^z$ in the verification gadget for $v^{i}_{j}$
  and a not-yet-set agent~$y^{z'}$ in the verification gadget for $v^{i'}_{j'}$,
  and define $y^z=u^{i',2z'}_{j'}$ and $y^{z'}=u^{i,2z}_{j}$.

  \iflong
  For instance, if vertex~$v^0_1$ and $v^1_2$ are adjacent,
  and the agent~$y^2$ for $v^0_1$ and the agent~$y^1$ for $v^1_2$ are not yet set,   
  then the agent~$y^2$ for $v^0_1$ could be $u^{1,2}_2$
  and the agent~$y^1$ for $v^1_2$ could be $u^{0,4}_1$.
  In this way, the preference list of $u^{0,4}_1$ will have the form $u^{0,5}_1\succ u^{1,2}_2 \succ u^{0,3}_1$.
  The preference list of $u^{1,2}_2$ will have the form $u^{1,3}_2\succ u^{0,4}_1 \succ u^{1,1}_2$.
  \fi

  For each $j \in \{1, \ldots, k\}$, the preference lists of $A_j\cup C_j \cup D_j \cup \{u^{i,0}_j\mid 0\le i \le 2n'\}$ obey \cref{gadget:selector}. 
  Formally, for each $j \in \{1, \ldots, k\}$ we introduce a vertex-selection gadget as in \cref{gadget:selector} and for each $i \in \{0,1,\ldots, 2n'\}$ we set $a^i=a^i_j$, $c^i=c^i_j$, $d^i=d^i_j$,
  and $u^i=u^{i,0}_j$.
  Analogously, 
  \iflong 
  for each $j \in \{1, \ldots, k\}$, the preference lists of $B_j\cup F_j \cup W_j \cup \{u^{i,2\delta_j^i+1}_j\mid 0\le i \le 2n'\}$ obey \cref{gadget:selector}.
  Formally, 
  \fi for each $j \in \{1, \ldots, k\}$ we introduce a vertex-selection gadget~for  $B_j\cup F_j \cup W_j \cup \{u^{i,2\delta_j^i+1}_j\mid 0\le i \le 2n'\}$: For each $i \in \{0,1,\ldots, 2n'\}$ we set $a^i=b^i_j$, $c^i=f^i_j$, $d^i=w^i_j$,
  and $u^i=u^{i,2\delta_i^j+1}_j$.
  To complete the construction, we set the upper bound on the number of blocking pairs as~$\bp=2k$.
  \iflong
  
 We show that $G=(V_1,V_2,\ldots, V_k, E)$ is a yes-instance of \MIS if and only if the constructed profile admits a matching with at most $\bp\coloneqq 2k$ blocking pairs.

For the ``if'' part, assume that $M$ is a matching with at most~$2k$ blocking pairs.
By the first statement of \cref{lem:selector} it follows that
for each $j\in \{1,2,\ldots, k\}$, matching~$M$ induces at least two blocking pairs~$p^1_j$ and $p^2_j$ of the forms~$p^{1}_j=\{a^{i-1}_j,a^{i}_j\}$ and $p^{2}_j = \{b^{i'-1}_j,b^{i'}_j\}$ for some $i,i'\in \{0,1,\cdots,2n'\}$.

By the second part of the first statement of 
\cref{lem:selector}, the agent~$a^i_j$ that is involved in the blocking pair $p^1_j$ must be matched to $u^{i,0}_j$ as otherwise there will be more than $2k$ blocking pairs. 
We claim that $V'=\{v^i_j \mid \{a^i_j, u^{i,0}_j\} \in M\}$ is a size-$k$ multi-colored independent set.
Obviously, $|V'|=k$.

Suppose, for the sake of contradiction, that $V'$ contains two adjacent vertices. 
Let these vertices be $v^{i}_j$ and $v^{i'}_{j'}$ for two distinct $j,j'\in \{1,2,\ldots, k\}$ and some $i,i'\in \{0,1,\ldots,2n'\}$.
By the definition of $V'$, we have that $\{a^i_j, u^{i,0}_j\}, \{a^{i'}_j, u^{i',0}_{j'}\} \in M$.
By the first statement of \cref{lem:vertex}, 
for each $z\in \{1,\ldots, \delta^i_j\}$ it holds that 
$\{u^{i,2z-1}_j, u^{i,2z}_j\}\in M$, and 
for each $z' \in \{1,\ldots, \delta^{i'}_{j'}\}$ it holds that
$\{u^{i',2z'-1}_{j'}, u^{i',2z'}_{j'}\} \in M$.
However, since $v^{i}_j$ and $v^{i'}_{j'}$ are adjacent, by the preference lists under \cref{gadget:vertex} 
for $U^i_j$ and $U^{i'}_{j'}$, 
there are two agents~$u^{i,2z}_j$ and $u^{i',2z'}_{j'}$ (for some $z \in \{0,1,\ldots, \delta_i^j\} ,z' \in \{0,1,\ldots,\delta_{i'}^{j'}\}$) such that $u^{i,2z}_j$ prefers $u^{i',2z'}_{j'}$ to  its partner~$u^{i,2z-1}_{j}$
and  $u^{i',2z'}_{j'}$ prefers $u^{i,2z}_{j}$ to  its partner~$u^{i',2z'-1}_{j'}$.
This implies that $\{u^{i,2z}_j,u^{i',2z'}_{j'}\}$ is blocking $M$, that is, $M$ induces more than $2k$ blocking pairs---a contradiction. Hence, indeed $V'$ is an independent set.

\newcommand{\BP}{\textsf{BP}}
For the ``only if'' part, assume that $V'\subset V$ is a multi-colored independent set of size~$k$.
We claim that $\BP=\{\{a^{i-1}_j, a^{i}_j\}, \{b^{i-1}_j, b^{i}_j\}\mid v^i_j\in V'\}$ consists of all $2k$ blocking pairs of the matching~$M$ defined as follows.
For each color~$j\in \{1,2,\ldots,k\}$ and for each $i \in \{0,1,\ldots, 2n'\}$ do the following; all additions and subtractions in subscripts of the agents in the vertex-selection gadgets are taken modulo $2n'+1$.
\begin{compactitem}
\item Set $M(c^i_j)=d^i_j$. If $v^i_j\in V'$ (\emph{i.e.}\ the vertex belongs to the multi-colored independent set), 
then set $M(u^{i,0}_j)=a^i_j$ and $M(u^{i,2\delta_j^i+1}_j)=b^i_j$, 
for each $z\in \{1,2,\ldots,n'\}$ set $M(a^{i+2z-1}_j)=a^{i+2z}_j$ and $M(b^{i+2z-1}_j)= b^{i+2z}_j$.
\end{compactitem}
Note that this in particular defines matchings for all vertex-selection gadgets. 
\iflong By
the third statement in \cref{lem:selector}, the only blocking pairs
introduced so far are for each $v^i_j\in V'$ the
pairs~$\{a^{i-1}_j, a^i_j\}$ and~$\{b^{i-1}_j, b^i_j\}$.
\fi
\begin{compactitem}
\item If $v^i_j \notin V'$, then for each
  $z \in \{0, 1, \ldots, \delta_j^i\}$ set
  $M(u^{i,2z}_j) = u^{i,2z + 1}_j$. 

\item Finally, for each $v^i_j\in V'$ and each
  $z\in \{1,2,\ldots,\delta_j^i\}$ set $M(u^{i,2z-1}_j)=u^{i,2z}_j$.
\end{compactitem}

\noindent We claim that $\BP$ as defined above consists of all blocking pairs of~$M$.
Towards a contradiction, suppose that $M$ induces a blocking pair~$p\notin \BP$.
As mentioned, by the third statement of \cref{lem:selector} and the
construction of our matching~$M$, besides the blocking pairs in $\BP$, 
the only agents that could form a blocking pair are from verification gadget(s).
We distinguish two cases.
In the first case the two agents in $p$ are in the verification gadget for one and the same vertex~$v^i_j\in V_j$ for some $j \in \{1,2,\ldots, k\}$ and $i\in \{0,1,\ldots, 2n'\}$.
Then, $v^i_j\in V'$ because for each two agents that correspond to the same vertex~$v'\in V\setminus V'$ (which is not in the independent set solution~$V'$)
and are acceptable to each other, one of them already obtains its most preferred agent. Assume that $p=\{u^{i,z}_j, u_j^{i,z'}\}$ for some independent set vertex~$v^i_j\in V'$ such that $u^{i,z}_j$ and $u_j^{i,z'}$ are acceptable to each other but not matched together.
If $z, z'\notin \{0, 2\delta^i_j, 2\delta^i_j+1\}$, 
then $p$ cannot form a blocking pair as one of the agents in $p$ already obtains its most preferred agent.
This implies that $\{z,z'\}=\{0,2\delta^{i}_j+1\}$ or $\{z,z'\}=\{2\delta^i_j, 2\delta^i_j+1\}$
because agents~$u^{i,0}_j$ and $u^{i,2\delta^i_j}_j$ do not find each other acceptable.
However, $u^{i,0}_j$ prefers its partner~$a^i_j$ to $u^{i,2\delta^{i}_j+1}_j$
and  $u^{i,2\delta^i_j+1}_j$ prefers its partner~$b^i_j$ to $u^{i,2\delta^{i}_j}_j$.
Thus, we deduce that no blocking pair involves two agents that correspond to the same vertex.

It remains to consider the case when the two agents in $p$ are in verification gadgets of two different vertices~$v^i_j, v^{i'}_{j'}\in V$ with $v^i_j\neq v^{i'}_{j'}$.
Since only agents that correspond to two adjacent vertices could find each other acceptable,
it follows that $v^i_j$ and $v^{i'}_{j'}$ are adjacent.
By the preference lists of the verification gadgets for $v^i_j$ and $v^{i'}_{j'}$,
it also follows that $p=\{u^{i,2z}_j, u^{i',2z'}_{j'}\}$ for some $z,z'\in\{1,2,\ldots,k\}$.
Since $p$ is blocking~$M$ it follows that $M(u^{i,2z}_j)=u^{i,2z-1}_j$ and $M(u^{i,2z}_j)=u^{i,2z-1}_j$.
Moreover, by the construction of~$M$ it must hold that 
$\{u^{i,0}_j, a^{i}_j\}, \{u^{i',0}_{j'}, a^{i'}_{j'}\}\in M$ for some $j, j'\in \{1,2,\ldots, k\}$,
meaning that $v^i_j, v^{i'}_{j'}\in V'$---a contradiction to $V'$ being an independent set.

Assuming \ETH, if our problem would have an $f(\bp)\cdot n^{o(\bp)}$-time algorithm,
then by our reduction, \MIS would also admit an $g(k)\cdot n^{o(k)}$-time algorithm, where $f$ and $g$ are two computable functions---a contradiction to~\cite[Corollary 14.23]{CyFoKoLoMaPiPiSa2015}.
\else
The correctness proof is deferred to the appendix.
\fi
\end{proof}

\iflong{}
The reduction given in the proof of \cref{thm:SRB} shows  
\else{} \noindent The reduction given in the proof of \cref{thm:SRB} shows  
\fi
\iflong{}that the lower-bound on the number~$\bp$ of blocking pairs given by \citet[Lemma~4]{AbBiMa2005} is tight. 
\else{}that the lower-bound on $\bp$ given by \citet[Lemma~4]{AbBiMa2005} is tight. 
\fi
The reduction also answers an open question
by \citet[Chapter~4.6.5]{Manlove2013} pertaining to the complexity of the
following related problem. In \SRA, we are given a preference profile and an integer~$\ba$, 
and we want to know whether there is a matching with at most~$\ba$ \emph{blocking} agents, that is, agents involved in blocking pairs.

\iflong
First of all, we observe the following for the vertex-selection gadget given in \cref{gadget:selector}.
\begin{lemma}
  \label[lemma]{lem:selector-ba}
  Let $\Pot$ be a preference profile with agents~$A\uplus U\uplus C \uplus D$ 
  where the preferences of the agents in $A\cup C \cup D$
  obey \cref{gadget:selector}.
  Let $M$ be a matching for $\Pot$. The following~holds.
\begin{compactenum}
  \item $M$ induces at least two blocking agents~$a^{i-1}, a^{i}$ for some $i\in \{0,1,\ldots,2n'\}$.
  \item  If for each $i\in \{0,1,\ldots, 2n'\}$ it holds that $M(a^i) \neq u^i$,
  then $M$ induces at least three blocking agents.
  \item Assume that there is an agent~$a^i$, $i\in \{0,1,\ldots, 2n'\}$, with $M(a^i)=u^i$.
  If (i) for each~$z \in \{1,\ldots, n'\}$ it holds that $M(a^{i+2z-1})=a^{i+2z}$,
    and (ii) for each $z'\in \{0,1,\ldots,2n'\}$ it holds that $M(c^{z'})=d^{z'}$,
  then agents~$a^{i-1}$ and $a^{i}$ are the only two blocking agents from $A\cup C \cup D$.
\end{compactenum}
\end{lemma}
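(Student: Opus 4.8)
The plan is to derive all three statements as blocking-\emph{agent} reformulations of facts already proved for \cref{lem:selector}, exploiting the trivial observation that a blocking pair makes \emph{both} of its endpoints blocking agents, together with the fact that $A$, $C$, $D$ are pairwise disjoint. The parity fact that $|A|=2n'+1$ is odd, which forces some index~$i$ with $M(a^i)\notin A$ and underlies \cref{lem:selector}, is what kicks everything off. For the first statement I would simply note that \cref{lem:selector} guarantees a blocking pair of the form~$\{a^{i-1},a^i\}$ (taking $i$ so that $M(a^i)\notin A$); since a blocking pair turns both endpoints into blocking agents, $a^{i-1}$ and $a^i$ are the two claimed blocking agents.

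For the second statement I would keep the same index~$i$ produced by the first statement, so that $M(a^i)\notin A$, and invoke the ``moreover'' part of the first statement of \cref{lem:selector}. The standing hypothesis $M(a^j)\neq u^j$ for every~$j$ applies in particular to this~$i$, so it yields a second blocking pair~$p'$ with $p'\subseteq\{a^i,c^i,d^i\}$. The one point needing care is distinctness: since $p'$ is a two-element subset of the three-element set $\{a^i,c^i,d^i\}$, it must contain at least one of $c^i,d^i$, and these lie in $C\cup D$, which is disjoint from $A\ni a^{i-1}$. Hence $p'$ contributes a blocking agent distinct from both $a^{i-1}$ and $a^i$, giving at least three blocking agents in total.

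For the third statement I would appeal to the second statement of \cref{lem:selector}: under conditions (i) and (ii) together with $M(a^i)=u^i$, the pair $\{a^{i-1},a^i\}$ is the \emph{only} blocking pair that meets $A\cup C\cup D$. The remaining work is translating ``blocking pairs meeting a set'' into ``blocking agents in the set'': any blocking agent $g\in A\cup C\cup D$ lies in some blocking pair, and that pair meets $A\cup C\cup D$ (at~$g$), so by \cref{lem:selector} it must be $\{a^{i-1},a^i\}$, forcing $g\in\{a^{i-1},a^i\}$. Conversely these two genuinely block: applying condition~(i) with $z=n'$ (recall superscripts are taken modulo~$2n'+1$) gives $M(a^{i-2})=a^{i-1}$, so $a^{i-1}$ is matched to $a^{i-2}$ but ranks $a^i$ first and thus prefers $a^i$, while $a^i$ is matched to $u^i$ but ranks $a^{i-1}$ above $u^i$. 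Hence $a^{i-1}$ and $a^i$ are exactly the blocking agents from $A\cup C\cup D$.

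I expect no serious obstacle here; the argument is essentially bookkeeping layered on top of \cref{lem:selector}. The only two delicate points are the ones flagged above: ensuring the third blocking agent in the second statement is genuinely new (handled by the disjointness of $A$, $C$, and $D$), and being careful in the third statement that ``only blocking pair meeting $A\cup C\cup D$'' translates to a claim about blocking \emph{agents in} $A\cup C\cup D$ only, and does not assert uniqueness of blocking pairs globally—a blocking pair lying entirely outside $A\cup C\cup D$ (e.g.\ among the $U$-agents, whose preferences are unspecified) is neither excluded nor relevant, which is precisely why the statement is scoped to $A\cup C\cup D$.
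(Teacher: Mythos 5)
Your proof is correct and takes essentially the same route as the paper's: all three statements are derived from \cref{lem:selector}, with the second obtained by pairing $\{a^{i-1},a^i\}$ with the extra blocking pair $p'\subseteq\{a^i,c^i,d^i\}$ and observing the two pairs can share at most the agent~$a^i$, and the third by translating ``only blocking pair involving an agent of $A\cup C\cup D$'' into ``only blocking agents in $A\cup C\cup D$''. The only difference is that you make explicit the bookkeeping the paper leaves implicit (the disjointness of $A$, $C$, $D$ guaranteeing a genuinely new third agent, and the modular check via $z=n'$ that $\{a^{i-1},a^i\}$ indeed blocks), which is fine but not a different argument.
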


\begin{proof}
  For the first statement, we observe that the first statement of  \cref{lem:selector} implies that there is an integer~$i\in \{0,1,\ldots, 2n'\}$ with $\{a^{i-1}, a^{i}\}$ being a blocking pair.
  Thus, $a^{i-1}$ and $a^{i}$ are two blocking agents.
  
  Similarly, by the reasoning for the second statement of \cref{lem:selector}, 
  if no agent~$a^i$ is matched with $u^i$, then there are at least two blocking pairs, which share at most one blocking agent. 
  Thus, there are at least three blocking agents.
  
  The last statement follows by the last statement of \cref{lem:selector}.
\end{proof}
\fi

\ifshort\begin{corollary}[\appsymb]\else
\begin{corollary}\fi\label[corollary]{cor:srba-w[1]-h}
  Let $n$ be the number of agents and $\ba$ be the number of blocking agents.
  Even when each input preference list has length at most five and has no ties, \SRA is NP-hard and
  W[1]-hard with respect to $\ba$. 
  \SRA for preferences without ties is solvable in $O(2^{\ba^2}\cdot n^{\ba+2})$~time.
\end{corollary}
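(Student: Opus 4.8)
For the hardness I would reuse the reduction from \MIS constructed in the proof of \cref{thm:SRB} verbatim, changing only the target bound. Recall that for each color~$j$ that construction contains \emph{two} vertex-selection gadgets (one built on $A_j\cup C_j\cup D_j$ and one on $B_j\cup F_j\cup W_j$), so there are $2k$ such gadgets in total, and that \cref{lem:selector-ba} is the blocking-agent analogue of \cref{lem:selector}. I would set the bound on the number of blocking agents to $\ba\coloneqq 4k$. Since \MIS is W[1]-hard with respect to~$k$ and NP-hard, and since $\ba$ is a function of~$k$ only, establishing equivalence of the two instances immediately yields both NP-hardness and W[1]-hardness with respect to~$\ba$; moreover the preference lists inherit length at most five from the construction of \cref{thm:SRB}.

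For the equivalence, in the forward direction I would take exactly the matching~$M$ exhibited in the proof of \cref{thm:SRB}: there its blocking pairs are precisely $\{a^{i-1}_j,a^{i}_j\}$ and $\{b^{i-1}_j,b^{i}_j\}$ over the $k$ chosen vertices~$v^i_j\in V'$, involving exactly $4k$ pairwise distinct agents, so $M$ has $4k$ blocking agents. In the backward direction, suppose $M$ has at most $4k$ blocking agents. By the first statement of \cref{lem:selector-ba}, each of the $2k$ selection gadgets forces at least two gadget-internal (hence pairwise distinct across gadgets) blocking agents, so there are at least $4k$ in total; equality then forces each gadget to contribute \emph{exactly} two, which by the second statement of \cref{lem:selector-ba} is possible only if some $a^i_j$ (resp.\ $b^i_j$) is matched to its corresponding $u$-agent. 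This selects one vertex per color. As in \cref{thm:SRB}, \cref{lem:vertex} couples the two gadgets of a color so that they select the \emph{same} vertex, and it shows that two selected adjacent vertices would create a blocking pair inside the verification gadgets, i.e.\ two further blocking agents lying outside all selection gadgets---contradicting the bound $4k$. Hence the selected vertices form a size-$k$ multi-colored independent set.

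For the algorithm, the plan is a two-level guessing scheme followed by a stable-matching test. First I would guess the set~$B\subseteq V$ of blocking agents with $|B|\le\ba$; there are $O(n^{\ba})$ choices. Since every blocking pair consists of two blocking agents, all blocking pairs lie inside~$B$, so I would next guess the set~$P\subseteq\binom{B}{2}$ of blocking pairs; there are at most $2^{\binom{\ba}{2}}\le 2^{\ba^2}$ choices. For each guess I build the instance~$I'$ obtained from the input by deleting, for every $\{x,y\}\in P$, the pair $\{x,y\}$ from the acceptability graph (which keeps the profile tie-free) and run \citeauthor{Irving1985}'s $O(n^2)$-time algorithm to decide whether $I'$ admits a stable matching~$M'$, accepting (and returning~$M'$) if some guess succeeds and rejecting otherwise. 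The key correctness observation is that deleting only pairs inside~$B$ does not alter the relative order of any two surviving agents in anyone's list, so a pair that is an edge of~$I'$ blocks~$M'$ in the input exactly when it blocks~$M'$ in~$I'$; stability of~$M'$ in~$I'$ therefore confines all blocking pairs of~$M'$ in the input to~$P\subseteq\binom{B}{2}$, hence its blocking agents to~$B$, giving at most~$\ba$ of them. Conversely, if the input has a matching with blocking-agent set~$B^\star$ and blocking-pair set~$P^\star$, then for the guess $(B^\star,P^\star)$ that matching is stable in~$I'$, so the subroutine succeeds. The running time is $O(n^{\ba})\cdot 2^{\ba^2}\cdot O(n^2)=O(2^{\ba^2}\cdot n^{\ba+2})$.

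The hardness direction is essentially mechanical once \cref{lem:selector-ba} and \cref{lem:vertex} are in hand; the only points needing care are verifying that the two selection gadgets of each color are forced to agree on the selected vertex, and that an edge between two selected adjacent vertices genuinely produces \emph{new} blocking agents rather than re-using agents already counted. For the algorithm, the main conceptual step is the reduction of a guessed configuration to a plain stability test: after fixing~$P$ I do not need to certify that the pairs in~$P$ actually block, only that no pair \emph{outside}~$P$ blocks, which is exactly stability of~$I'$. Recognizing this is what keeps the second guessing level at $2^{\ba^2}$ and the per-guess work polynomial.
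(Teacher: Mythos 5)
Your proposal is correct and structurally the same as the paper's proof: the hardness part reuses the \cref{thm:SRB} reduction together with \cref{lem:selector-ba}, and the algorithm is the same two-level guessing scheme (first the blocking agents, then the blocking pairs among them) with an $O(n^2)$ test per guess. Two deviations are worth recording. First, you set the budget to $\ba=4k$, while the paper sets $\ba=2k$ and asserts that the matching built from a size-$k$ independent set has ``exactly $2k=\ba$ blocking agents.'' Your count is the correct one: that matching's blocking pairs are $\{a^{i-1}_j,a^i_j\}$ and $\{b^{i-1}_j,b^i_j\}$ over the $k$ chosen vertices, which involve $4k$ pairwise distinct agents, and by \cref{lem:selector-ba} every matching of the constructed profile has at least two blocking agents in each of the $2k$ pairwise disjoint selection gadgets, hence at least $4k$ in total---so with the paper's budget of $2k$ the constructed \SRA instance would always be a no-instance. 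Your version thus repairs an arithmetic slip in the paper; the rest of your argument (tightness forcing a $u$-match in every gadget, adjacency of two selected vertices creating blocking agents outside all selection gadgets) is the paper's argument, and your appeal to the coupling statement of \cref{lem:vertex} is not even needed, since the independent set can be read off the $A$-side gadgets alone, as the paper does. Second, where the paper invokes per guess the subprocedure of \citet{AbBiMa2005} that decides whether some matching has \emph{exactly} the guessed set of blocking pairs, you instead delete the guessed pairs from the acceptability graph and run \citet{Irving1985}'s algorithm, observing that for an ``at most $\ba$'' bound one only needs stability \emph{outside} the guessed set; your preservation argument (deletion does not reorder surviving list entries, so a surviving pair blocks in the reduced instance if and only if it blocks in the original) is exactly what makes this sound, it is self-contained, and it yields the same $O(2^{\ba^2}\cdot n^{\ba+2})$ running time.
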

\iflong

\begin{proof}
  To show the hardness results, we use the same reduction as in the proof of \cref{thm:SRB} and we set the number of blocking agents allowed to be $\ba=2k$.
  As in the proof of \cref{thm:SRB}, now using \cref{lem:selector-ba}, if
  the \SRA instance is a yes-instance, then the \IS\ instance is also a yes-instance. 
  In the other direction, observe that the matching
  constructed from a size-$k$ independent set solution in
  \cref{thm:SRB} induces exactly $2k = \ba$ blocking agents.

  In the remainder of the proof, we provide an algorithm for our problem and show that the running time is $O(2^{\ba^2}\cdot n^{\ba+2})$.
  This algorithm uses as a subprocedure the algorithm provided by \citet{AbBiMa2005} that, 
  given a set~$B$ of pairs of agents, checks in $O(n^2)$ time 
  whether there is matching~$N$ such that each pair in $B$ is blocking $N$ and no other pair is blocking $N$.
  
  Assume that there is a matching~$M$ with exactly $\ba$ blocking agents.
  First, we guess the set of these $\ba$ agents that form some blocking pairs of $M$ and denote this set as $A^*=\{a_1,a_2, \dots, a_{\ba}\}$.
  Now, observe that any blocking pair must only involve agents from~$A^*$.
  Thus, we guess the blocking pairs of $M$ that only involve agents from~$A^*$.
  Finally, we call the algorithm by \citet{AbBiMa2005} to check whether the guess is correct.

  It is straightforward to see that there is a matching~$M$ with exactly $\ba$ blocking agents if and only if one of the guesses gives a subset of pairs of agents that are exactly the blocking pairs of $M$.
  Now observe that there are $O(n^{\ba})$ guesses of the subset of blocking agents with size $\ba$ and for each of these subset of blocking agents there are $O(2^{\ba^2})$ guesses for a corresponding subset of blocking pairs.
  Since for each guessed subset of blocking pair, we invoke the algorithm by \citet{AbBiMa2005} which runs in $O(n^2)$ time,
  our algorithm has a running time of $O(2^{\ba^2}\cdot n^{\ba+2})$.
\end{proof}
\fi

\iflong
\section{Conclusion and outlook}\label[section]{sec:conclusion}

We showed that \ESR and \SRB, though both NP-hard in the classical complexity point of view, 
behave completely differently in a parameterized perspective.
In particular, we showed that \ESR is fixed-parameter tractable with respect to the \egalcostn~$\egalcost$
while \SRB with bounded preference length is W[1]-hard with respect to the number~$\bp$ of blocking pairs.

Our work leads to some open questions.
First, we showed that for preferences without ties, \ESR admits a size-$O(\egalcost^2)$ kernel.
It would thus be interesting to see whether \ESR also admits a polynomial kernel when ties are present.
Second, it would be interesting to see whether the running time in \cref{thm:ESR}
is~tight.
\fi


\end{document}
